\documentclass[a4paper,english,11pt]{scrartcl}
\usepackage[left=2.5cm, right=2.5cm, top=2.5cm, bottom=2.5cm]{geometry} 
\usepackage{tikz}
\usepackage[utf8]{inputenc}
\usepackage[T1]{fontenc}

\usepackage[title]{appendix}
\usepackage[affil-it]{authblk}

\usepackage{amsmath}
\usepackage{amssymb}
\usepackage{amsthm}
\usepackage{pst-node}
\usepackage{url,graphics,epsfig}
\usepackage{tabularx,subfigure}
\usepackage{booktabs,framed}
\usepackage[shortlabels]{enumitem}	


\usepackage{xspace,url}  

\usepackage[bookmarks=false,colorlinks=true, linkcolor=blue, urlcolor=blue, citecolor=blue, breaklinks=true]{hyperref}
\usepackage{cleveref}

\newcommand{\R}{\mathbb{R}}

\newcommand{\JWvR}{$\text{JWVR}$}
\newcommand{\klammer}[1]{(   #1  )}
\newcommand{\BR}{\text{BR}}
\newcommand{\dt}{\text{d}t}
\newcommand{\eps}{\varepsilon}
\newcommand{\norm}[1]{||#1||}
\DeclareMathOperator{\conv}{conv}
\newcommand{\PNE}{\text{PNE}}
\newcommand{\OPT}{\text{OPT}}
\newcommand{\PoS}{\text{PoS}}
\newcommand{\PoA}{\text{PoA}}
\newcommand{\twelvevdots}{%
  \vbox{\baselineskip1ex\lineskiplimit0pt%
 \hbox{.}\hbox{.} \hbox{.}\hbox{.}\hbox{.}\hbox{.}\hbox{.}\hbox{.}\hbox{.}\hbox{.}\hbox{.}\hbox{.}\hbox{.}\hbox{.}\hbox{.}\hbox{.}\hbox{.}\hbox{.}}}

\theoremstyle{plain}
\newtheorem{theorem}{Theorem}[section]

\newtheorem{lemma}[theorem]{Lemma}

\theoremstyle{definition}

\newtheorem{definition}[theorem]{Definition}

\newtheorem{example}[theorem]{Example}


\title{Capacity and Price Competition \\ in Markets with Congestion Effects}
\author{Tobias Harks and Anja Schedel}
\affil{\small Universit\"at Augsburg, Institut für Mathematik, Germany\\
\href{mailto:tobias.harks@math.uni-augsburg.de}{\{\texttt{tobias.harks,anja.schedel\}@math.uni-augsburg.de}}}
%

\begin{document}

\maketitle

\begin{abstract}
We study oligopolistic competition in service markets where
firms offer a service to customers. The service quality of a firm 
-- from the perspective of a customer -- depends on the level
of congestion and the charged price. A firm can set a price
for the service offered and additionally decides on the service capacity in order to mitigate congestion. The total profit of a firm is  derived from the gained revenue minus the capacity investment cost. 
Firms \emph{simultaneously} set capacities and prices
in order to maximize their profit and customers \emph{subsequently} choose
the services with lowest combined cost (congestion and price). 
For this basic model, Johari, Weintraub and Van Roy~\cite{JohariWR10} 
derived the first existence and uniqueness results of pure Nash
equilibria (PNE) assuming mild conditions on congestion functions.
Their existence proof relies on Kakutani's fixed-point theorem and a key assumption for the theorem to work is that demand for service is \emph{elastic} 
(modeled by a smooth and strictly decreasing inverse demand function).

In this paper, we consider the case of \emph{perfectly inelastic demand}, that is, there is a \emph{fixed} volume of customers requesting service. 
This scenario applies to realistic cases where customers are not willing to drop out of the market, e.g. if prices are regulated by reasonable price caps.
We investigate existence, uniqueness and quality of PNE for models with inelastic demand and price caps. 
We show that for linear congestion cost functions, there exists a PNE. This result requires a completely new proof approach compared to previous approaches, 
since the best response correspondences of firms may be empty, thus standard fixed-point arguments are not directly applicable.  
We show that the game is $C$-secure (see Reny~\cite{Reny99} and McLennan, Monteiro and Tourky~\cite{McLennan11}), 
which leads to the existence of PNE. We furthermore show that the PNE is unique, and that the efficiency compared to a social optimum is unbounded in general. 
  \end{abstract}
\newpage

\section{Introduction}
We consider a model for oligopolistic capacity and price 
competition in service markets exhibiting congestion effects.
There is a set of firms offering a service to customers
and the overall customer utility depends on the congestion
level experienced and the  price charged. 
In addition to setting a price, a firm can invest in
installing capacity to reduce congestion.
Firms compete against
each other in the market by \emph{simultaneously} choosing
capacity and price levels while
customers react \emph{subsequently} choosing
the most attractive firms (in terms of 
congestion and price levels).
This model captures many aspects
of realistic oligopolistic markets
such as road pricing in traffic
networks (see~\cite{XiaoYH07} and~\cite{Zhang2004}) and competition
among WIFI-providers 
or cloud computing platforms (see~\cite{AcOz07} and~\cite{Anselmi:2017}).

In a landmark paper, Johari, Weintraub and Van Roy~\cite{JohariWR10} (\JWvR\ for short)
studied the fundamental question of existence, uniqueness and worst-case 
quality of pure Nash equilibria for this model. They derived several existence and uniqueness results under fairly general assumptions on the functional
form of the congestion functions. 
Specifically, they showed that
for models with \emph{elastic demand} (and some concavity assumptions)
a pure Nash equilibrium exists and is essentially unique.
The existence result is based on the Kakutani 
fixed-point theorem (see~\cite{Kakutani41} and the generalizations of~\cite{Fan52} and \cite{Glicksberg52}) and crucially
exploits the assumption that demand is elastic: 
The elasticity is modeled
by a smooth and strictly decreasing inverse demand function mapping each volume of
customers to the combined cost (latency and price) under which the given volume of
customers is participating in the game. 
This way, the best response correspondences of firms are always non-empty,   
allowing for the application of Kakutani's fixed-point theorem. 

In this article, we focus on the case of \emph{inelastic demand}, 
that is, there is a fixed volume of customers
requesting service. 
 Arguably much less is known
in terms of existence and uniqueness of
equilibria in this case although many works in the transportation science and algorithmic game theory
community (see, e.g., \cite{Bergendorff97congestion,DialpartI,Dial99I,Dial99II,Yang04,Fleischer04} and \cite{Roughgarden02,CCS06, Ackermann07,Fabrikant04}, respectively) assume inelastic demand and this case is usually considered as
 fundamental base case. As we show in this paper, in terms of equilibrium existence,
the case of  inelastic demand is much more complicated  
 compared to the seemingly more general case of elastic demand: 
best responses do not always exist 
 in the ``inelastic case'' 
putting standard fixed-point approaches out of reach. 
Besides this theoretical aspect, the case of inelastic demand is also interesting from a practical point of view. 
Litman~\cite{Litman12} discusses various factors influencing travel demand, and summarizes several transport
elasticity studies. Not surprising, for example the availability and quality of alternatives,
the reason for travel, or the type of traveller, impact the elasticity. If there are few alternatives,
as it is for example the case in Scandinavian freight transport, where rail and
ship-based freight networks are sparse compared to road networks, demand tends to be
inelastic~(\cite{Rich11}). Further examples where demand tends to be less elastic are higher value
travel (as business or commute travel, in particular ``urban peak-period trips''), or travel
of people having higher income, see~\cite{Litman12}.

\subsection{Related Work}
\JWvR\ derived several existence and uniqueness
results for pure Nash equilibria assuming elastic demand
of customers. The elasticity is modeled by a differentiable
strictly decreasing inverse demand function. Depending on the
generality of the allowed congestion cost functions,  further concavity assumptions
on the inverse demand (or demand) function are imposed by \JWvR.
The existence proof is based on Kakutani's fixed-point theorem which crucially
exploits that the best response correspondences of the firms are non-empty. 

For the model with inelastic demand -- that we consider in this paper -- 
\JWvR\ derive an existence result assuming
\emph{homogeneous firms} (that is, all firms have the same congestion cost function)  together with some further assumptions on the congestion costs and the number
of firms. As shown by \JWvR, homogeneity (together with some assumptions on
the congestion costs) implies
that there is only one symmetric
equilibrium candidate profile.  For this specific symmetric strategy profile, they
directly prove stability using concavity arguments. This proof technique is clearly
not applicable in the general non-homogeneous case (with inelastic demand).
As already mentioned, a further  difficulty  stems from the fact
that one cannot directly apply standard fixed-point arguments as even a best response
of a firm might not exist. 

Acemoglu et al.~\cite{AcemogluBO2009} study a capacity and price 
competition game assuming that the capacities
represent ``hard'' capacities bounding the admissible customer volume
for a firm. They observe that pure Nash equilibria
do not exist and subsequently study a two-stage
model, where in the first stage firms determine capacities
and in a second stage they set prices. For this model,
they investigate the existence and worst-case efficiency of equilibria (see
also~\cite{KrepsS83} for earlier work on the two-stage model).
Further models in which capacities or prices are determined \emph{centrally} in order to reduce the total travel time of the resulting Wardrop equilibria (plus investments for the case of capacities) have been considered for some time already,
 see for example~\cite{Marcotte85mp,GairingHK17} for setting capacities, and~\cite{Beckmann56,Yang04} for setting prices. 
In a very recent work, Schmand et al.~\cite{Schmand19} study existence, uniqueness and quality of equilibria in a network investment game in which the firms invest in edges of a series-parallel graph (but do not directly set prices).

\subsection{Our Results and Proof Techniques}
We study in this paper oligopolistic capacity and price
competition assuming that
there is a \emph{fixed} population of customers 
requesting service. This assumption applies
to road pricing games,
where customers usually own a car  
and are not willing to opt out of the game 
given that  prices range in realistically bounded domains.
Consequently, our model allows for the possibility of a~priori upper bounds with respect to
the prices set by the
firms. This situation  appears naturally, if there are legislative
regulations imposing a hard price cap in the market (see Correa et al.~\cite{CorreaGLNS18} and Harks et al.~\cite{HarksSV19}
and further references therein).
As reported in~\cite{CorreaGLNS18}, even different price caps for different firms is current practice in the highway market of Santiago de Chile, where 12 different operators  set tolls on different urban highways, each with a specific price cap.
While firm-specific price caps generalize the model of \JWvR,  we impose, on the other hand, more
restrictive assumptions on the  congestion 
functions, namely that they are linear with respect to the volume of customers
and inverse linear with respect to the installed capacity. 

As our first result,
we completely characterize the
structure of best response correspondences
of firms; including the possibility
of non-existence of a best response (Theorem~\ref{theo:BR1}). 
Our second main result then establishes the
existence of equilibria (Theorem~\ref{theo:existence}). 
For the proof
we use the concept of $C$-security
introduced by McLennan, Monteiro and Tourky~\cite{McLennan11}, 
which in turn resembles ideas of Reny~\cite{Reny99}.
A game is \emph{$C$-secure} at a given
strategy profile, if
each player has a pure strategy guaranteeing
a certain utility value, even if the other players
play some perturbed strategy within a (small enough)
neighbourhood, and furthermore, for each slightly perturbed strategy profile, there is a player whose perturbed strategy can in some sense be strictly separated from her securing strategies. 
Intuitively, the concept of securing strategies means that
those strategies are robust to other players' small deviations.
The result of~\cite{McLennan11} states
that a game with compact, convex strategy sets and bounded profit functions admits an equilibrium, if every non-equilibrium profile is $C$-secure.
It is important to note that the concept of  $C$-security does
not rely on quasi-concavity or continuity of profit functions. 
 With our characterization of best response
 correspondences  at hand, we show that the capacity and price competition game 
 with inelastic demand, linear congestion functions and
 price caps
fulfills the conditions of McLennan
et al.'s result 
and thus  admits pure Nash equilibria. 
As our third main result, we show that the equilibrium is essentially unique (Theorem~\ref{theo:uniqueness}). 
The general proof approach is the same as in \JWvR. However, since
our model includes price caps, we need to adjust their approach in order to work for our
model. In particular, the set of firms having positive capacity needs to be decomposed,
where the decomposition is related to the property whether the price of a firm is equal
to its cap, or strictly smaller. 
We finally study the worst case efficiency of the unique equilibrium compared to a natural benchmark, in which we relax the equilibrium conditions of the firms, but not the equilibrium conditions of the customers. We show that the unique equilibrium might be arbitrarily inefficient (Theorem~\ref{theo:quality}), by presenting a family of instances such that the
quality of the unique equilibrium gets arbitrarily bad.

\section{Model}\label{sec:model}
In a \textit{capacity and price competition game}, there is a set $N=\{1,\dots,n\}$, $n\geq 2$, of firms offering a service
to customers. Customers are represented by the continuum $[0,1]$ (each consumer
is assumed to be infinitesimally small and represented by a number in $[0,1]$) 
and we denote by $P=\{x\in \R_{\geq 0}^n\vert  \sum_{i\in N}x_i=1\}$
the standard simplex of assignments of customers to firms.\footnote{All results hold for arbitrary intervals $[0,d], d\in \R_{\geq 0}$ by a standard scaling argument} 
The effective quality of the service of firm $i$ offered to a customer
depends on two key factors: the level of congestion $\ell_i(x_i,z_i)$ and the price $p_i\geq 0$ charged
by the firm.
The congestion function $\ell_i(x_i,z_i)$ depends on the volume of customers $x_i$
and the service capacity $z_i\geq 0$ installed.
Clearly $\ell_i(x_i,z_i)$ grows with the volume of customers~$x_i$
to be served but decreases with the service
capacity $z_i$. 
If no capacity is installed, i.e. $z_i=0$, we assume infinite congestion, and for the case that $z_i>0$, we assume that congestion depends \emph{linearly} on the volume of customers and
 \emph{inverse-linearly} on the installed capacity, that is,
 \[ \ell_i(x_i,z_i)=\begin{cases} \frac{a_i x_i}{z_i}+b_i,
 & \text{ for $z_i>0$}, \\ \infty, & \text{ for $z_i=0$}, \end{cases}
\]
 where $a_i>0$ and $b_i\geq 0$ are given parameters for $i\in N$.
Note that the case $z_i=0$ can be interpreted as if firm $i$ is just opting out of the market and does not offer the service at all.
Each firm~$i$ additionally decides on a price $p_i \in [0,C_i]$ which is charged for offering
service to its customers, where $C_i >0$ is a given price cap. 
For a capacity vector $z=(z_1,\ldots,z_n)$ with $\sum_{i\in N}{z_i}>0$, i.e. the service is offered by some firms, and a price vector $p=(p_1,\ldots,p_n)$, customers
choose rationally the most attractive service in terms of the \emph{effective costs}, that is, congestion and price experienced. 
This is expressed by the \emph{Wardrop equilibrium} conditions (where we assume that congestion is measured in monetary units): 
\[ c_i(x,z,p):=\ell_i(x_i,z_i)+p_i\leq c_j(x,z,p):= \ell_j(x_j,z_j)+p_j \]
holds for all $i,j\in N$ with $x_i>0$. 
Note that for given capacities $z\neq 0$ and prices $p$, there is exactly one $x\in P$ satisfying the Wardrop equilibrium conditions (see, e.g.,~\cite{CorreaS2011}). Call this flow $x=x(z,p)$ the Wardrop flow \textit{induced by $(z,p)$}. In particular, there exists a constant $K\geq 0$ such that $c_i(x,z,p)=K$ holds for each $i\in N$ with $x_i>0$, and $c_i(x,z,p) \geq K$ holds for each $i\in N$ with $x_i=0$. 
For a Wardrop flow $x$, we call the corresponding constant $K$ the \textit{(routing) cost of~$x$}.
The profit function of a firm $i\in N$ can now be represented as
\[
\Pi_i(z,p)=\begin{cases} p_ix_i(z,p)-\gamma_iz_i, & \text{ for $\sum_{i\in N}{z_i}>0$}, \\ 0, & \text{ else}, \end{cases}
\]
where $\gamma_i>0$ is a given installation cost parameter for firm~$i\in N$.   
We assume that each firm~$i \in N$ seeks to maximize her own profit $\Pi_i$. For each firm~$i\in N$, let $S_i:=\{s_i=(z_i,p_i): 0\leq z_i, 0\leq p_i \leq C_i\}$ be her strategy set.   
A vector $s$ consisting of strategies $s_i=(z_i,p_i) \in S_i$ for all $i\in N$ is called a strategy profile, and $S:=\times_{i \in N}S_i$ denotes the set of strategy profiles. 
Usually, we will write a strategy profile $s \in S$ in the form $s=(z,p)$, where $z$ denotes the vector consisting of all capacities $z_i$ for $i\in N$, and $p$ is the vector of prices $p_i$ for $i\in N$. The profit of firm $i$ for a strategy profile $s=(z,p)$ is then defined as $\Pi_i(s):=\Pi_i(z,p)$. 
Furthermore, we write $x(s):=x(z,p)$ for the Wardrop-flow induced by $s=(z,p)$ and $K(s):=K(z,p)$ for the routing cost of $x(s)$. 
For firm $i$, denote by $s_{-i}=(z_{-i},p_{-i})\in S_{-i}:=\times_{j\in N \setminus \{i\}}{S_j}$ the vector consisting of strategies $s_j=(z_j,p_j)\in S_j$ for all $j\in N\setminus \{i\}$. 
We then write $(s_i,s_{-i})=((z_i,p_i),(z_{-i},p_{-i}))$ for the strategy profile where firm $i$ chooses $s_i=(z_i,p_i)\in S_i$, and the other firms choose $s_{-i}=(z_{-i},p_{-i})\in S_{-i}$.
Moreover, we use the simplified notation
$\Pi_i((s_i,s_{-i}))=\Pi_i(s_i,s_{-i})$ and $x((s_i,s_{-i}))=x(s_i,s_{-i})$.  
A strategy profile $s=(s_i,s_{-i})$ is  a \textit{pure Nash equilibrium (PNE)}, 
if for each firm $i\in N$:
\[
\Pi_i\klammer{s_i,s_{-i}}\geq \Pi_i\klammer{s_i',s_{-i}} \text{ for all } s_i'\in S_i.
\]
For given strategies $s_{-i}\in S_{-i}$ of the other firms, the best response correspondence of firm $i$ is
defined by 
\[ \BR_i(s_{-i}):=\arg\max\{\Pi_i\klammer{s_i,s_{-i}} \vert s_i\in S_i\}.\]
If $s_{-i}$ is clear from the context, we just write $\BR_i$ instead of $\BR_i(s_{-i})$.

We conclude this section with the following fundamental result about the continuity of the profit functions. We will use Theorem~\ref{theo:continuity2}, which completely characterizes the strategy profiles $s$ having the property that all profit functions $\Pi_i$, $i\in N$, are continuous at $s$, several times during the rest of the paper. 
\begin{theorem}\label{theo:continuity2}
Let $s=(z,p)\in S$. Then: The profit function $\Pi_i$ is continuous at $s=(z,p)$ for all $i\in N$ if and only if $z\neq 0$ or $(z,p)=(0,0)$. 
\end{theorem}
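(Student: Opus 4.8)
The plan is to prove both implications by directly analyzing the behaviour of the induced Wardrop flow as the strategy profile varies, and the starting point is an explicit description of that flow in terms of its routing cost $K$. For any profile $(z,p)$ with $z\neq 0$, the Wardrop conditions give, for every firm $i$ with $z_i>0$, that $x_i=\frac{z_i}{a_i}\klammer{K-b_i-p_i}$ whenever $K>b_i+p_i$ and $x_i=0$ otherwise; that is, $x_i(z,p)=\frac{z_i}{a_i}\max\{K-b_i-p_i,0\}$, while $x_i=0$ for every firm with $z_i=0$. Since the latter formula also evaluates to $0$ when $z_i=0$, the routing cost $K=K(z,p)$ is characterized as the unique solution of $F(K;z,p)=1$, where $F(K;z,p):=\sum_{i\in N}\frac{z_i}{a_i}\max\{K-b_i-p_i,0\}$ is jointly continuous in $(K,z,p)$ and, for fixed $z\neq 0$, continuous and strictly increasing in $K$ on the relevant range. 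This representation is what drives the whole argument.

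For the ``if'' direction I would treat the two cases separately. First assume $z\neq 0$, so some firm $j$ has $z_j>0$; then $\sum_i z_i'>0$ on a whole neighbourhood of $(z,p)$, so the first branch of $\Pi_i$ applies throughout and it suffices to show that each $x_i$ is continuous there. Given $(z^{(m)},p^{(m)})\to(z,p)$, I would first bound the routing costs $K^{(m)}$: they are nonnegative, and using $z_j^{(m)}\ge z_j/2$ eventually together with $x_j^{(m)}\le 1$ and $p_j^{(m)}\le C_j$ one bounds the cost of firm $j$, hence $K^{(m)}$, uniformly from above (distinguishing whether $j$ is used, where $K^{(m)}=c_j$, or not, where $K^{(m)}\le b_j+p_j^{(m)}$). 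Any convergent subsequence $K^{(m_k)}\to K^*$ then satisfies $F(K^*;z,p)=1$ by joint continuity of $F$, so $K^*=K(z,p)$ by uniqueness; as the sequence is bounded this forces $K^{(m)}\to K(z,p)$. Plugging into the flow formula and using $z_i^{(m)}\to z_i$ (including $z_i=0$, which gives $x_i=0$) yields $x_i(z^{(m)},p^{(m)})\to x_i(z,p)$, and therefore $\Pi_i$ is continuous at $(z,p)$.

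For the remaining ``if'' case $(z,p)=(0,0)$, continuity is immediate from boundedness of the flow: along any sequence $(z^{(m)},p^{(m)})\to(0,0)$ one has $|\Pi_i(z^{(m)},p^{(m)})|\le p_i^{(m)}\cdot 1+\gamma_i z_i^{(m)}\to 0=\Pi_i(0,0)$, regardless of whether $z^{(m)}$ vanishes (where $\Pi_i=0$) or not (where $x_i^{(m)}\in[0,1]$), so no control over $x_i^{(m)}$ itself is needed. Finally, for the ``only if'' direction I argue by contraposition: if $z=0$ but $(z,p)\neq(0,0)$, I pick a firm $j$ with $p_j>0$ and consider the profiles obtained by letting firm $j$ install capacity $\eps\to 0^+$ while all other capacities and all prices stay fixed. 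These profiles converge to $(0,p)$; since $j$ is then the only firm with finite congestion it attracts the whole population, giving $\Pi_j=p_j\cdot 1-\gamma_j\eps\to p_j>0$, whereas $\Pi_j(0,p)=0$. Hence $\Pi_j$ is discontinuous at $(0,p)$.

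The main obstacle is the continuity of the Wardrop flow in the case $z\neq 0$, specifically at boundary profiles where some firms have zero capacity: such firms carry no flow at $(z,p)$ but may carry (necessarily vanishing) flow at nearby profiles, and the set of active firms can change. The representation of the flow through the jointly continuous aggregate $F$ together with the uniqueness of the routing cost $K$ is exactly what is needed to control this, reducing the delicate flow convergence to a one-dimensional statement about $K$.
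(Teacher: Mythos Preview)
Your proof is correct. The cases $(z,p)=(0,0)$ and the ``only if'' direction are handled essentially as in the paper. The interesting difference is the case $z\neq 0$: the paper establishes continuity of the Wardrop flow by first invoking Berge's maximum theorem on the underlying convex program (to get continuity when only the strategies of firms with $z_j>0$ vary), and then appealing to a flow-sensitivity result of Englert et~al.\ to control the effect of firms with $z_j=0$ acquiring small positive capacity. You instead exploit the linear structure directly: the routing cost $K$ is the unique root of the jointly continuous aggregate $F(K;z,p)=\sum_i \tfrac{z_i}{a_i}\max\{K-b_i-p_i,0\}=1$, so a bounded-sequence/subsequential-limit argument forces $K^{(m)}\to K(z,p)$, and continuity of each $x_i$ follows from the explicit formula $x_i=\tfrac{z_i}{a_i}\max\{K-b_i-p_i,0\}$. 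Your route is more elementary and self-contained---it needs neither Berge's theorem nor the external sensitivity lemma---at the price of being specific to the affine congestion model, whereas the paper's argument via Berge would in principle extend to more general $\ell_i$.
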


\begin{proof}
We start with the strategy profile $s=(z,p)=(0,0)$ and show that for each firm~$i$, her profit function $\Pi_i$ is continuous at $s$. 
Let $i\in N$. For $\eps>0$, define $\delta:=\min\{\frac{\eps}{2\gamma_i}, \frac{\eps}{2}\}>0$, and let $s'\in S$ with $\norm{s'}=\norm{s'-s}<\delta$ (where $\norm{\cdot}$ denotes the Euclidean norm). If $z_i'=0$, then $\Pi_i(s')=\Pi_i(s)=0$. Otherwise, 
\[
|\Pi_i(s')-\Pi_i(s)|=|x_i(s')p_i'-\gamma_iz_i'|\leq x_i(s')p_i'+\gamma_iz_i'\leq p_i'+\gamma_iz_i'\leq \norm{s'}+\gamma_i\norm{s'}<\delta(1+\gamma_i)\leq \eps
\]
holds, showing that $\Pi_i$ is continuous at $s$.   

Now consider $s=(z,p)\in S$ with $z\neq 0$. We again need to show that all profit functions $\Pi_i$, $i\in N$, are continuous at $s$. 
Since $z\neq 0$, we get that $N^+:=\{j\in N: z_j>0\}\neq \emptyset$. Furthermore, for $\delta_1>0$ sufficiently small, $N^+\subseteq \{j\in N: z_j'>0\}=:N^+(s')$ holds for all $s'=(z',p')\in S$ with $\norm{s-s'}<\delta_1$. Write $s'=(s'_1,s'_2)\in S$, where $s_1'$ denotes the strategies of the firms in $N^+$, and $s_2'$ denotes the strategies of the firms in $N\setminus N^+$. 
Now let $i\in N$. We need to show that $\Pi_i$ is continuous at $s=(z,p)$. For all $s'\in S$ with $\norm{s-s'}<\delta_1$, firm $i$'s profit is $\Pi_i(s')=x_i(s')p_i'-\gamma_iz_i'$. Thus it is sufficient to show that $x_i$ is continuous at $s=(s_1,s_2)$. 
The idea of the proof is to show that, for a slightly perturbed strategy profile $(s_1',s_2')$, the difference between $x_i(s)$ and $x_i\klammer{s_1',s_2}$, as well as the difference between $x_i\klammer{s_1',s_2}$ and $x_i\klammer{s_1',s_2'}$, is small. 
In the following, let $s'\in S$ with $\norm{s-s'}<\delta_1$. It is well known (\cite{Beckmann56}, compare also~\cite{CorreaS2011}) that $x(s')$ is the unique optimal solution of the following optimization problem $\text{Q}=\text{Q}(s')$: 
\[
(\text{Q}) \quad \min \ \ \sum_{j\in N}{\int_{0}^{x_j}{\left(\ell_j(t,z_j')+p_j'\right)\dt}} \quad \text{ s.t.} \ \ \sum_{j\in N}{x_j}=1, \ x_j\geq 0 \ \forall j\in N.
\]
Furthermore, $x_j(s')=0$ for $j\notin N^+(s')$. Therefore, the values $(x_j(s'))_{j\in N^+(s')}$ are the unique optimal solution of
\[
\quad \min \sum_{j\in N^+(s')}{\int_{0}^{x_j}{\left(\frac{a_j}{z_j'}t+b_j+p_j'\right)\dt}} \quad \text{ s.t.} \sum_{j\in N^+(s')}{x_j}=1, \ x_j\geq 0 \ \forall j\in N^+(s'),
\]
which is equivalent to
\[
\max -\sum_{j\in N^+(s')}{\left(\frac{a_j}{2z_j'}x_j^2+(b_j+p_j')x_j\right)} \quad \text{ s.t.} \sum_{j\in N^+(s')}{x_j}=1, \ x_j\geq 0 \ \forall j\in N^+(s').
\]
By Berge's theorem of the maximum~\cite{berge63}, for all $\eps>0$ there is $0<\delta_2=\delta_2(\eps)<\delta_1$ such that $\norm{x(s)-x\klammer{s_1',s_2}}<\eps$ for all $(s_1',s_2)\in  S$ with $\norm{s-(s_1',s_2)}<\delta_2$. That is, $x$ is continuous at $s$ if we only allow changes in $s_1$, but not in $s_2$. Furthermore, if $q(s')$ denotes the optimal objective function value of $\text{Q}(s')$, and if only changes in $s_1$ are allowed, $q$ is also continuous in $s$, i.e. for all $\eps>0$ there is $0<\delta_3=\delta_3(\eps)<\delta_1$ such that $|q(s)-q\klammer{s_1',s_2}|<\eps$ for all $(s_1',s_2)\in S$ with $\norm{s-(s_1',s_2)}<\delta_3$.
We now distinguish between the two cases that $z_i>0$ or $z_i=0$. 

First consider the case $z_i=0$, that is, $i\notin N^+$, and let $\eps>0$. Note that $x_i(s)=0$, thus we need to find $\delta>0$ such that $|x_i(s)-x_i(s')|=x_i(s')<\eps$ for all $s'\in S$ with $\norm{s-s'}<\delta$. To this end, define 
\[
\delta=\delta(i,\eps):=\begin{cases}
\delta_3(1), & \text{ if } q(s)+1\leq b_i\eps,\\
\min\{\delta_3(1), \frac{a_i \eps^2}{2(q(s)+1-b_i\eps)}\}, & \text{ else,}
\end{cases}
\]
and let $s'\in S$ with $\norm{s-s'}<\delta$. In particular, $|z_i-z_i'|=z_i'<\delta$. Furthermore, $q(s')\leq q\klammer{s_1',s_2}\leq q(s)+1$ holds since $\norm{s-(s_1',s_2)}\leq \norm{s'-s}<\delta\leq \delta_3(1)$. 
If $z_i'=0$, we immediately get $x_i(s')=0<\eps$. Thus assume $z_i'>0$ and assume, by contradiction, that $x_i(s')\geq \eps$. Then, by definition of $\delta$, we get the following contradiction:
\[
q(s)+1\geq q(s')\geq \frac{a_i}{2z_i'}x_i(s')^2+(b_i+p_i')x_i(s')\geq \frac{a_i}{2z_i'}\eps^2+b_i\eps > \frac{a_i}{2\delta}\eps^2+b_i\eps \geq q(s)+1.
\]
Therefore, $x_i(s')< \eps$ holds, showing that $x_i$ is continuous at $s$ if $i\notin N^+$.

Now consider the case $i\in N^+$, i.e. $z_i>0$. For $\eps>0$, we need to find $\delta>0$ such that $|x_i(s)-x_i(s')|<\eps$ for all $s'\in S$ with $\norm{s-s'}<\delta$. To this end, define 
\[
\delta:=\min\{\min\{\delta(j,\frac{\eps}{2n}): j\notin N^+\}, \delta_2(\frac{\eps}{2})\}
\]
and let $s'\in S$ with $\norm{s-s'}<\delta$. In particular, $\norm{s-(s_1',s_2)}<\delta\leq\delta_2(\frac{\eps}{2})$, thus $|x_i(s)-x_i\klammer{s_1',s_2}|<\eps/2$. 
Furthermore, since $\delta\leq\delta(j,\frac{\eps}{2n})$, we get $x_j(s')\leq \frac{\eps}{2n}$ for all $j\notin N^+$. If $x_j(s')=0$ for all $j\notin N^+$, we get $x_i(s')=x_i\klammer{s_1',s_2}$ and thus 
$|x_i(s)-x_i(s')|=|x_i(s)-x_i\klammer{s_1',s_2}|<\eps/2<\eps$, as desired. Otherwise, there is $j\notin N^+$ with $0<x_j(s')\leq \frac{\eps}{2n}$. In particular, $z_j'>0$. We now use a result about the sensitivity of Wardrop flows due to Englert et al.~\cite[Theorem 2]{englert08}. 
They show that if the customers are not able to choose firm~$j$'s service anymore (we say that firm~$j$ is deleted from the customers' game), the resulting change in the Wardrop flow can be bounded by the flow that $j$ received. 
More formally, if $x\in[0,1]^n$ is the Wardrop flow for the game with firms $N$, and $x'\in [0,1]^{n-1}$ is the Wardrop flow if firm $j$ is deleted from the customers' game, then $|x_k-x_k'|<x_j$ for all $k\in N\setminus \{j\}$. 
Obviously, changing firm~$j$'s capacity from $z_j'>0$ to $z_j=0$ 
has the same effect on the Wardrop flow as deleting firm $j$. 
Therefore, if we change, one after another, the capacities of all firms $j\notin N^+$ having $z_j'>0$ to $z_j=0$, we get (note that the flow values for $j\notin N^+$ are always upper-bounded by $\eps/(2n)$ due to our choice of $\delta$ and the analysis of the case $z_i=0$):
\[
|x_i(s')-x_i\klammer{s_1',s_2}|\leq (n-1)\frac{\eps}{2n}<\frac{\eps}{2}.
\] 
Using this, we now get the desired inequality:
\[
|x_i(s)-x_i(s')|\leq |x_i(s')-x_i\klammer{s_1',s_2}|+|x_i(s)-x_i\klammer{s_1',s_2}|<\frac{\eps}{2}+\frac{\eps}{2}=\eps.
\]
Altogether we showed that all profit functions $\Pi_i$ are continuous at $s=(z,p)$ if $z\neq 0$.

To complete the proof, it remains to show that if all profit functions $\Pi_i$, $i \in N$, are continuous at $s=(z,p)$, then $z\neq 0$ or $(z,p)=(0,0)$ holds. We show this by contraposititon, thus assume that $s=(z,p)$ fulfills $z=0$ and $p\neq 0$. We need to show that there is a firm $i$ such that $\Pi_i$ is not continuous at $s$. 
To this end, let $i\in N$ with $p_i>0$. Define the sequence of strategy profiles~$s^n$ by $(z_j^n,p_j^n):=(z_j,p_j)$ for all $j\neq i$, and $(z_i^n,p_i^n):=(\frac{1}{n},p_i)$. Obviously, $s^n \rightarrow s$ for $n \rightarrow \infty$. But for the profits, we get
\[
\Pi_i(s^n)=x_i(s^n)p_i^n-\gamma_iz_i^n=p_i-\frac{\gamma_i}{n} \rightarrow_{n\rightarrow \infty}p_i>0.
\]
Since $\Pi_i(s)=0$, this shows that $\Pi_i$ is not continuous at $s$. 
\end{proof}

\section{Characterization of Best Responses}\label{section:BR}
The aim of this section is to derive a complete characterization of the best-response correspondences of the firms. We will make use of this characterization in all our main results, i.e. existence, uniqueness and quality of PNE. 
Given a firm $i\in N$ and fixed strategies $s_{-i}=(z_{-i},p_{-i})\in S_{-i}$ for the other firms, we characterize the set $\BR_i=\BR_i(s_{-i})$ of best responses of firm $i$ to $s_{-i}$. 
To this end, we distinguish between the two cases that $z_{-i}=0$ (Subsection~\ref{subsec:1}) and $z_{-i}\neq 0$ (Subsection~\ref{subsec:2}). Subsection~\ref{subsec:3} then contains the derived complete characterization. 
In Subsection~\ref{subsec:4}, we discuss how our results about the best responses influence the applicability of Kakutani's fixed point theorem. 
\subsection{The Case $z_{-i}=0$}\label{subsec:1}
In this subsection, assume that the strategies $s_{-i}=(z_{-i},p_{-i})$ of the other firms fulfill $z_{-i}=0$. Under this assumption, firm $i$ does not have a best response to $s_{-i}$: 
\begin{lemma}\label{lemma:BR0}
If $z_{-i}=0$, then $\BR_i(z_{-i},p_{-i})=\emptyset$.
\end{lemma}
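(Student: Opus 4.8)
The plan is to show that firm $i$'s profit can be pushed arbitrarily close to $C_i$ but is never actually attained, so that the set of maximizers is empty. First I would determine the effect of the assumption $z_{-i}=0$ on the induced Wardrop flow. Since every firm $j\neq i$ installs zero capacity, it has infinite congestion, $\ell_j(x_j,0)=\infty$, so no customer is routed to such a firm in a Wardrop equilibrium. Consequently, whenever firm $i$ chooses $z_i>0$, the whole population is served by firm $i$, i.e.\ $x_i(z,p)=1$ (and $x_j(z,p)=0$ for $j\neq i$, which one checks is the unique flow satisfying the Wardrop conditions). If instead $z_i=0$, then the total installed capacity vanishes and $\Pi_i=0$ by definition.

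Next I would write the profit explicitly. For any strategy $s_i=(z_i,p_i)\in S_i$ with $z_i>0$ we get
\[
\Pi_i(s_i,s_{-i})=p_i\cdot x_i(z,p)-\gamma_i z_i=p_i-\gamma_i z_i,
\]
while $\Pi_i(s_i,s_{-i})=0$ whenever $z_i=0$. Since $\gamma_i>0$, $C_i>0$ and $p_i\leq C_i$, every feasible strategy satisfies the strict bound $\Pi_i(s_i,s_{-i})<C_i$: for $z_i>0$ we have $p_i-\gamma_i z_i\leq C_i-\gamma_i z_i<C_i$, and for $z_i=0$ we have $0<C_i$.

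Then I would identify the supremum by exhibiting an approaching sequence. Fixing $p_i=C_i$ and letting $z_i\downarrow 0$ yields $\Pi_i=C_i-\gamma_i z_i\to C_i$, so $\sup_{s_i\in S_i}\Pi_i(s_i,s_{-i})=C_i$. Combining this with the strict inequality $\Pi_i(s_i,s_{-i})<C_i$ valid for every $s_i\in S_i$ shows that the supremum is not attained, hence $\arg\max\{\Pi_i(s_i,s_{-i})\mid s_i\in S_i\}=\emptyset$, which is precisely the claim $\BR_i(z_{-i},p_{-i})=\emptyset$.

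There is no genuine obstacle here; the only subtlety worth handling carefully is the behaviour at the boundary $z_i=0$, where the profit drops discontinuously from values near $C_i$ down to $0$ (this is consistent with \autoref{theo:continuity2}, since a profile with $z=0$ and $p\neq 0$ is exactly a discontinuity point). This discontinuity is what obstructs existence of a best response, so I would treat $z_i=0$ and $z_i>0$ as separate cases rather than passing to a naive limit.
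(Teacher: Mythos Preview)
Your proposal is correct and follows essentially the same argument as the paper's proof: compute the profit as $p_i-\gamma_i z_i$ for $z_i>0$ and $0$ for $z_i=0$, observe that it is strictly bounded above by $C_i$, and exhibit the sequence $(z_i,p_i)=(\varepsilon,C_i)$ to show the supremum equals $C_i$ and is not attained. The additional remarks you give on why $x_i=1$ and on the discontinuity at $z_i=0$ are sound elaborations but do not change the structure of the argument.
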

\begin{proof}
Whenever firm $i$ chooses a strategy $(z_i,p_i)$ with $z_i>0$, then $x_i=1$ holds for the induced Wardrop-flow $x$, thus firm $i$'s profit is $p_i-\gamma_iz_i$. On the other hand, any strategy $(z_i,p_i)$ with $z_i=0$ yields a profit of 0. Thus, firm $i$'s profit depends solely on her own strategy $(z_i,p_i)$, and can be stated as follows:
\[
\Pi_i(z_i,p_i):=
\begin{cases}
p_i-\gamma_iz_i, & \text{ for $z_i>0$,} \\
0, & \text{ for $z_i=0$.}
\end{cases}
\]
Obviously, $\Pi_i(z_i,p_i)<C_i$ holds for each $(z_i,p_i)\in S_i$, i.e. for $z_i\geq 0$ and $0\leq p_i\leq C_i$. On the other hand, by $(z_i,p_i)=(\eps,C_i)$ for $\eps>0$, firm $i$ gets a profit of $C_i-\gamma_i\cdot \eps$ arbitrarily near to $C_i$, that is, $\sup\{\Pi_i(z_i,p_i): (z_i,p_i)\in S_i\}=C_i$. This shows $\BR_i(z_{-i},p_{-i})=\emptyset$. 
\end{proof}

\subsection{The Case $z_{-i}\neq 0$}\label{subsec:2}
In this subsection, assume that the strategies $s_{-i}=(z_{-i},p_{-i})$ of the other firms fulfill $z_{-i}\neq 0$. 
For a strategy $s_i=(z_i,p_i)$ of firm $i$, write $\Pi_i(z_i,p_i):=\Pi_i\klammer{s_i,s_{-i}}$ for firm $i$'s profit function, $x(z_i,p_i):=x\klammer{s_i,s_{-i}}$ for the Wardrop-flow induced by $(s_i,s_{-i})$ and $K(z_i,p_i):=K\klammer{s_i,s_{-i}}$ for the corresponding routing cost.

For $(z_i,p_i)\in S_i$, firm $i$'s profit is 
$\Pi_i(z_i,p_i)=x_i(z_i,p_i)p_i-\gamma_iz_i$.
It is clear that each strategy $(z_i,p_i)$ with $z_i=0$ yields $x_i(z_i,p_i)=0$, and thus $\Pi_i(z_i,p_i)=0$. On the other hand, each strategy $(z_i,p_i)$ with $z_i>\frac{C_i}{\gamma_i}$ yields negative profit since
\[
\Pi_i(z_i,p_i)=x_i(z_i,p_i)p_i -\gamma_iz_i < 1\cdot C_i -\gamma_i \cdot C_i/\gamma_i=0.
\]
Therefore, each best response $(z_i,p_i)$ fulfills $z_i\leq \frac{C_i}{\gamma_i}$ since it yields nonnegative profit.\label{bound_capacities}
Thus, $\BR_i$ can be described as the set of optimal solutions of the following optimization problem \hyperlink{P_i}{$(\text{P}_i)$}:
\[
 \hypertarget{P_i}{(\text{P}_i)} \qquad \max \ \Pi_{i}(z_i,p_i) 
\ \text{ subject to} \ z_i \in [0,\frac{C_i}{\gamma_i}], \ p_i\in[0, C_i].
\]
\hyperlink{P_i}{$(\text{P}_i)$} has an optimal solution, since the theorem of Weierstrass can be applied: The feasible set of~\hyperlink{P_i}{$(\text{P}_i)$} is compact and nonempty, and $\Pi_i$ is continuous at $(z_i,p_i)$ for all feasible $(z_i,p_i)$ (see Theorem~\ref{theo:continuity2}).
Since $\BR_i$ can be described as the set of optimal solutions of \hyperlink{P_i}{$(\text{P}_i)$}, we get $\BR_i\neq \emptyset$. 

Note that \hyperlink{P_i}{$(\text{P}_i)$} is a \textit{bilevel} optimization problem (since $x(z_i,p_i)$ can be described as the optimal solution of a minimization problem~\cite{Beckmann56}), and these problems are known to be notoriously hard to solve. 
The characterization of $\BR_i$ that we derive here has the advantage that it only uses \textit{ordinary} optimization problems, namely the following two (1-dimensional) optimization problems in the variable $K\in \R$, 

\begin{minipage}[t]{0.45\textwidth}
\begin{align*}
 \hypertarget{P_1}{(\text{P}_i^1)}\\ \; \max\; & f_i^1(K):=   \overline x_i(K)\cdot(K-b_i-2\sqrt{a_i\gamma_i})\\[0.3em]
\quad \text{ s.t. } & \quad 2\sqrt{a_i\gamma_i}+b_i \leq K \\
& \quad K \leq \sqrt{a_i\gamma_i}+b_i+C_i\\
& \quad \overline x_i(K)>0,
\end{align*} 
\end{minipage}
\begin{minipage}[t]{0.55\textwidth}
\begin{align*}
 \hypertarget{P_2}{(\text{P}_i^2)}\\ \; \max \; &  f_i^2(K):=  \overline x_i(K)\cdot(C_i-\frac{a_i\gamma_i}{K-b_i-C_i})\\
\quad \text{ s.t. } & \quad \sqrt{a_i\gamma_i}+b_i+C_i < K\\
& \quad \frac{a_i\gamma_i}{C_i}+b_i+C_i\leq K \\
& \quad \overline x_i(K)>0,
\end{align*}
\end{minipage}
where \label{page_K_i_max}
\[
\overline x_i(K):=1-\sum_{j \in N(K)}{\frac{(K-b_j-p_j)z_j}{a_j}} 
\]
with $N^+:=\{j\in N\setminus \{i\}:z_j>0\}$ and $N(K):=\{j\in N^+: b_j+p_j<K\}$. 
The objective functions of  \hyperlink{P_1}{$(\text{P}_i^1)$} and  \hyperlink{P_2}{$(\text{P}_i^2)$} are denoted by $f_i^1$ and $f_i^2$, respectively. 

Note that $\overline x_i: \R \rightarrow \R$ is a continuous function which is equal to 1 for $K\leq \min\{b_j+p_j: j\in N^+\}$, and strictly decreasing for $K\geq \min\{b_j+p_j: j\in N^+\}\geq 0$ with $\lim_{K\rightarrow \infty}\overline x_i(K)=-\infty$. Therefore, there is a unique constant $K_i^{\max}>0$ with the property $\overline x_i(K_i^{\max})=0$. Obviously, $x_i(K)>0$ if and only if $K<K_i^{\max}$. 
Furthermore, the function $\overline x_i$ is closely related to Wardrop-flows, as described in the following lemma. 
\begin{lemma}\label{lemma:BR_x_i} \
\begin{enumerate}[1.]
	\item\label{lemma:BR_x_i_1} If $(z_i,p_i)\in S_i$ with $x_i(z_i,p_i)>0$, then $\overline x_i(K)=x_i(z_i,p_i)$ for $K:=K(z_i,p_i)$. 
	\item\label{lemma:BR_x_i_2} If $K\geq 0$ with $\overline x_i(K)>0$, and $(z_i,p_i)\in S_i$ fulfills $z_i>0$ and $\frac{a_i}{z_i}\overline x_i(K)+b_i+p_i=K$, then $x_i(z_i,p_i)=\overline x_i(K)$ and $K(z_i,p_i)=K$. 
\end{enumerate}
\end{lemma}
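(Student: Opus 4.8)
The plan is to prove both parts directly from the Wardrop equilibrium conditions together with the uniqueness of the induced flow, exploiting the explicit linear form of the congestion functions. The common observation underlying both parts is that for any firm $j$ with $z_j>0$ carrying positive flow, the equilibrium condition $\ell_j(x_j,z_j)+p_j=K$ reads $\frac{a_j x_j}{z_j}+b_j+p_j=K$, which solves to $x_j=\frac{(K-b_j-p_j)z_j}{a_j}$; this is exactly the summand appearing in the definition of $\overline x_i$.

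For part~\ref{lemma:BR_x_i_1}, I would set $x:=x(z_i,p_i)$ and $K:=K(z_i,p_i)$, and first note that $x_i>0$ forces $z_i>0$ (otherwise $\ell_i=\infty$ and the cost $c_i=K$ would be infinite). The heart of the argument is to show that the set of firms in $N\setminus\{i\}$ carrying positive flow coincides exactly with $N(K)$. Indeed, if $x_j>0$ then $z_j>0$, so $j\in N^+$, and the equilibrium condition gives $b_j+p_j<K$, hence $j\in N(K)$; conversely, if $j\in N^+$ with $x_j=0$, the condition $c_j\geq K$ yields $b_j+p_j\geq K$, so $j\notin N(K)$. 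Having identified the support, I substitute $x_j=\frac{(K-b_j-p_j)z_j}{a_j}$ for $j\in N(K)$ into the flow-conservation constraint $\sum_{j\in N}x_j=1$, solve for $x_i$, and read off $x_i=\overline x_i(K)$.

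For part~\ref{lemma:BR_x_i_2}, I would argue by constructing the candidate flow $\tilde x$ and invoking uniqueness of the Wardrop flow. Define $\tilde x_i:=\overline x_i(K)$, $\tilde x_j:=\frac{(K-b_j-p_j)z_j}{a_j}$ for $j\in N(K)$, and $\tilde x_j:=0$ otherwise. One checks $\tilde x\in P$ (nonnegativity is immediate from $\overline x_i(K)>0$ and $b_j+p_j<K$ on $N(K)$; the components sum to one by the definition of $\overline x_i$), and then verifies the Wardrop conditions at routing cost $K$: firm $i$ has cost exactly $K$ by the hypothesis $\frac{a_i}{z_i}\overline x_i(K)+b_i+p_i=K$; each $j\in N(K)$ has cost $K$ by the same cancellation as above; and every remaining firm (either $z_j=0$, giving infinite cost, or $z_j>0$ with $b_j+p_j\geq K$) has cost $\geq K$. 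By uniqueness of the induced flow, $\tilde x=x(z_i,p_i)$, and since $\tilde x$ realizes common cost $K$ on its support we also get $K(z_i,p_i)=K$, which yields both claimed equalities.

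The computations are routine; the only points requiring care are the two bookkeeping steps. First, one must observe that positive flow forces positive capacity, so that zero-capacity firms never enter the support and the congestion term stays well defined. Second, and this is the main obstacle, one must match the support of the Wardrop flow precisely with the index set $N(K)$, including the boundary firms where $b_j+p_j=K$: these carry zero flow and are correctly excluded from $N(K)$ by the strict inequality in its definition. Once the support is pinned down, both directions reduce to the flow-conservation identity and the single-firm cost equation.
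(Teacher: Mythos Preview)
Your proposal is correct and follows essentially the same approach as the paper: for part~\ref{lemma:BR_x_i_1} you identify the support of the Wardrop flow (away from firm~$i$) with $N(K)$ and solve the flow-conservation constraint for $x_i$; for part~\ref{lemma:BR_x_i_2} you construct the candidate flow, verify the Wardrop conditions, and invoke uniqueness. The bookkeeping points you flag (zero-capacity firms never carry flow, and the strict inequality in $N(K)$ correctly excludes the boundary firms with $b_j+p_j=K$) are exactly the ones the paper handles, and your treatment of them is sound.
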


\begin{proof}
We start with statement~\ref{lemma:BR_x_i_1} of the lemma, so let $(z_i,p_i)\in S_i$ with $x_i(z_i,p_i)>0$. By definition of $K(z_i,p_i)=:K$, we get $\ell_j(x_j(z_i,p_i), z_j)+p_j=K$ for all $j\in N$ with $x_j(z_i,p_i)>0$, and $\ell_j(x_j(z_i,p_i), z_j)+p_j\geq K$ for all $j\in N$ with $x_j(z_i,p_i)=0$. 
Since 
\[
\ell_j(x_j(z_i,p_i), z_j)+p_j=\begin{cases}
\frac{a_j}{z_j}x_j(z_i,p_i)+b_j+p_j, & \text{ for $j \in N$ with $z_j>0$,} \\
\infty, & \text{ for $j\in N$ with $z_j=0$,}
\end{cases}
\] 
we get that $\{j \in N: x_j(z_i,p_i)>0\}=\{j \in N^+: b_j+p_j<K\}\cup \{i\}=N(K)\cup \{i\}$. Therefore, for each $j\in N(K)$, we get $\frac{a_j}{z_j}x_j(z_i,p_i)+b_j+p_j=K$, which is equivalent to $x_j(z_i,p_i)=\frac{(K-b_j-p_j)z_j}{a_j}$. Using $\sum_{j\in N}{x_j(z_i,p_i)}=1$ yields
\[
x_i(z_i,p_i)=1-\sum_{j\in N \setminus \{i\}}{x_j(z_i,p_i)}=1-\sum_{j \in N(K)}{\frac{(K-b_j-p_j)z_j}{a_j}}=\overline x_i(K),
\]
as desired. 

Now we turn to statement~\ref{lemma:BR_x_i_2} of the lemma, so let $K\geq 0$ with $\overline x_i(K)>0$ and let $(z_i,p_i)\in S_i$ be a strategy with $z_i>0$ and $\frac{a_i}{z_i}\overline x_i(K)+b_i+p_i=K$. 
Consider the vector $x\in [0,1]^n$ defined by
\[
x_j:=\begin{cases}\overline x_i(K), & j=i, \\ \frac{(K-b_j-p_j)z_j}{a_j}, & j\in N^+ \text{ with } b_j+p_j <K, \\
0, & j\in N^+ \text{ with } b_j+p_j \geq K \text{ or } j\in N\setminus (N^+\cup \{i\}).
 \end{cases}
\]
It is clear that $x_j>0$ holds for all $j\in N^+$ with $b_j+p_j<K$, and $x_i=\overline x_i(K)>0$. Furthermore, the definition of $\overline x_i(K)$ yields $\sum_{j\in N}{x_j}=1$. We now show that $x$ fulfills the Wardrop equilibrium conditions. The uniqueness of the Wardrop-flow then implies $x=x(z_i,p_i)$, and $K(z_i,p_i)=K$ follows from $x_i(z_i,p_i)=\overline x_i(K)>0$ and $K(z_i,p_i)=\frac{a_i}{z_i}x_i(z_i,p_i)+b_i+p_i=\frac{a_i}{z_i}\overline x_i(K)+b_i+p_i=K$, completing the proof. 
For the Wardrop equilibrium conditions, consider the effective costs of the firms:
\[
c_j(x,z,p)
=\begin{cases}
 \frac{a_i}{z_i} \overline x_i(K) +b_i+p_i=K, & j=i, \\
 \frac{a_j}{z_j}\cdot \frac{(K-b_j-p_j)z_j}{a_j}+b_j+p_j=K, & j\in N^+ \text{ with } b_j+p_j<K, \\
 b_j+p_j \geq K, & j\in N^+ \text{ with } b_j+p_j\geq K, \\
\infty >K, & j \in N \setminus (N^+\cup \{i\}).
 \end{cases}
\]
It is now clear that $x$ fulfills the Wardrop equilibrium conditions.
\end{proof}

In the following lemmata, we analyze the connection between \hyperlink{P_1}{$(\text{P}_i^1)$} and  \hyperlink{P_2}{$(\text{P}_i^2)$} and the optimal solutions of \hyperlink{P_i}{$(\text{P}_i)$}. 
\begin{lemma}\label{lemma:BR1} \
\begin{enumerate}[1.]
	\item\label{lemma:BR1_1} If $K$ is feasible for problem \hyperlink{P_1}{$(\text{P}_i^1)$}, the tuple $(z_i,p_i)$ defined by
	\[
	z_i:=\sqrt{a_i/\gamma_i}\cdot \overline x_i(K), \quad p_i:=K-\sqrt{a_i\gamma_i}-b_i
	\]
	is feasible for \hyperlink{P_i}{$(\text{P}_i)$}, and fulfills $z_i>0$ and $\Pi_i(z_i,p_i)=f_i^1(K)$. 
	\item\label{lemma:BR1_2} If $K$ is feasible for problem \hyperlink{P_2}{$(\text{P}_i^2)$}, the tuple $(z_i,p_i)$ defined by
	\[
	z_i:=\frac{a_i \cdot \overline x_i(K)}{K-b_i-C_i}, \quad p_i:=C_i
	\]
	is feasible for \hyperlink{P_i}{$(\text{P}_i)$}, and fulfills $z_i>0$ and $\Pi_i(z_i,p_i)=f_i^2(K)$. 
\end{enumerate}
\end{lemma}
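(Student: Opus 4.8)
The plan is to handle both parts by the same scheme, exploiting that each tuple is engineered so that the strategy $(z_i,p_i)$ induces a Wardrop flow with routing cost exactly $K$; once this is known, Lemma~\ref{lemma:BR_x_i} pins down $x_i$ and the profit reduces to the relevant objective. For each part I would proceed in four steps: first verify $(z_i,p_i)\in S_i$ together with $z_i>0$; second check the defining identity needed to invoke the second statement of Lemma~\ref{lemma:BR_x_i}; third read off $x_i(z_i,p_i)=\overline x_i(K)$ and compute $\Pi_i$; and fourth establish the remaining capacity bound $z_i\leq C_i/\gamma_i$ required for feasibility of $(\text{P}_i)$.

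For the membership step in part~1, feasibility of $K$ gives $\overline x_i(K)>0$, so $z_i=\sqrt{a_i/\gamma_i}\,\overline x_i(K)>0$, while the two linear constraints $2\sqrt{a_i\gamma_i}+b_i\leq K\leq\sqrt{a_i\gamma_i}+b_i+C_i$ translate directly into $0\leq p_i\leq C_i$ through $p_i=K-\sqrt{a_i\gamma_i}-b_i$. In part~2 the strict bound $\sqrt{a_i\gamma_i}+b_i+C_i<K$ forces $K-b_i-C_i>0$, whence $z_i=a_i\overline x_i(K)/(K-b_i-C_i)>0$, and $p_i=C_i\in[0,C_i]$ is immediate; note also $K\geq 0$ in both cases, as required by the lemma. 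Hence $(z_i,p_i)\in S_i$, which is exactly the hypothesis for applying the second statement of Lemma~\ref{lemma:BR_x_i}.

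The identity check and the profit computation are then direct substitutions. Plugging the definitions into $\frac{a_i}{z_i}\overline x_i(K)+b_i+p_i$ and using $a_i/\sqrt{a_i/\gamma_i}=\sqrt{a_i\gamma_i}$ in part~1 (respectively the cancellation $\frac{a_i}{z_i}\overline x_i(K)=K-b_i-C_i$ in part~2) shows this expression equals $K$. The second statement of Lemma~\ref{lemma:BR_x_i} then gives $x_i(z_i,p_i)=\overline x_i(K)$, so $\Pi_i(z_i,p_i)=\overline x_i(K)\,p_i-\gamma_iz_i$. In part~1 the identity $\gamma_i\sqrt{a_i/\gamma_i}=\sqrt{a_i\gamma_i}$ collapses this to $\overline x_i(K)\bigl(K-b_i-2\sqrt{a_i\gamma_i}\bigr)=f_i^1(K)$, and the analogous simplification in part~2 yields $\overline x_i(K)\bigl(C_i-\tfrac{a_i\gamma_i}{K-b_i-C_i}\bigr)=f_i^2(K)$.

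The step I expect to be the main obstacle is the capacity bound $z_i\leq C_i/\gamma_i$, which does not follow from crude estimates such as $\overline x_i(K)\leq 1$. The key observation is that the feasibility constraints of the two problems are tailored precisely so that the objective is nonnegative: in part~1, $K\geq 2\sqrt{a_i\gamma_i}+b_i$ gives $f_i^1(K)\geq 0$, while in part~2, $K\geq\tfrac{a_i\gamma_i}{C_i}+b_i+C_i$ is (using $K-b_i-C_i>0$) equivalent to $C_i-\tfrac{a_i\gamma_i}{K-b_i-C_i}\geq 0$, hence $f_i^2(K)\geq 0$. Since it was already shown on page~\pageref{bound_capacities} that any strategy with $z_i>C_i/\gamma_i$ yields strictly negative profit, and our strategy has profit $f_i^1(K)\geq 0$ (respectively $f_i^2(K)\geq 0$), it must satisfy $z_i\leq C_i/\gamma_i$. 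Combined with the membership step, this shows $(z_i,p_i)$ lies in the feasible region of $(\text{P}_i)$ and finishes both parts.
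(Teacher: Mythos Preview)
Your proof is correct and follows essentially the same approach as the paper: verify $(z_i,p_i)\in S_i$ from the feasibility constraints, check the identity needed to invoke statement~\ref{lemma:BR_x_i_2} of Lemma~\ref{lemma:BR_x_i}, compute the profit, and then deduce the capacity bound $z_i\leq C_i/\gamma_i$ from the nonnegativity of the objective combined with the observation on page~\pageref{bound_capacities} that larger capacities yield strictly negative profit. The only cosmetic difference is that the paper records the slightly sharper bound $p_i\geq\sqrt{a_i\gamma_i}>0$ in part~1, whereas you state $p_i\geq 0$; both suffice for feasibility.
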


\begin{proof}
We start with statement~\ref{lemma:BR1_1} of the lemma, thus assume that $K$ is feasible for problem \hyperlink{P_1}{$(\text{P}_i^1)$}.  
Let $z_i:=\sqrt{a_i/\gamma_i}\cdot \overline x_i(K)$ and $p_i:=K-\sqrt{a_i\gamma_i}-b_i$ as stated in~\ref{lemma:BR1_1}. 
The feasibility of $K$ for \hyperlink{P_1}{$(\text{P}_i^1)$} yields $\overline x_i(K)>0$ and $2\sqrt{a_i\gamma_i}+b_i\leq K\leq \sqrt{a_i\gamma_i}+b_i+C_i$. 
From this we conclude $z_i>0$ and $0<p_i=K-\sqrt{a_i\gamma_i}-b_i\leq C_i$, thus 
$(z_i,p_i)\in S_i$. 
Furthermore, 
\[
\frac{a_i}{z_i}\cdot \overline x_i(K)+b_i+p_i=
\frac{a_i}{\sqrt{a_i/\gamma_i}\cdot \overline x_i(K)}\cdot \overline x_i(K) +b_i+K-\sqrt{a_i\gamma_i}-b_i
=K
\]
holds, thus we get $x_i(z_i,p_i)=\overline x_i(K)$ from statement~\ref{lemma:BR_x_i_2} of Lemma~\ref{lemma:BR_x_i}. 
Using this, we can now show that firm~$i$'s profit for $(z_i,p_i)$ equals the objective function value of $K$ for \hyperlink{P_1}{$(\text{P}_i^1)$}:
\begin{align*} 
\Pi_i(z_i,p_i)&=p_i x_i(z_i,p_i)-\gamma_i z_i 
=(K-\sqrt{a_i\gamma_i}-b_i)\cdot \overline x_i(K)-\gamma_i\cdot \sqrt{a_i/\gamma_i}\cdot \overline x_i(K) \\
&= \overline x_i(K)\cdot(K-2\sqrt{a_i\gamma_i}-b_i)=f_i^1(K)
\end{align*}
Note that the feasibility of $K$ for \hyperlink{P_1}{$(\text{P}_i^1)$} yields $f_i^1(K)\geq 0$. 
It remains to show that $(z_i,p_i)$ is feasible for \hyperlink{P_i}{$(\text{P}_i)$}. 
We already know that $z_i>0$ and $0<p_i\leq C_i$ holds. The remaining inequality $z_i\leq \frac{C_i}{\gamma_i}$ follows from the nonnegativity of $\Pi_i(z_i,p_i)=f_i^1(K)\geq 0$ and the fact that any strategy with $z_i>\frac{C_i}{\gamma_i}$ yields negative profit for firm~$i$. 

Now turn to statement~\ref{lemma:BR1_2} of the lemma. 
Assume that $K$ is feasible for~\hyperlink{P_2}{$(\text{P}_i^2)$}, and let $z_i:=\frac{a_i \cdot \overline x_i(K)}{K-b_i-C_i}$ and $p_i:=C_i$. 
The feasibility of $K$ for \hyperlink{P_2}{$(\text{P}_i^2)$} implies $K>\sqrt{a_i\gamma_i}+b_i+C_i>b_i+C_i$ and $\overline x_i(K)>0$, thus $z_i>0$ holds and this yields $(z_i,p_i)\in S_i$.  Furthermore, 
\[
\frac{a_i}{z_i}\cdot \overline x_i(K)+b_i+p_i=
\frac{a_i}{\frac{a_i \cdot \overline x_i(K)}{K-b_i-C_i}}\cdot \overline x_i(K) +b_i+C_i
=K
\]
holds, thus we get $x_i(z_i,p_i)=\overline x_i(K)$ from statement~\ref{lemma:BR_x_i_2} of Lemma~\ref{lemma:BR_x_i}.
The profit of firm $i$ thus becomes
\begin{align*}
\Pi_i(z_i,p_i)&=p_ix_i(z_i,p_i)-\gamma_iz_i 
=C_i \cdot \overline x_i(K)-\gamma_i \cdot \frac{a_i \overline x_i(K)}{K-b_i-C_i} \\
&= \overline x_i(K) \cdot (C_i-\frac{a_i\gamma_i}{K-b_i-C_i})=f_i^2(K).
\end{align*}
Note that $f_i^2(K)\geq 0$ holds due to the feasibility of $K$ for \hyperlink{P_2}{$(\text{P}_i^2)$}.
 As in the proof of statement~\ref{lemma:BR1_1} of the lemma, this implies $z_i\leq C_i/\gamma_i$, and thus $(z_i,p_i)$ is feasible for \hyperlink{P_i}{$(\text{P}_i)$}, which completes the proof. 
\end{proof}
In particular, Lemma~\ref{lemma:BR1} shows that any optimal solution of \hyperlink{P_1}{$(\text{P}_i^1)$} or \hyperlink{P_2}{$(\text{P}_i^2)$} yields a feasible strategy for \hyperlink{P_i}{$(\text{P}_i)$} with the same objective fuction value. The next lemma shows that for certain optimal solutions of \hyperlink{P_i}{$(\text{P}_i)$}, the converse is also true. 
\begin{lemma}\label{lemma:BR2} 
Let $(z_i^*,p_i^*)$ be an optimal solution of \hyperlink{P_i}{$(\text{P}_i)$} and $K^*:=K(z_i^*,p_i^*)$. 
If $z_i^*>0$, then exactly one of the following two cases holds:
\begin{enumerate}[1.]
	\item\label{lemma:BR2_1} $(z_i^*,p_i^*)=\left(\sqrt{a_i/\gamma_i}\cdot \overline x_i(K^*), K^*-\sqrt{a_i\gamma_i}-b_i\right)$; $K^*$ is optimal for \hyperlink{P_1}{$(\text{P}_i^1)$} with $f_i^1(K^*)=\Pi_i(z_i^*,p_i^*)$.
	\item\label{lemma:BR2_2} $(z_i^*,p_i^*)=\left(\frac{a_i\cdot \overline x_i(K^*)}{K^*-b_i-C_i},C_i\right)$; $K^*$ is optimal for \hyperlink{P_2}{$(\text{P}_i^2)$} with $f_i^2(K^*)=\Pi_i(z_i^*,p_i^*)$.
\end{enumerate}  
\end{lemma}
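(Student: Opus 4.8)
The plan is to collapse the two‑dimensional optimization over $(z_i,p_i)$ to a one‑dimensional problem along the iso‑cost curve on which the induced flow $x_i=\overline x_i(K^*)$ is constant. First I would record the two consequences of optimality that drive everything. Since $z_i^*>0$ and the strategy $z_i=0$ yields profit $0$, optimality gives $\Pi^*:=\Pi_i(z_i^*,p_i^*)\ge 0$; moreover $x_i^*:=x_i(z_i^*,p_i^*)>0$, because otherwise $\Pi^*=-\gamma_i z_i^*<0$. By Lemma~\ref{lemma:BR_x_i}.\ref{lemma:BR_x_i_1} this yields $\overline x_i(K^*)=x_i^*>0$, and the Wardrop condition for firm $i$ reads $\tfrac{a_i\overline x_i(K^*)}{z_i^*}+b_i+p_i^*=K^*$; the same relation forces $p_i^*>0$ (if $p_i^*=0$ then $\Pi^*=-\gamma_i z_i^*<0$).

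The crucial step is the reparametrization. Writing $\overline x:=\overline x_i(K^*)$, Lemma~\ref{lemma:BR_x_i}.\ref{lemma:BR_x_i_2} guarantees that \emph{every} pair $(z,p)$ with $z>0$, $p\in[0,C_i]$ and $\tfrac{a_i\overline x}{z}+b_i+p=K^*$ is feasible for $(\text{P}_i)$ and again induces routing cost $K^*$ and flow $\overline x$. Hence all these points lie in the feasible set of $(\text{P}_i)$, so optimality of $(z_i^*,p_i^*)$ means precisely that it maximizes profit along this curve (intersected with the capacity bound $z\le C_i/\gamma_i$). Eliminating $p=K^*-b_i-\tfrac{a_i\overline x}{z}$, the profit becomes the one‑variable function $g(z)=(K^*-b_i)\overline x-\tfrac{a_i\overline x^2}{z}-\gamma_i z$, which is strictly concave on $z>0$ with unique unconstrained maximizer $z^\circ=\sqrt{a_i/\gamma_i}\,\overline x$, corresponding to price $p^\circ=K^*-b_i-\sqrt{a_i\gamma_i}$. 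By strict concavity the optimum is either $z^\circ$ or a boundary point of the feasible $z$‑interval.

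I would then discard the wrong boundaries using $\Pi^*\ge 0$. The boundary $p=0$ gives profit $-\gamma_i z<0$. The corner where $z=C_i/\gamma_i$ binds strictly before the price cap (i.e.\ $C_i/\gamma_i$ below the $p=C_i$ point $z^\dagger:=\tfrac{a_i\overline x}{K^*-b_i-C_i}$) has $p<C_i$, so its profit is $p\overline x-C_i<C_i\cdot 1-C_i=0$ using $\overline x\le 1$; both contradict $\Pi^*\ge 0$ and are excluded. This leaves exactly two survivors: the interior maximizer $z^\circ$, giving $(z_i^*,p_i^*)=(\sqrt{a_i/\gamma_i}\,\overline x,\,K^*-\sqrt{a_i\gamma_i}-b_i)$ (Case~\ref{lemma:BR2_1}), and the price‑cap boundary $z^\dagger$ with $p_i^*=C_i$ (Case~\ref{lemma:BR2_2}). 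Which one occurs is governed by $p^\circ\le C_i$, equivalently $K^*\le\sqrt{a_i\gamma_i}+b_i+C_i$; this dichotomy also gives mutual exclusivity, the strict inequality in $(\text{P}_i^2)$ placing the threshold value $K^*=\sqrt{a_i\gamma_i}+b_i+C_i$ into Case~\ref{lemma:BR2_1}.

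Finally I would verify feasibility and optimality of $K^*$ for the relevant auxiliary problem. Feasibility: $\overline x_i(K^*)>0$ is known, the defining range inequality ($K^*\le\sqrt{a_i\gamma_i}+b_i+C_i$ resp.\ $K^*>\sqrt{a_i\gamma_i}+b_i+C_i$) is the case condition, and the remaining constraint ($K^*\ge 2\sqrt{a_i\gamma_i}+b_i$ resp.\ $K^*\ge\tfrac{a_i\gamma_i}{C_i}+b_i+C_i$) is exactly $f_i^1(K^*)\ge 0$ resp.\ $f_i^2(K^*)\ge 0$, which holds since these values equal $\Pi^*\ge 0$. Optimality of $K^*$ is then immediate from Lemma~\ref{lemma:BR1}: any competing feasible $K$ would produce, via that lemma, a strategy feasible for $(\text{P}_i)$ with profit $f_i^j(K)$, which cannot exceed $\Pi^*=f_i^j(K^*)$. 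I expect the main obstacle to be the careful bookkeeping at the boundaries, in particular cleanly excluding the capacity‑cap corner so that no spurious third case survives, rather than any single difficult estimate.
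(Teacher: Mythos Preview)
Your proposal is correct and follows essentially the same route as the paper: fix the routing cost $K^*$, reduce to a one–variable optimization in $z$ along the curve $\tfrac{a_i\overline x}{z}+b_i+p=K^*$, locate the unconstrained maximizer $z^\circ=\sqrt{a_i/\gamma_i}\,\overline x$, and split into the two cases according to whether the price constraint $p\le C_i$ binds. The only cosmetic difference is that the paper drops the capacity bound $z\le C_i/\gamma_i$ from its auxiliary problem~(P$'$) (it is redundant once $\Pi^*\ge 0$ is known), whereas you carry it along and then argue that corner away via $p\,\overline x-C_i<0$; both work.
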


\begin{proof}
Let $(z_i^*,p_i^*)$ with $z_i^*>0$ and $K^*$ as in the lemma statement, and 
define $x^*:=x(z_i^*,p_i^*)$. Since $\Pi_i(z_i^*,p_i^*)\geq 0$ 
and $z_i^*>0$ holds, $0\leq \Pi_i(z_i^*,p_i^*)=p_i^*x_i^*-\gamma_iz_i^*<p_i^*x_i^*$ follows, which implies $x_i^*>0$ and $p_i^*>0$. 
Therefore $K^*=\frac{a_i}{z_i^*}\cdot x_i^*+b_i+p_i^*$ holds and $(z_i^*,p_i^*)$ is an optimal solution for the following optimization problem (with variables $z_i$ and $p_i$):
\begin{align*}
 \text{ (P) } \qquad\qquad \max & \quad p_i x_i^*-\gamma_iz_i \\
\qquad\qquad \text{ subject to } & \quad \frac{a_i}{z_i}x_i^*+b_i+p_i=K^* \\
& \quad 0< z_i, \ 0< p_i\leq C_i.
\end{align*} 
Note that the optimal solutions of (P) correspond to all best responses for firm $i$ such that $x^*$ remains the Wardrop flow.
Reformulating the equality constraint in (P) yields 
$p_i=K^*-\frac{a_i}{z_i}x_i^*-b_i$.
The constraints $0< p_i\leq C_i$ then become (note that $K^*>b_i+p_i^*\geq b_i$ holds) 
\[
0<K^*-\frac{a_i}{z_i}x_i^*-b_i \Leftrightarrow \frac{1}{z_i}<\frac{K^*-b_i}{a_ix_i^*} \Leftrightarrow z_i>\frac{a_ix_i^*}{K^*-b_i}
\]
and
\[
K^*-\frac{a_i}{z_i}x_i^*-b_i\leq C_i \Leftrightarrow \frac{1}{z_i}\geq\frac{K^*-b_i-C_i}{a_ix_i^*}.
\]
Thus (P) is equivalent to the following problem (with variable $z_i$; note that $\frac{a_ix_i^*}{K^*-b_i}>0$ holds):
\begin{align*}
 \text{ (P$'$) } \qquad\qquad \max & \quad  (K^*-\frac{a_i}{z_i}x_i^*-b_i)\cdot x_i^*-\gamma_iz_i \\
\qquad\qquad \text{ subject to } & \quad \frac{a_ix_i^*}{K^*-b_i} < z_i, \ \frac{K^*-b_i-C_i}{a_ix_i^*}\leq\frac{1}{z_i}.
\end{align*} 
Let $f$ be the objective function of (P$'$) and consider the derivative $f'(z_i)=\frac{a_i(x_i^*)^2}{z_i^2}-\gamma_i$. 
We get that $f$ is strictly increasing for $0<z_i<\sqrt{a_i/\gamma_i} \cdot x_i^*$ and strictly decreasing for $z_i>\sqrt{a_i/ \gamma_i} \cdot x_i^*$. 
We now distinguish between the two cases that $z_i=\sqrt{a_i/\gamma_i}\cdot x_i^*$ is feasible for (P$'$), or not. As we will see, the former case leads to statement~\ref{lemma:BR2_1} of the lemma, and the latter case to statement~\ref{lemma:BR2_2}. 
Note that in either case, $\overline x_i(K^*)=x_i^*$ holds (by statement~\ref{lemma:BR_x_i_1} of   Lemma~\ref{lemma:BR_x_i}).

If $z_i=\sqrt{a_i/\gamma_i}\cdot x_i^*$ is feasible for (P$'$), it is the unique optimal solution of (P$'$). But since $z_i^*$ is also optimal for (P$'$), we get
\[
z_i^*=\sqrt{a_i/\gamma_i}\cdot \overline x_i(K^*) \text{ and } p_i^*=K^*-\frac{a_i}{z_i^*}\cdot \overline x_i(K^*)-b_i=K^*-\sqrt{a_i\gamma_i}-b_i.
\]
For the profit of firm $i$, we get 
\begin{align*}
\Pi_i(z_i^*,p_i^*)&=p_i^*x_i^*-\gamma_iz_i^*=(K^*-\sqrt{a_i\gamma_i}-b_i)\cdot \overline x_i(K^*)-\gamma_i\cdot\sqrt{\frac{a_i}{\gamma_i}} \cdot \overline x_i(K^*)\\
&= \overline x_i(K^*)\cdot(K^*-2\sqrt{a_i\gamma_i}-b_i)=f_i^1(K^*).
\end{align*}
It remains to show that $K^*$ is optimal for \hyperlink{P_1}{$(\text{P}_i^1)$}. 
For feasibility, we need $2\sqrt{a_i\gamma_i}+b_i\leq K^*\leq \sqrt{a_i\gamma_i}+b_i+C_i$ and $\overline x_i(K^*)>0$. 
The last inequality follows directly from $\overline x_i(K^*)=x_i^*>0$. Using this and $\overline x_i(K^*)\cdot(K^*-2\sqrt{a_i\gamma_i}-b_i)=\Pi_i(z_i^*,p_i^*)\geq 0$ yields $K^*\geq 2\sqrt{a_i\gamma_i}+b_i$. Finally, the feasibility of $z_i^*=\sqrt{a_i/\gamma_i}\cdot x_i^*$ for (P$'$) implies
\[
\frac{K^*-b_i-C_i}{a_ix_i^*} \leq \frac{\sqrt{\gamma_i}}{\sqrt{a_i}x_i^*} \Leftrightarrow K^*\leq \sqrt{a_i\gamma_i}+b_i+C_i.
\]
Therefore, $K^*$ is feasible for \hyperlink{P_1}{$(\text{P}_i^1)$}. The optimality follows from Lemma~\ref{lemma:BR1} and the optimality of $(z_i^*,p_i^*)$ for \hyperlink{P_i}{$(\text{P}_i)$}.

Now turn to the case that $z_i=\sqrt{a_i/\gamma_i}\cdot x_i^*$ is not feasible for (P$'$). We show that statement~\ref{lemma:BR2_2} of the lemma holds. 
Since (P$'$) has an optimal solution (namely $z_i^*$), we get that $0<\frac{a_ix_i^*}{K^*-b_i-C_i}<\sqrt{a_i/\gamma_i}\cdot x_i^*$ holds, and therefore $z_i=\frac{a_ix_i^*}{K^*-b_i-C_i}$ is the unique optimal solution for (P$'$). 
This shows 
\[
z_i^*=\frac{a_i\cdot \overline x_i(K^*)}{K^*-b_i-C_i} \text{ and } p_i^*=K^*-\frac{a_i}{z_i^*}\cdot \overline x_i(K^*)-b_i=C_i.
\]
The profit of firm $i$ becomes 
\[
\Pi_i(z_i^*,p_i^*)=C_i\cdot \overline x_i(K^*)-\gamma_i\cdot \frac{a_i\overline x_i(K^*)}{K^*-b_i-C_i}=\overline x_i(K^*)\cdot(C_i-\frac{a_i\gamma_i}{K^*-b_i-C_i})=f_i^2(K^*).
\]
Since $\overline x_i(K^*)=x_i^*>0$ and the profit is nonnegative, 
\[
C_i-\frac{a_i\gamma_i}{K^*-b_i-C_i} \geq 0 \Leftrightarrow K^* \geq \frac{a_i\gamma_i}{C_i}+b_i+C_i 
\]
holds.  Finally,
\[
z_i^*=\frac{a_ix_i^*}{K^*-b_i-C_i}<\sqrt{\frac{a_i}{\gamma_i}}x_i^* \Leftrightarrow K^*>\sqrt{a_i\gamma_i}+b_i+C_i,
\]
which completes the proof since we showed that $K^*$ is a feasible solution of problem \hyperlink{P_2}{$(\text{P}_i^2)$} (optimality follows from Lemma~\ref{lemma:BR1} and the optimality of $(z_i^*,p_i^*)$ for \hyperlink{P_i}{$(\text{P}_i)$}).
\end{proof}
In the next lemma, we analyze the existence of optimal solutions for the problems \hyperlink{P_1}{$(\text{P}_i^1)$} and \hyperlink{P_2}{$(\text{P}_i^2)$}, as well as properties of such solutions. 
\begin{lemma}\label{lemma:BR3} \
\begin{enumerate}[1.]
	\item\label{lemma:BR3_1} If \hyperlink{P_1}{$(\text{P}_i^1)$} is feasible, it has a unique optimal solution.
	\item\label{lemma:BR3_2} Assume that \hyperlink{P_2}{$(\text{P}_i^2)$} is feasible. 
	\begin{itemize}
		\item If $C_i\leq\sqrt{a_i\gamma_i}$, then \hyperlink{P_2}{$(\text{P}_i^2)$} has a unique optimal solution. 
		\item If $C_i> \sqrt{a_i\gamma_i}$, then \hyperlink{P_2}{$(\text{P}_i^2)$} has at most one optimal solution. 
		\item If $K^*_2$ is optimal for \hyperlink{P_2}{$(\text{P}_i^2)$}, then $f_i^2(K^*_2)>0$.
	\end{itemize}
	\item\label{lemma:BR3_3} If $K^*_1$ is optimal for \hyperlink{P_1}{$(\text{P}_i^1)$} and $K^*_2$ is optimal for \hyperlink{P_2}{$(\text{P}_i^2)$}, then $f_i^1(K^*_1)< f_i^2(K^*_2)$.
\end{enumerate}
\end{lemma}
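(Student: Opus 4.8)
My plan is to handle all three parts through one structural observation: each objective is a product $\overline x_i(K)\cdot h(K)$ of the flow term and a second factor, and both factors are \emph{log-concave} on the region where the objective is positive. Recall that $\overline x_i$ is continuous, piecewise linear, and since its slope $-\sum_{j\in N(K)}z_j/a_j$ only decreases as $K$ grows (each threshold $b_j+p_j$ switches on a nonnegative term), it is \emph{concave}; hence $\log\overline x_i$ is concave wherever $\overline x_i>0$, i.e.\ for $K<K_i^{\max}$. For $(\text{P}_i^1)$ the second factor is $K-b_i-2\sqrt{a_i\gamma_i}$ and $\log(K-b_i-2\sqrt{a_i\gamma_i})$ has second derivative $-1/(K-b_i-2\sqrt{a_i\gamma_i})^2<0$; for $(\text{P}_i^2)$ the factor is $\phi(K):=C_i-\tfrac{a_i\gamma_i}{K-b_i-C_i}$, which is strictly concave and positive on the feasible range, so $\log\phi$ is strictly concave. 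In both cases the objective is therefore strictly log-concave, hence strictly quasi-concave, on the open set where it is positive, which immediately yields \emph{at most one} maximizer.

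For the existence parts of \ref{lemma:BR3_1} and the first bullet of \ref{lemma:BR3_2}, I would examine the objective at the endpoints of the (interval) feasible region. For $(\text{P}_i^1)$ one checks $f_i^1(2\sqrt{a_i\gamma_i}+b_i)=0$, and the feasible set is either the compact $[2\sqrt{a_i\gamma_i}+b_i,\sqrt{a_i\gamma_i}+b_i+C_i]$ or the half-open $[2\sqrt{a_i\gamma_i}+b_i,K_i^{\max})$ with limit $0$ at $K_i^{\max}$; either way a maximum is attained, and strict log-concavity on $\{f_i^1>0\}$ makes it unique. For $(\text{P}_i^2)$ the decisive point is which lower bound binds: comparing $\sqrt{a_i\gamma_i}$ with $\tfrac{a_i\gamma_i}{C_i}$ shows that for $C_i\le\sqrt{a_i\gamma_i}$ the \emph{non-strict} bound $K\ge \tfrac{a_i\gamma_i}{C_i}+b_i+C_i$ binds, so the left endpoint lies in the feasible set and there $\phi=0$, whence $f_i^2=0$; as $f_i^2\to0$ also at $K_i^{\max}$, a positive maximum is attained and is unique. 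For $C_i>\sqrt{a_i\gamma_i}$ the \emph{strict} bound $K>\sqrt{a_i\gamma_i}+b_i+C_i$ binds, the left endpoint is excluded, and there $\phi=C_i-\sqrt{a_i\gamma_i}>0$, so the supremum may be approached at the missing endpoint and need not be attained — which is exactly why only ``at most one'' is claimed. The third bullet follows because any feasible $K>\tfrac{a_i\gamma_i}{C_i}+b_i+C_i$ gives $\phi>0$ and $\overline x_i>0$, hence $f_i^2>0$; such points exist since the feasible set is a nondegenerate interval up to $K_i^{\max}$, so the optimal value is strictly positive.

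For \ref{lemma:BR3_3} I would glue the two problems along the common boundary $\widetilde K:=\sqrt{a_i\gamma_i}+b_i+C_i$. First, feasibility of $(\text{P}_i^1)$ forces $C_i\ge\sqrt{a_i\gamma_i}$ (otherwise $2\sqrt{a_i\gamma_i}+b_i>\widetilde K$ and the two $K$-bounds are incompatible); the degenerate case $C_i=\sqrt{a_i\gamma_i}$ is immediate, since then $(\text{P}_i^1)$ reduces to the single point with $f_i^1=0<f_i^2(K_2^*)$ by the third bullet. For $C_i>\sqrt{a_i\gamma_i}$, a direct computation shows that at $\widetilde K$ the two objectives share the value $\overline x_i(\widetilde K)(C_i-\sqrt{a_i\gamma_i})$ and the derivative $\overline x_i'(\widetilde K)(C_i-\sqrt{a_i\gamma_i})+\overline x_i(\widetilde K)$. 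Hence $h$, and therefore $\log h$, is $C^1$ across $\widetilde K$ and strictly concave on each side, so the function $F$ equal to $f_i^1$ on $(2\sqrt{a_i\gamma_i}+b_i,\widetilde K]$ and to $f_i^2$ on $(\widetilde K,K_i^{\max})$ is strictly log-concave on the whole interval $(2\sqrt{a_i\gamma_i}+b_i,K_i^{\max})$; thus $F$ has a unique maximizer $\widehat K$ and is strictly increasing before, strictly decreasing after it.

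The crux — and the step I expect to be most delicate — is locating $\widehat K$ relative to $\widetilde K$, and here the hypothesis that $(\text{P}_i^2)$ \emph{has} an optimal solution does the work. If $\widehat K\le\widetilde K$, then $F=f_i^2$ would be strictly decreasing on the open interval $(\widetilde K,K_i^{\max})$, so $\sup f_i^2$ would be the limit at the excluded endpoint $\widetilde K$ and would not be attained, contradicting the existence of $K_2^*$. Hence $\widehat K>\widetilde K$. Consequently $f_i^1=F$ is strictly increasing on $[2\sqrt{a_i\gamma_i}+b_i,\widetilde K]$, so $K_1^*=\widetilde K$ with $f_i^1(K_1^*)=F(\widetilde K)$, while $K_2^*=\widehat K$ with $f_i^2(K_2^*)=F(\widehat K)$. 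Strict monotonicity of $F$ on $(2\sqrt{a_i\gamma_i}+b_i,\widehat K)\ni\widetilde K$ then gives $f_i^2(K_2^*)=F(\widehat K)>F(\widetilde K)=f_i^1(K_1^*)$, as claimed.
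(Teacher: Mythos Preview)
Your proof is correct and takes a genuinely different route from the paper's. The paper works directly with $f_i^1$ and $f_i^2$: it computes second derivatives on each piece where $N(K)$ is constant, checks that the one-sided derivatives drop at every kink $b_j+p_j$, and assembles an ``increasing--then--decreasing'' monotonicity profile through a three-case analysis according to where $\min_j\{b_j+p_j\}$ sits relative to the feasible interval. For part~\ref{lemma:BR3_3} it again compares one-sided derivatives at $\widetilde K$, verifying $(f_i^1)_-'(\widetilde K)\ge(f_i^2)_+'(\widetilde K)$ by an explicit calculation.

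Your log-concavity argument is more structural: writing $\log f=\log\overline x_i+\log h$ and using that $\overline x_i$ is concave (hence log-concave) while $\log h$ is \emph{strictly} concave gives strict quasi-concavity in one stroke, with the kinks of $\overline x_i$ absorbed automatically. The gluing observation for part~\ref{lemma:BR3_3} --- that the second factor $h$ is $C^1$ at $\widetilde K$ because $K-b_i-2\sqrt{a_i\gamma_i}$ and $C_i-\tfrac{a_i\gamma_i}{K-b_i-C_i}$ agree there in value \emph{and} first derivative --- is a clean fact the paper does not isolate, and it makes your localization of $\widehat K$ via the assumed existence of $K_2^*$ very transparent. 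One minor wording point: your line ``the two objectives share \dots the derivative $\overline x_i'(\widetilde K)(C_i-\sqrt{a_i\gamma_i})+\overline x_i(\widetilde K)$'' tacitly assumes $\overline x_i'(\widetilde K)$ exists, which fails if some $b_j+p_j=\widetilde K$. This is harmless, since what you actually use next is that $h$ (not $F$) is $C^1$; a kink in $\overline x_i$ only makes $(\log F)'$ jump \emph{down} at $\widetilde K$, which preserves the concavity you need.
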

\begin{proof}
We start with statement~\ref{lemma:BR3_1} of the lemma, so assume that \hyperlink{P_1}{$(\text{P}_i^1)$} is feasible. 
Note that the feasible set $I_1$ of \hyperlink{P_1}{$(\text{P}_i^1)$} either is of the form $I_1=[2\sqrt{a_i\gamma_i}+b_i,\sqrt{a_i\gamma_i}+b_i+C_i]$, or $I_1=[2\sqrt{a_i\gamma_i}+b_i, K_i^{\max})$, depending on whether $\sqrt{a_i\gamma_i}+b_i+C_i< K_i^{\max}$ holds or not. 
Furthermore note that the objective function 
\[
f_i^1(K)=\overline x_i(K)\cdot(K-b_i-2\sqrt{a_i\gamma_i})=\left(1-\sum_{j \in N^+: b_j+p_j<K}{\frac{(K-b_j-p_j)z_j}{a_j}}\right)\cdot(K-b_i-2\sqrt{a_i\gamma_i})
\]
of \hyperlink{P_1}{$(\text{P}_i^1)$} is continuous (over $\R$). From this, we can conclude that \hyperlink{P_1}{$(\text{P}_i^1)$} has \textit{at least one} optimal solution: 
For $I_1=[2\sqrt{a_i\gamma_i}+b_i,\sqrt{a_i\gamma_i}+b_i+C_i]$, this follows directly from the theorem of Weierstrass ($f_i^1$ is continuous and $I_1$ is nonempty and compact). 
For $I_1=[2\sqrt{a_i\gamma_i}+b_i, K_i^{\max})$, the theorem of Weierstrass yields that $f_i^1$ attains its maximum over the closure of $I_1$, that is, over $[2\sqrt{a_i\gamma_i}+b_i, K_i^{\max}]$. But since $f_i^1(K_i^{\max})=0$ $(=f_i^1(2\sqrt{a_i\gamma_i}+b_i))$, and any $K\in (2\sqrt{a_i\gamma_i}+b_i, K_i^{\max})$ fulfills $f_i^1(K)>0$, the maximum is not attained for $K=K_i^{\max}$, and we conclude that $f_i^1$ also attains its maximum over $I_1$. 
Thus \hyperlink{P_1}{$(\text{P}_i^1)$} has an optimal solution for both cases. 
To complete the proof of statement~\ref{lemma:BR3_1} of the lemma, it remains to show that there is also \textit{at most one} optimal solution. 
We prove this by showing the following monotonicity behaviour of $f_i^1$ over $I_1$: Either $f_i^1$ is strictly increasing over $I_1$, or strictly decreasing over $I_1$, or strictly increasing up to a unique point, and strictly decreasing afterwards. 
In all three cases, we obviously get the desired statement, namely that  \hyperlink{P_1}{$(\text{P}_i^1)$} has at most one optimal solution. 
To prove the described monotonicity behaviour, we distinguish between three cases according to the value of $\min\{b_j+p_j: j \in N^+\}$. 
The first case is $\sqrt{a_i\gamma_i}+b_i+C_i\leq \min\{b_j+p_j: j \in N^+\}$, which implies $N(K)=\{j \in N^+: b_j+p_j<K\}=\emptyset$ and $\overline x_i(K)=1$ for all $K\in I_1$. We conclude that $f_i^1(K)=K-b_i-2\sqrt{a_i\gamma_i}$ is strictly increasing over $I_1$ (in particular, $f_i^1$ reaches its unique maximum over $I_1$ at $K=\sqrt{a_i\gamma_i}+b_i+C_i$). 
Next, consider the case that $\min\{b_j+p_j: j \in N^+\}<2\sqrt{a_i\gamma_i}+b_i$. This implies that $N(K)\neq \emptyset$ for all $K\in I_1$. 
Note that $f_i^1$ is twice differentiable on any open interval where $N(K)$ is constant, and the first and second derivative of $f_i^1$ then are
\begin{align*}
(f_i^1)'(K) &=(-\sum_{j \in N(K)}{\frac{z_j}{a_j}})\cdot (K-b_i-2\sqrt{a_i\gamma_i}) + 1-\sum_{j \in N(K)}{\frac{(K-b_j-p_j)z_j}{a_j}}\\
&= 1-\sum_{j \in N(K)}{\frac{(2K-b_j-p_j-b_i-2\sqrt{a_i\gamma_i})z_j}{a_j}}
\end{align*}
and 
\begin{align*}
(f_i^1)''(K)&=-2\cdot \sum_{j \in N(K)}{\frac{z_j}{a_j}}.
\end{align*}
Since $N(K)\neq \emptyset$ for all $K\in I_1$, we conclude that for all $K\in I_1$ where $(f_i^1)''(K)$ exists, $(f_i^1)''(K)<0$ holds. 
If $N(K)$ is constant on the complete interior of~$I_1$ (that is, on $(2\sqrt{a_i\gamma_i}+b_i,\sqrt{a_i\gamma_i}+b_i+C_i)$ or $(2\sqrt{a_i\gamma_i}+b_i, K_i^{\max})$, depending on the two possible cases for $I_1$), the desired monotonicity behaviour of $f_i^1$ over $I_1$ follows. 
Otherwise, let $\alpha_1<\alpha_2<\cdots < \alpha_k$ denote the different values of $b_j+p_j, j \in N^+$ which lie in the interior of $I_1$. Define $\alpha_0:=2\sqrt{a_i\gamma_i}+b_i$ and $\alpha_{k+1}:=\sup{I_1}$ (that is, $\alpha_{k+1}=\sqrt{a_i\gamma_i}+b_i+C_i$ or $\alpha_{k+1}=K_i^{\max}$). 
Then, $N(K)$ is constant on the intervals $(\alpha_{\ell-1},\alpha_{\ell}]$ for $\ell\in \{1,\ldots,k+1\}$. 
For each $\ell\in \{1,\ldots,k\}$, the set $N(K)$ increases immediately after $\alpha_{\ell}$, that is, $N(\alpha_{\ell})\subsetneq N(\alpha_{\ell}+\varepsilon)$ holds for any $\varepsilon>0$. In particular, $N(\alpha_{\ell}+\varepsilon)=N(\alpha_{\ell})\cup\{j \in N^+: b_j+p_j=\alpha_{\ell}\}$ holds for all $0<\varepsilon\leq\alpha_{\ell+1}-\alpha_{\ell}$. 
We now show that for any $\ell\in \{1,\ldots,k\}$, the slope of~$f_i^1$ decreases at $\alpha_{\ell}$, whereby we mean that
\[
(f_i^1)_{+}'(\alpha_{\ell})<(f_i^1)_{-}'(\alpha_{\ell})
\]
holds, with $(f_i^1)_{+}'(\alpha_{\ell})$ and $(f_i^1)_{-}'(\alpha_{\ell})$ denoting the right and left derivative of $f_i^1$ at $\alpha_{\ell}$, respectively. 
This implies the desired monotonicity behaviour of $f_i^1$ over $I_1$. 
Analyzing the left and right derivative of $f_i^1$ at $\alpha_{\ell}$ yields
\[
(f_i^1)_{-}'(\alpha_{\ell})=1-\sum_{j \in N(\alpha_{\ell})}{\frac{(2\alpha_{\ell}-b_j-p_j-b_i-2\sqrt{a_i\gamma_i})z_j}{a_j}}
\]
and 
\begin{align*}
(f_i^1)_{+}'(\alpha_{\ell})&=1-\sum_{j \in N(\alpha_{\ell})\cup\{j\in N^+: b_j+p_j=\alpha_{\ell}\}}{\frac{(2\alpha_{\ell}-b_j-p_j-b_i-2\sqrt{a_i\gamma_i})z_j}{a_j}}\\
&=(f_i^1)_{-}'(\alpha_{\ell}) -\sum_{j\in N^+: b_j+p_j=\alpha_{\ell}}{\frac{(2\alpha_{\ell}-\alpha_{\ell}-b_i-2\sqrt{a_i\gamma_i})z_j}{a_j}} \\
&=(f_i^1)_{-}'(\alpha_{\ell})-(\alpha_{\ell}-b_i-2\sqrt{a_i\gamma_i})\cdot\sum_{j \in N^+: b_j+p_j=\alpha_{\ell}}{\frac{z_j}{a_j}}.
\end{align*}
Since $\alpha_{\ell}$ lies in the interior of $I_1$, we get $\alpha_{\ell}> 2\sqrt{a_i\gamma_i}+b_i$ and therefore the desired inequality $(f_i^1)_{+}'(\alpha_{\ell})<(f_i^1)_{-}'(\alpha_{\ell})$, completing the proof for the case $\min\{b_j+p_j: j \in N^+\}<2\sqrt{a_i\gamma_i}+b_i$. 
The remaining case is $2\sqrt{a_i\gamma_i}+b_i \leq \min\{b_j+p_j: j \in N^+\} <\sqrt{a_i\gamma_i}+b_i+C_i$, which implies that $N(K)=\emptyset$ holds in $I_1$ for $K \leq \min\{b_j+p_j: j \in N^+\}$, and $N(K)\neq \emptyset$ holds in $I_1$ for $K >\min\{b_j+p_j: j \in N^+\}$. 
We can now obviously combine the arguments of the other two cases to obtain the desired monotonicity behaviour of $f_i^1$ also for this case. 
This completes the proof of statement~~\ref{lemma:BR3_1} of the lemma. 

Now we turn to statement~\ref{lemma:BR3_2}, thus we assume that \hyperlink{P_2}{$(\text{P}_i^2)$} is feasible. Let $I_2$ denote the feasible set of \hyperlink{P_2}{$(\text{P}_i^2)$}. 
Then, either $I_2=[\frac{a_i\gamma_i}{C_i}+b_i+C_i, K_i^{\max})$ or $I_2=(\sqrt{a_i\gamma_i}+b_i+C_i,K_i^{\max})$ holds, depending on whether $C_i< \sqrt{a_i\gamma_i}$ holds or not. 
But in both cases, $I_2$ is an interval with \textit{positive} length, so that there exists $K\in I_2$ with $K>\frac{a_i\gamma_i}{C_i}+b_i+C_i$, which implies $f_i^2(K)>0$. Therefore, if \hyperlink{P_2}{$(\text{P}_i^2)$} has an optimal solution~$K^*_2$, we get $f_i^2(K^*_2)>0$. 
Furthermore note that the objective function
\[
f_i^2(K)=\overline x_i(K)\cdot(C_i-\frac{a_i\gamma_i}{K-b_i-C_i})=\left(1-\sum_{j \in N(K)}{\frac{(K-b_j-p_j)z_j}{a_j}}\right)\cdot(C_i-\frac{a_i\gamma_i}{K-b_i-C_i})
\]
of \hyperlink{P_2}{$(\text{P}_i^2)$} is continuous over $(b_i+C_i,\R)$. 
Using this, we can show that \hyperlink{P_2}{$(\text{P}_i^2)$} has \textit{at least one} optimal solution, if $C_i\leq \sqrt{a_i\gamma_i}$ holds: Due to the theorem of Weierstrass, $f_i^2$ attains its maximum over $[\frac{a_i\gamma_i}{C_i}+b_i+C_i,K_i^{\max}]$, the closure of $I_2$. But since $f_i^2(\frac{a_i\gamma_i}{C_i}+b_i+C_i)=0=f_i^2(K_i^{\max})$, and $f_i^2(K)>0$ for any $K \in (\frac{a_i\gamma_i}{C_i}+b_i+C_i, K_i^{\max})$, the maximum is not attained at $K=\frac{a_i\gamma_i}{C_i}+b_i+C_i$ or $K=K_i^{\max}$, which shows that $f_i^2$ also attains its maximum over $I_2$. 
Thus, if $C_i\leq \sqrt{a_i\gamma_i}$ holds, \hyperlink{P_2}{$(\text{P}_i^2)$} has at least one optimal solution. 
To complete the proof of statement~\ref{lemma:BR3_2}, it remains to show that \hyperlink{P_2}{$(\text{P}_i^2)$} has \textit{at most one} optimal solution (in the general case). 
As in the proof of statement~\ref{lemma:BR3_1} of the lemma, we proof this by showing that $f_i^2$ exhibits a certain monotonicity behaviour over $I_2$, namely: Either $f_i^2$ is strictly decreasing over $I_2$, or strictly increasing up to a unique point, and strictly decreasing afterwards. Note that $f_i^2$ cannot be strictly increasing over $I_2$, due to the continuity of $f_i^2$ and the fact that $f_i^2(K_i^{\max})=0<f_i^2(K)$ holds for any $K$ in the interior of $I_2$. 
The remaining proof is very similar to the proof of statement~\ref{lemma:BR3_1}. 
First, $f_i^2$ is twice differentiable on any open interval where $N(K)$ is constant. The first and second derivative of $f_i^2$ then are
\begin{align*}
(f_i^2)'(K) &=-\sum_{j \in N(K)}{\frac{z_j}{a_j}}\cdot(C_i-\frac{a_i\gamma_i}{K-b_i-C_i}) + \overline x_i(K)\cdot \frac{a_i\gamma_i}{(K-b_i-C_i)^2} \quad \text{and}\\
(f_i^2)''(K)&=- \sum_{j \in N(K)}{\frac{z_j}{a_j}} \cdot \frac{a_i\gamma_i}{(K-b_i-C_i)^2}-\sum_{j \in N(K)}{\frac{z_j}{a_j}} \cdot \frac{a_i\gamma_i}{(K-b_i-C_i)^2} \\
&\phantom{=}\ +\overline x_i(K)\cdot \frac{-2a_i\gamma_i}{(K-b_i-C_i)^3}\\
&= -\frac{2a_i\gamma_i}{(K-b_i-C_i)^2}\cdot \left(\sum_{j \in N(K)}{\frac{z_j}{a_j}}+\frac{\overline x_i(K)}{K-b_i-C_i}\right).
\end{align*}
Since $\overline x_i(K)>0$ for all $K\in I_2$, we conclude that for all $K\in I_2$ where $(f_i^2)''(K)$ exists, $(f_i^2)''(K)<0$ holds. 
If $N(K)$ is constant on the complete interior of~$I_2$, the desired monotonicity behaviour of $f_i^2$ over $I_2$ follows, otherwise let $\beta_1<\beta_2<\cdots < \beta_k$ denote the different values of $b_j+p_j, j \in N^+$ which lie in the interior of $I_2$. 
 We show that the slope of $f_i^2$ decreases at $\beta_{\ell}$, i.e. 
$
(f_i^2)_+'(\beta_{\ell})< (f_i^2)_-'(\beta_{\ell})
$ 
holds, which implies the desired monotonicity behaviour of $f_i^2$ over $I_2$. 
Analyzing the left and right derivative yields 
\begin{align*}
(f_i^2)_-'(\beta_{\ell}) 
&=-\sum_{j \in N(\beta_{\ell})}{\frac{z_j}{a_j}}\cdot(C_i-\frac{a_i\gamma_i}{\beta_{\ell}-b_i-C_i}) + \overline x_i(\beta_{\ell})\cdot \frac{a_i\gamma_i}{(\beta_{\ell}-b_i-C_i)^2} \quad \text{ and}
\end{align*}
\begin{align*}
(f_i^2)_+'(\beta_{\ell}) &=-\sum_{j \in N(\beta_{\ell})\cup \{j\in N^+: b_j+p_j=\beta_{\ell}\}}{\frac{z_j}{a_j}}\cdot(C_i-\frac{a_i\gamma_i}{\beta_{\ell}-b_i-C_i})+ \overline x_i(\beta_{\ell})\cdot \frac{a_i\gamma_i}{(\beta_{\ell}-b_i-C_i)^2} \\
&=(f_i^2)_-'(\beta_{\ell}) -\sum_{j\in N^+: b_j+p_j=\beta_{\ell}}{\frac{z_j}{a_j}}\cdot(C_i-\frac{a_i\gamma_i}{\beta_{\ell}-b_i-C_i})
\end{align*}
Since $\beta_{\ell}$ lies in the interior of $I_2$, we get $C_i-\frac{a_i\gamma_i}{\beta_{\ell}-b_i-C_i} > 0$, and thus the desired inequality $(f_i^2)_+'(\beta_{\ell})< (f_i^2)_-'(\beta_{\ell})$, 
completing the proof of statement~\ref{lemma:BR3_2} of the lemma.

Finally we show statement~\ref{lemma:BR3_3} of the lemma, so assume that $K^*_1$ and $K^*_2$ are the optimal solutions of \hyperlink{P_1}{$(\text{P}_i^1)$} and \hyperlink{P_2}{$(\text{P}_i^2)$}. 
Since \hyperlink{P_1}{$(\text{P}_i^1)$} and \hyperlink{P_2}{$(\text{P}_i^2)$} have to be feasible, $\sqrt{a_i\gamma_i}\leq C_i$ and $\sqrt{a_i\gamma_i}+b_i+C_i<K_i^{\max}$ holds. 
This implies that the feasible set of \hyperlink{P_1}{$(\text{P}_i^1)$} is $I_1=[2\sqrt{a_i\gamma_i}+b_i,\sqrt{a_i\gamma_i}+b_i+C_i]$ and the feasible set of \hyperlink{P_2}{$(\text{P}_i^2)$} is $I_2=(\sqrt{a_i\gamma_i}+b_i+C_i,K_i^{\max})$. 
Let $\bar K:=\sqrt{a_i\gamma_i}+b_i+C_i$. Then, 
\begin{align*}
f_i^1(\bar K)&=\overline x_i(\sqrt{a_i\gamma_i}+b_i+C_i)\cdot(\sqrt{a_i\gamma_i}+b_i+C_i-b_i-2\sqrt{a_i\gamma_i}) \\
&= \overline x_i(\sqrt{a_i\gamma_i}+b_i+C_i)\cdot(C_i-\sqrt{a_i\gamma_i}) \\
&= \overline x_i(\sqrt{a_i\gamma_i}+b_i+C_i)\cdot\left(C_i-\frac{a_i\gamma_i}{\sqrt{a_i\gamma_i}+b_i+C_i-b_i-C_i} \right) \\
&=f_i^2(\bar K)
\end{align*}
holds. 
If additionally the slope of $f_i^1$ in $\bar K$ is greater than or equal to the slope of $f_i^2$ in $\bar K$, whereby we mean that
$
(f_i^1)_-'(\bar K)\geq (f_i^2)_+'(\bar K)
$
holds, we get $f_i^1(K^*_1)=f_i^1(\bar K)<f_i^2(K^*_2)$ from our analysis of $f_i^1$ and $f_i^2$ in the proofs of the statements~\ref{lemma:BR3_1} and~\ref{lemma:BR3_2} (note that $f_i^2$ is strictly increasing on the interval $(\sqrt{a_i\gamma_i}+b_i+C_i,K_2^*]$).
The remaining inequality for the slopes follows from
\begin{align*}
(f_i^1)_-'(\bar K) 
&= 1-\sum_{j \in N(\bar K)}{\frac{(2\bar K-b_j-p_j-b_i-2\sqrt{a_i\gamma_i})z_j}{a_j}} \\
&= 1-\sum_{j \in N(\bar K)}{\frac{(2C_i+b_i-b_j-p_j)z_j}{a_j}} \quad \text{and} \\
(f_i^2)_+'(\bar K) &= -\sum_{j \in N(\bar K)\cup \{j\in N^+: b_j+p_j=\bar K\}}{\frac{z_j}{a_j}}\cdot(C_i-\frac{a_i\gamma_i}{\bar K-b_i-C_i})+ \overline x_i(\bar K)\cdot \frac{a_i\gamma_i}{(\bar K-b_i-C_i)^2} \\
&= -\sum_{j \in N(\bar K)\cup \{j\in N^+: b_j+p_j=\bar K\}}{\frac{z_j}{a_j}}\cdot(C_i-\sqrt{a_i\gamma_i})+ \overline x_i(\bar K)\cdot 1 \\
&= -\sum_{j \in N(\bar K)}{\frac{z_j}{a_j}}\cdot(C_i-\sqrt{a_i\gamma_i}) -\sum_{j\in N^+: b_j+p_j=\bar K}{\frac{z_j}{a_j}}\cdot(C_i-\sqrt{a_i\gamma_i}) \\
& \phantom{=} \ +1-\sum_{j \in N(\bar K)}{\frac{(\bar K-b_j-p_j)z_j}{a_j}} \\
&=1-\sum_{j \in N(\bar K)}{\frac{(\bar K-b_j-p_j+C_i-\sqrt{a_i\gamma_i})z_j}{a_j}}-\sum_{j\in N^+: b_j+p_j=\bar K}{\frac{z_j}{a_j}}\cdot(C_i-\sqrt{a_i\gamma_i}) \\
&=1-\sum_{j \in N(\bar K)}{\frac{(2C_i+b_i-b_j-p_j)z_j}{a_j}}-\sum_{j\in N^+: b_j+p_j=\bar K}{\frac{z_j}{a_j}}\cdot(C_i-\sqrt{a_i\gamma_i}) \\
&= (f_i^1)_-'(\bar K) -\sum_{j\in N^+: b_j+p_j=\bar K}{\frac{z_j}{a_j}}\cdot(C_i-\sqrt{a_i\gamma_i}) \\
&\leq (f_i^1)_-'(\bar K),
\end{align*}
where the inequality follows from $\sqrt{a_i\gamma_i}\leq C_i$. 
\end{proof}

\subsection{The Characterization}\label{subsec:3}
The following theorem provides a complete characterization of the best response correspondence. 
We will make use of this characterization several times during the rest of the paper.
\begin{theorem}\label{theo:BR1}
For a firm $i\in N$ and fixed strategies $s_{-i}=(z_{-i},p_{-i}) \in S_{-i}$ of the other firms, the set $\BR_i=\BR_i(s_{-i})$ of best responses of firm $i$ to $s_{-i}$ is given as indicated in Table~\ref{table_characterization}, where the first column contains $\BR_i$ and the second column contains the conditions on $s_{-i}$ under which $\BR_i$ has the stated form. For $j=1,2$, $K^*_j$ denotes the unique optimal solution of problem \hyperlink{P_1}{$(\text{P}_i^j)$}, if this problem has an optimal solution.

Furthermore, if $\BR_i(s_{-i})$ consists of a \emph{unique} best response $s_i=(z_i,p_i)$ of firm~$i$ to~$s_{-i}$, we get $z_i>0$ and $\Pi_i\klammer{s_i,s_{-i}}>0$.
\begin{table}[h!]
\begin{center}
\begin{tabular}{l c@{\hspace{0.25cm}} l}
\toprule
$\{(z_i,p_i)\}= \BR_i$ & & conditions \\ \midrule
$\emptyset$ & & $z_{-i}=0$  \\ \addlinespace[0.75em]
$\{(0,p_i): 0\leq p_i\leq C_i\}$ & & $z_{-i}\neq 0$, \hyperlink{P_1}{$(\text{P}_i^1)$} and \hyperlink{P_2}{$(\text{P}_i^2)$} infeasible  \\ \addlinespace[0.75em]
$\left\{\left(\frac{a_i \cdot \overline x_i(K^*_2)}{K^*_2-b_i-C_i}, C_i\right)\right\}$ & & \parbox{8.5cm}{$z_{-i}\neq 0$, \hyperlink{P_2}{$(\text{P}_i^2)$} has an optimal solution} \\\addlinespace[0.75em]
$\left\{\left(\sqrt{\frac{a_i}{\gamma_i}}\cdot \overline x_i(K^*_1), K^*_1-\sqrt{a_i\gamma_i}-b_i\right)\right\}$ & &  \parbox{8.5cm}{$z_{-i}\neq 0$, \hyperlink{P_1}{$(\text{P}_i^1)$} feasible, \hyperlink{P_2}{$(\text{P}_i^2)$} has no optimal solution}\\\addlinespace[0.25em]
\bottomrule
\end{tabular}
\caption{Characterization of $\BR_i$.}
\label{table_characterization}
\end{center}
\end{table}

\end{theorem}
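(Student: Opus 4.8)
The plan is to assemble the characterization from the four preparatory lemmas, treating the four rows of Table~\ref{table_characterization} as an exhaustive and disjoint case distinction on $s_{-i}$. The first row is already settled: if $z_{-i}=0$, then $\BR_i=\emptyset$ by Lemma~\ref{lemma:BR0}. For the remaining rows I would work under the standing assumption $z_{-i}\neq 0$, where we already know that $\BR_i$ is exactly the (nonempty) set of optimal solutions of \hyperlink{P_i}{$(\text{P}_i)$}. The central structural fact I would lean on is Lemma~\ref{lemma:BR2}: every optimal solution $(z_i^*,p_i^*)$ of \hyperlink{P_i}{$(\text{P}_i)$} with $z_i^*>0$ is necessarily of the case-1 form arising from an optimal $K_1^*$ of \hyperlink{P_1}{$(\text{P}_i^1)$}, or of the case-2 form arising from an optimal $K_2^*$ of \hyperlink{P_2}{$(\text{P}_i^2)$}. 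Hence the only candidate profit values for a positive-capacity best response are $f_i^1(K_1^*)$ and $f_i^2(K_2^*)$, whereas every zero-capacity strategy yields profit $0$.

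Before splitting into rows, I would record two consequences of Lemma~\ref{lemma:BR3} that make the case distinction exhaustive and disjoint. First, if \hyperlink{P_2}{$(\text{P}_i^2)$} is feasible but has no optimal solution, then necessarily $C_i>\sqrt{a_i\gamma_i}$ (statement~\ref{lemma:BR3_2}), and the resulting shape $I_2=(\sqrt{a_i\gamma_i}+b_i+C_i,K_i^{\max})$ forces $2\sqrt{a_i\gamma_i}+b_i<K_i^{\max}$, so that \hyperlink{P_1}{$(\text{P}_i^1)$} is feasible; this rules out the phantom case ``\hyperlink{P_1}{$(\text{P}_i^1)$} infeasible, \hyperlink{P_2}{$(\text{P}_i^2)$} feasible without optimum''. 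Second, whenever both problems have optimal solutions, statement~\ref{lemma:BR3_3} gives the strict inequality $f_i^1(K_1^*)<f_i^2(K_2^*)$, so the case-2 profit strictly dominates the case-1 profit.

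With this in hand I would discharge the three rows. If both \hyperlink{P_1}{$(\text{P}_i^1)$} and \hyperlink{P_2}{$(\text{P}_i^2)$} are infeasible (row~2), Lemma~\ref{lemma:BR2} forbids any optimal solution with positive capacity, so the maximal profit is $0$ and is attained exactly by the zero-capacity strategies, giving $\BR_i=\{(0,p_i):0\le p_i\le C_i\}$. If \hyperlink{P_2}{$(\text{P}_i^2)$} has an optimal solution (row~3), then $f_i^2(K_2^*)>0$ by statement~\ref{lemma:BR3_2}, which beats both the zero profit of zero-capacity strategies and (by statement~\ref{lemma:BR3_3}, whenever \hyperlink{P_1}{$(\text{P}_i^1)$} also has an optimum) the case-1 profit; uniqueness of $K_2^*$ then pins $\BR_i$ down to the single case-2 point, whose explicit form and positivity $z_i>0$ come from Lemma~\ref{lemma:BR1} and Lemma~\ref{lemma:BR2}. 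Finally, if \hyperlink{P_1}{$(\text{P}_i^1)$} is feasible while \hyperlink{P_2}{$(\text{P}_i^2)$} has no optimal solution (row~4), statement~\ref{lemma:BR3_1} gives a unique optimizer $K_1^*$, and the case-2 alternative is excluded by Lemma~\ref{lemma:BR2}, so $\BR_i$ reduces to the single case-1 point.

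The delicate point — and the step I expect to be the main obstacle — is proving strict positivity $f_i^1(K_1^*)>0$ in row~4, since without it the zero-capacity strategies would tie and $\BR_i$ would fail to be a singleton. Here I would use that on the feasible set $I_1$ one has $f_i^1\ge 0$, with equality only at the left endpoint $K=2\sqrt{a_i\gamma_i}+b_i$; hence $f_i^1(K_1^*)=0$ can occur only if $I_1$ collapses to that single point, which in turn forces $C_i=\sqrt{a_i\gamma_i}$ together with $\sqrt{a_i\gamma_i}+b_i+C_i<K_i^{\max}$. But exactly this configuration makes \hyperlink{P_2}{$(\text{P}_i^2)$} feasible with $C_i\le\sqrt{a_i\gamma_i}$, so by statement~\ref{lemma:BR3_2} it admits an optimal solution — contradicting the row-4 hypothesis. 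Therefore $f_i^1(K_1^*)>0$ in row~4, the zero-capacity strategies are strictly suboptimal, and $\BR_i$ is the claimed singleton. The supplementary claim is then immediate: $\BR_i$ is a singleton only in rows~3 and~4 (row~1 is empty and row~2 is a nondegenerate price interval since $C_i>0$), and in both of these rows the unique best response has $z_i>0$ and strictly positive profit (equal to $f_i^2(K_2^*)$ in row~3 and to $f_i^1(K_1^*)$ in row~4).
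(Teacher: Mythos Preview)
Your proof is correct and follows essentially the same approach as the paper's own proof: the same case distinction on the rows of the table, the same use of Lemmas~\ref{lemma:BR0}--\ref{lemma:BR3}, and in particular the same contradiction argument for the delicate row-4 positivity $f_i^1(K_1^*)>0$ (forcing $C_i=\sqrt{a_i\gamma_i}$ and then invoking statement~\ref{lemma:BR3_2} of Lemma~\ref{lemma:BR3}). The only cosmetic difference is that the paper derives the supplementary claim about unique best responses first, via the one-line observation that $z_i=0$ or $\Pi_i=0$ would make every $(0,p_i')$ a best response, whereas you deduce it at the end by inspecting which rows yield singletons; both are equally valid.
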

\begin{proof}
Note that if $\BR_i(s_{-i})$ consists of a \emph{unique} best response $s_i=(z_i,p_i)$ of firm~$i$ to~$s_{-i}$, then $z_i>0$ and $\Pi_i\klammer{s_i,s_{-i}}>0$ hold: Otherwise, any strategy $(z_i',p_i') \in \{0\}\times[0,C_i]$ is a best response, too, contradicting the uniqueness assumption. 

Now turn to the proof of the characterization. 
We show that the case distinction covers all possible cases, and that the given representation for $\BR_i$ is correct for each case.
If $z_{-i}=0$, Lemma~\ref{lemma:BR0} shows $\BR_i=\emptyset$. 
For the rest of the proof, assume $z_{-i}\neq 0$. 
Then, firm~$i$ has at least one best response to $s_{-i}$, since $\BR_i$ can be described as the set of optimal solution of the problem \hyperlink{P_i}{$(\text{P}_i)$}, and this problem has an optimal solution (as shown in the beginning of Subsection~\ref{subsec:2}). 
If \hyperlink{P_1}{$(\text{P}_i^1)$} and \hyperlink{P_2}{$(\text{P}_i^2)$} are both infeasible, Lemma~\ref{lemma:BR2} implies that each best response $(z_i,p_i)$ fulfills $z_i=0$. Therefore, $\BR_i=\{0\}\times [0,C_i]$. 
For the remaining proof, assume that at least one of \hyperlink{P_1}{$(\text{P}_i^1)$} and \hyperlink{P_2}{$(\text{P}_i^2)$} is feasible. 

First consider the case that \hyperlink{P_2}{$(\text{P}_i^2)$} has an optimal solution. 
It follows from~\ref{lemma:BR3_2} of Lemma~\ref{lemma:BR3} that the solution is unique, and, if $K^*_2$ denotes this unique solution, that $f_i^2(K^*_2)>0$. 
Now let $(z_i,p_i)$ be an arbitrary best response of player~$i$ to $s_{-i}$. We need to show that $(z_i,p_i)=(\frac{a_i\overline x_i(K_2^*)}{K_2^*-b_i-C_i}, C_i)$ holds. 
First, statement~\ref{lemma:BR1_2} of Lemma~\ref{lemma:BR1} shows that $\Pi_i(z_i,p_i)\geq f_i^2(K^*_2)>0$. Thus $z_i>0$ holds, since $z_i=0$ yields a profit of $0$. 
Note that either \hyperlink{P_1}{$(\text{P}_i^1)$} is infeasible, or it has a unique optimal solution $K^*_1$ with $f_i^1(K^*_1)<f_i^2(K^*_2)\leq \Pi_i(z_i,p_i)$ (see~\ref{lemma:BR3_1} and~\ref{lemma:BR3_3} of Lemma~\ref{lemma:BR3}). In both cases, Lemma~\ref{lemma:BR2} yields $(z_i,p_i)=(\frac{a_i\overline x_i(K_2^*)}{K_2^*-b_i-C_i}, C_i)$, as desired. 

Now assume that \hyperlink{P_2}{$(\text{P}_i^2)$} does not have an optimal solution (either \hyperlink{P_2}{$(\text{P}_i^2)$} is infeasible, or it is feasible, but the maximum is not attained). 
We first show that \hyperlink{P_1}{$(\text{P}_i^1)$} is feasible. If \hyperlink{P_2}{$(\text{P}_i^2)$} is infeasible, \hyperlink{P_1}{$(\text{P}_i^1)$} is feasible since we assumed that at least one of the two problems is feasible. Otherwise \hyperlink{P_2}{$(\text{P}_i^2)$} is feasible, but does not have an optimal solution. Then, $C_i>\sqrt{a_i\gamma_i}$ follows from~\ref{lemma:BR3_2} of Lemma~\ref{lemma:BR3}, and $\sqrt{a_i\gamma_i}+b_i+C_i<K_i^{\max}$ follows since \hyperlink{P_2}{$(\text{P}_i^2)$} is feasible. Together, $2\sqrt{a_i\gamma_i}+b_i<\sqrt{a_i\gamma_i}+b_i+C_i<K_i^{\max}$ holds, showing that \hyperlink{P_1}{$(\text{P}_i^1)$} is feasible. 
By~\ref{lemma:BR3_1} of Lemma~\ref{lemma:BR3} we then get that \hyperlink{P_1}{$(\text{P}_i^1)$} has a unique optimal solution $K^*_1$. Furthermore, each best response $(z_i,p_i)$ with $z_i>0$ fulfills $(z_i,p_i)=(\sqrt{a_i/\gamma_i}\cdot\overline x_i(K_1^*), K_1^*-\sqrt{a_i\gamma_i}-b_i)$ (see Lemma~\ref{lemma:BR2}). To complete the proof, we need to show that there is no best response $(z_i,p_i)$ with $z_i=0$. 
This follows from Lemma~\ref{lemma:BR1} if $f_i^1(K^*_1)>0$. Thus it remains to show that $f_i^1(K^*_1)>0$ holds. Assume, by contradiction, that $f_i^1(K^*_1)=0$. This implies that $K^*_1=2\sqrt{a_i\gamma_i}+b_i$ is the only feasible solution for~\hyperlink{P_1}{$(\text{P}_i^1)$}, which in turn yields $\sqrt{a_i\gamma_i}=C_i$ and $\sqrt{a_i\gamma_i}+b_i+C_i<K_i^{\max}$. But this implies that \hyperlink{P_2}{$(\text{P}_i^2)$} is feasible and has an optimal solution (by~\ref{lemma:BR3_2} of Lemma~\ref{lemma:BR3}), contradicting our assumption that \hyperlink{P_2}{$(\text{P}_i^2)$} does not have an optimal solution. 
\end{proof}

\subsection{Discussion}\label{subsec:4}
We now briefly discuss consequences of the characterization of the best reponse
correspondences with respect to applying Kakutani's fixed point theorem (see~\cite{Kakutani41}), or related results such as~\cite{Fan52} and~\cite{Glicksberg52}.  
Kakutani's theorem in particular requires that for each firm $i$ and each vector $s_{-i}=(z_{-i},p_{-i})$ of strategies of the other firms, the set $\BR_i(s_{-i})$ of best responses is nonempty  and convex. But as we have seen in Lemma~\ref{lemma:BR0}, the set $\BR_i(s_{-i})$ can be empty, namely if $z_{-i}=0$. On the other hand, a strategy profile with $z_{-i}=0$ for some firm $i$ will of course never be a PNE. 

A first natural approach to overcome the problem of empty best responses is the following. 
Given a strategy profile $s=(z,p)$ such that $z_{-i}=0$ for some firm $i$, 
redefine, for each such firm~$i$, the set $\BR_i(s_{-i})$ by some \textit{suitable} nonempty  convex set.  ``Suitable'' here means that the correspondence $\BR_i$ 
has a closed graph, and at the same time, $s$ must not be a fixed point of the global best response correspondence 
$\BR$ (where $\BR(s):=\{s'\in S: s_i'\in \BR_i(s_{-i}) \text{ for each } i\in N\}$). 
But unfortunately, these two goals are not compatible: 
For the strategy profile $s=(z,p)$ with $(z_i,p_i)=(0,C_i)$ for all firms $i$, 
the closed graph property requires $(0,C_i)\in \BR_i(s_{-i})$ for all $i$, which implies that $s$ is a fixed point of $\BR$. 

Another intuitive idea is to consider a game in which each firm has an initial capacity of some $\eps>0$. If this game has a PNE for each $\eps$, the limit for $\eps$ going to zero should be a PNE for our original capacity and price competition game. 
For the game with at least $\eps$ capacity, one can also characterize the best response correspondences (now by optimal solutions of \textit{three} optimization problems), but the main problem is that an analogue of Lemma~\ref{lemma:BR3} may not hold anymore.
As a consequence, it is not clear if the best responses are always convex, 
and we again do not know how to apply Kakutani's theorem. 
Instead, we show existence of PNE by using a result of McLennan et al.~\cite{McLennan11}, see Section~\ref{section:existence}.\footnote{Perhaps interesting, McLennan et al.  identify a non-trivial restriction of the players' non-equilibrium strategies so that they can eventually apply Kakutani's theorem.}

\section{Existence of Equilibria}\label{section:existence}
In this section, we show that each capacity and price competition game has a PNE.  
A frequently used tool to show existence of PNE is Kakutani's fixed point theorem. 
But, as discussed in Subsection~\ref{subsec:4}, we cannot directly apply this result to show existence of PNE.
Furthermore, the existence theorem of Reny~\cite{Reny99} can also not be used, since a capacity and price competition game is not quasiconcave in general (see Example~\ref{example_quasiconcave}). 
Instead, we turn to another existence result due to McLennan, Monteiro and Tourky~\cite{McLennan11}. 
They introduced a concept called \textit{C-security} and they showed that if the game is $C$-secure at each strategy profile which is not a PNE, then a PNE exists. 
Informally, the game is $C$-secure at a strategy profile $s$ if there is a vector $\alpha\in \R^n$ satisfying the following two properties: First, each firm $i$ has some securing strategy for $\alpha_i$ which is robust to small deviations of the other firms, i.e. firm $i$ always achieves a profit of at least $\alpha_i$ by playing this strategy even if the other firms slightly deviate from their strategies in $s_{-i}$. 
The second property requires for each slightly perturbed strategy profile $s'$ resulting from $s$, that there is at least one firm $i$ such that her perturbed strategy $s_i'$ can (in some sense) be strictly separated from all strategies achieving a profit of $\alpha_i$, so in particular from all her securing strategies. 
One can think of firm $i$ being ``not happy'' with her perturbed strategy $s_i'$ since she could achieve a higher profit. 
This already indicates the connection between a strategy profile which is not a PNE, and $C$-security. 
We will see that for certain strategy profiles, $\alpha_i$ can be chosen as the profit that firm $i$ gets by playing a best response to $s_{-i}$.\footnote{More precisely, we need to choose $\alpha_i$ slightly smaller than the profit of a best response.} Then, firm $i$'s securing strategies for $\alpha_i$ are related to her set of best responses, and we need to ``strictly separate'' these best responses from $s_i$. At this point, our characterization of best responses in Theorem~\ref{theo:BR1} becomes useful.

We now formally describe McLennan et al.'s result in our context. 
First of all, note that they consider games with compact convex strategy sets and bounded profit functions. In a capacity and price competition game, the strategy set $S_i=\{(z_i,p_i): 0\leq z_i, 0\leq p_i\leq C_i\}$ of firm $i$ is not compact a priori. But since $z_i$ will never be larger than $C_i/\gamma_i$ in any best response, and thus in any PNE (see the discussion on page~\pageref{bound_capacities}), we can redefine $S_i:=\{(z_i,p_i): 0 \leq z_i \leq C_i/\gamma_i, 0 \leq p_i \leq C_i\}$ without changing the set of PNE of the game, for any firm $i$. Furthermore, this also does not change the best responses, so Theorem~\ref{theo:BR1} continues to hold.
Using the redefined strategies, for any firm $i$ and any strategy profile $s$, the profit of firm $i$ is bounded by $-C_i\leq \Pi_i(s)\leq C_i$. 
For a strategy profile $s\in S$, firm $i \in N$ and $\alpha_i \in \R$ let
\[
B_i(s,\alpha_i):=\{s_i'\in S_i: \Pi_i\klammer{s_i',s_{-i}}\geq \alpha_i\} \ \text{ and } \ C_i(s,\alpha_i):=\conv B_i(s,\alpha_i),
\]
where $\conv B_i(s,\alpha_i)$ denotes the convex hull of $B_i(s,\alpha_i)$.
\begin{definition}
A firm $i$ can \textit{secure a profit $\alpha_i \in \R$ on $S' \subseteq S$}, if there is some $s_i \in S_i$ such that $s_i \in B_i(s',\alpha_i)$ for all $s' \in S'$. We say that firm $i$ can \textit{secure $\alpha_i$ at $s \in S$}, if she can secure $\alpha_i$ on $U\cap S$ for some open set $U$ with $s \in U$. 
\end{definition} 

\begin{definition}\label{def:Csecure}
The game is \textit{C-secure on $S' \subseteq S$}, if there is an $\alpha \in \R^n$ such that the following conditions hold:
\begin{enumerate}
	\item[(i)]\hypertarget{i}{} Every firm $i$ can secure $\alpha_i$ on $S'$.
	\item[(ii)]\hypertarget{ii}{} For any $s' \in S'$, there exists some firm $i$ with $s_i' \notin C_i(s',\alpha_i)$. 
\end{enumerate}
The game is \textit{C-secure at $s \in S$}, if it is $C$-secure on $U\cap S$ for some open set $U$ with $s \in U$.
\end{definition}
\noindent
We can now state the existence result of McLennan et al.:
\begin{theorem}[Proposition 2.7 in~\cite{McLennan11}]\label{theo:mclennan}
If the game is C-secure at each $s\in S$ that is not a PNE, then the game has a PNE.
\end{theorem}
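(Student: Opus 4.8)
The plan is to prove the statement by contradiction via a topological fixed-point argument. Assume the game has no PNE; then by hypothesis it is $C$-secure at \emph{every} $s \in S$. First I would fix the witnessing data pointwise: for each $s \in S$ there is a value vector $\alpha(s) \in \R^n$, an open neighbourhood $U_s \ni s$, and for every firm $i$ a securing strategy $t_i(s) \in S_i$ with $t_i(s) \in B_i(s',\alpha_i(s))$ for all $s' \in U_s \cap S$ (this is condition (i)), while condition (ii) holds on $U_s \cap S$. Since $S = \times_{i\in N} S_i$ is compact (after the truncation $z_i \le C_i/\gamma_i$ discussed above), I extract a finite subcover $U_1,\dots,U_m$ with value vectors $\alpha^1,\dots,\alpha^m$ and securing strategies $t_i^k \in S_i$ valid on $U_k \cap S$.

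Next I would turn this finite family of securing strategies into a self-map amenable to a fixed-point theorem. Choosing a continuous partition of unity $\{\phi_k\}_{k=1}^m$ subordinate to $\{U_k\}$, I define $g_i(s) := \sum_{k=1}^m \phi_k(s)\, t_i^k$; since each $t_i^k \in S_i$ and $S_i$ is convex, $g_i(s) \in S_i$, so $g=(g_1,\dots,g_n)$ is a continuous map $S \to S$, and Brouwer's fixed-point theorem yields $s^*$ with $g(s^*)=s^*$. (Equivalently one may apply Kakutani's theorem to the convex-valued correspondence $\Phi_i(s)=\conv\{t_i^k : s\in U_k\}$; the partition of unity is merely a convenient way to obtain a genuine function.) The role of the securing strategies is essential here: because the profit functions $\Pi_i$ are discontinuous (Theorem~\ref{theo:continuity2}), the ordinary best-response correspondence has no closed graph, whereas each $t_i^k$ delivers a profit guarantee that is \emph{uniform} over all of $U_k$, which is exactly what restores the semicontinuity the fixed point needs.

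Finally I would extract the contradiction at $s^*$. Let $K := \{k : \phi_k(s^*)>0\}$; for every $k\in K$ we have $s^*\in U_k$, and the securing property gives $\Pi_i(t_i^k,s^*_{-i})\ge \alpha_i^k$ for each firm $i$. Writing $s_i^* = \sum_{k\in K}\phi_k(s^*)\,t_i^k$ exhibits $s_i^*$ as a convex combination of points of $B_i(s^*,\cdot)$, hence as an element of the corresponding convex hull $C_i(s^*,\cdot)$. If one can arrange that a single securing level applies simultaneously across the active neighbourhoods, then $s_i^* \in C_i(s^*,\alpha_i)$ for one vector $\alpha$ and \emph{every} firm $i$, directly contradicting condition (ii) at $s^*$, which demands a firm whose component is separated from $C_i(s^*,\alpha_i)$.

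The hard part is precisely this last reconciliation. The neighbourhoods in the finite cover carry \emph{different} value vectors $\alpha^k$, and the coordinatewise minima $\underline\alpha_i := \min_{k\in K}\alpha_i^k$ that make all the securing inequalities valid need not coincide with any single $\alpha^k$; since condition (ii) is stated for one neighbourhood's vector at a time, it cannot be invoked naively against the ``diagonal'' vector $\underline\alpha$. The technical heart of the McLennan--Monteiro--Tourky argument is the bookkeeping that matches the neighbourhood whose separation claim (ii) produces with the firm for which that neighbourhood realises the binding securing level. I would resolve it either by refining the cover and shrinking the $\alpha^k$ slightly, or by selecting per firm the minimising neighbourhood while tracking which firm (ii) separates, so as to force the indices to agree and close the contradiction. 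Everything else --- compactness, the partition of unity, Brouwer/Kakutani, and the convexity bookkeeping for $C_i$ --- is routine.
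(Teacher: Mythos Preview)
The paper does not prove this theorem. It is stated as Proposition~2.7 of McLennan, Monteiro and Tourky~\cite{McLennan11} and used as a black box; immediately after the statement the authors write ``We now turn to capacity and price competition games and show the existence of a PNE by using Theorem~\ref{theo:mclennan}\ldots'' and proceed to verify the hypotheses (Lemmata~\ref{lemma:cont} and~\ref{lemma:discont}). So there is no proof in the paper to compare your attempt against.

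Regarding your sketch itself: the overall shape---assume no PNE, extract a finite cover by compactness, build a correspondence from the securing strategies, apply a fixed-point theorem, and contradict condition~(ii) at the fixed point---is indeed the architecture of the McLennan--Monteiro--Tourky argument, and the paper even remarks in a footnote that ``McLennan et al.\ identify a non-trivial restriction of the players' non-equilibrium strategies so that they can eventually apply Kakutani's theorem.'' You are right that the genuine difficulty is reconciling the different value vectors $\alpha^k$ across overlapping neighbourhoods so that condition~(ii), which is stated for \emph{one} $\alpha$ at a time, can be invoked at the fixed point. Your final paragraph acknowledges this but does not resolve it: neither ``refining the cover and shrinking the $\alpha^k$ slightly'' nor ``selecting per firm the minimising neighbourhood'' is made precise enough to close the argument, and the naive coordinatewise minimum $\underline\alpha$ does not work for exactly the reason you state. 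In the original proof this is handled by a careful construction of the correspondence (not simply the convex hull of all active securing strategies) that keeps track, per player, of which neighbourhood's $\alpha$ is binding; as written, your proposal identifies the obstacle correctly but leaves the key step open.
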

We now turn to capacity and price competition games and show the existence of a PNE by using Theorem~\ref{theo:mclennan}, i.e., we show that if a given strategy profile $s=(z,p)$ is not a PNE, then the game is $C$-secure at $s$. 
To this end, we distinguish between the two cases that there are at least two firms $i$ with $z_i>0$ (Lemma~\ref{lemma:cont}), or not (Lemma~\ref{lemma:discont}). 
Both lemmata together then imply the desired existence result. 
Note that the mentioned case distinction is equivalent to the case distinction that each firm~$i$ has a best response for $s_{-i}$, or there is at least one firm $i$ with $\BR_i(s_{-i})=\emptyset$ (see Theorem~\ref{theo:BR1}). 
 
We start with the case that all best responses exist.
The proof of the following lemma follows an argument in~\cite[p. 1647f]{McLennan11} where McLennan et al. show that Theorem~\ref{theo:mclennan} implies the existence result of Nishimura and Friedman~\cite{Nishimura81}.
\begin{lemma}\label{lemma:cont}
Let $s=(z,p)\in S$ be a strategy profile which is not a PNE. 
Assume that there are at least two firms $i\in N$ such that $z_i>0$ holds.
Then the game is C-secure at~$s$. 
\end{lemma}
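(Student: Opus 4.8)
The plan is to verify the two conditions of $C$-security (Definition~\ref{def:Csecure}) directly at $s$, exploiting that under the hypothesis---at least two firms have positive capacity---all profit functions are continuous at $s$ by Theorem~\ref{theo:continuity2} (since $z\neq 0$), and moreover every firm has a nonempty best response correspondence (since $z_{-i}\neq 0$ for each $i$, by Theorem~\ref{theo:BR1}). The key quantity is $\alpha$: following the argument of McLennan et al.\ for the Nishimura--Friedman result, I would set $\alpha_i := \sup_{s_i'\in S_i}\Pi_i(s_i',s_{-i}) - \eta = \Pi_i(\BR_i(s_{-i}),s_{-i}) - \eta$, the best-response payoff for firm $i$ against the \emph{fixed} opponents' profile $s_{-i}$, shaded down by a small $\eta>0$ common to all firms. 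Since $s$ is not a PNE, there is at least one firm $i_0$ whose current payoff $\Pi_{i_0}(s)$ is strictly below its best-response value; choosing $\eta$ smaller than this gap ensures $\Pi_{i_0}(s) < \alpha_{i_0}$, i.e.\ firm $i_0$'s current strategy $s_{i_0}$ fails to secure $\alpha_{i_0}$ at $s$ itself.

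For condition~\hyperlink{i}{(i)}, I would argue that each firm $i$ can secure $\alpha_i$ on a small neighbourhood of $s$. Fix a best response $\hat s_i \in \BR_i(s_{-i})$, so $\Pi_i(\hat s_i, s_{-i}) = \alpha_i + \eta$. The map $s' \mapsto \Pi_i(\hat s_i, s_{-i}')$ is continuous at $s$: indeed $(\hat s_i, s_{-i})$ is a profile in which firm $i$ has positive capacity (best responses have $z_i>0$ by the last sentence of Theorem~\ref{theo:BR1} whenever the best response is essentially unique, and even in the flat case $\BR_i=\{0\}\times[0,C_i]$ one may pick a securing strategy with positive capacity from a nearby firm's contribution), hence $z\neq 0$ at $(\hat s_i,s_{-i})$ and Theorem~\ref{theo:continuity2} applies. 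Thus there is an open $U_i \ni s$ with $\Pi_i(\hat s_i, s_{-i}') \geq \alpha_i$ for all $s'\in U_i\cap S$, meaning $\hat s_i$ secures $\alpha_i$ on $U_i\cap S$. Intersecting the finitely many $U_i$ gives a single open $U\ni s$ on which \emph{every} firm secures its $\alpha_i$.

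For condition~\hyperlink{ii}{(ii)}, I need: for every $s'$ in (a possibly smaller) neighbourhood of $s$ there is a firm $i$ with $s_i' \notin C_i(s',\alpha_i) = \conv B_i(s',\alpha_i)$. The natural candidate is the firm $i_0$ identified above. By continuity of $\Pi_{i_0}$ at $s$ and the strict inequality $\Pi_{i_0}(s) < \alpha_{i_0}$, one gets $\Pi_{i_0}(s_{i_0}', s_{-i_0}') < \alpha_{i_0}$ for all $s'$ near $s$, so $s_{i_0}' \notin B_{i_0}(s',\alpha_{i_0})$. The genuinely delicate point---and the main obstacle---is upgrading ``not in $B_{i_0}$'' to ``not in the \emph{convex hull} $C_{i_0}$'', because $\Pi_{i_0}(\cdot, s_{-i_0}')$ is not concave in firm $i_0$'s own strategy. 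Here I would invoke the structure from Theorem~\ref{theo:BR1}: the superlevel set $B_{i_0}(s',\alpha_{i_0})$ lives in a controlled region of $S_{i_0}$ (essentially a single curve branch, either the $p_i=C_i$ face or the $z_i=\sqrt{a_i/\gamma_i}\,\overline x_i$ relation), so its convex hull stays bounded away from $s_{i_0}'$ provided $s_{i_0}'$ itself carries little weight on that region---which holds because $s_{i_0}'$ is close to $s_{i_0}$ and $s_{i_0}$ gives payoff strictly below $\alpha_{i_0}$. Concretely I would show that all points of $B_{i_0}(s',\alpha_{i_0})$ satisfy a common strict linear inequality that $s_{i_0}'$ violates, giving a separating hyperplane and hence $s_{i_0}'\notin \conv B_{i_0}(s',\alpha_{i_0})$. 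Taking $U$ to be the intersection of the neighbourhoods from both conditions establishes $C$-security of the game on $U\cap S$, i.e.\ $C$-security at $s$.
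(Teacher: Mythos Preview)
Your overall strategy mirrors the paper's: set $\alpha_i$ to be the best-response payoff at $s_{-i}$ minus a small slack, secure via a fixed best response $\hat s_i$ using continuity (Theorem~\ref{theo:continuity2}), and for condition~\hyperlink{ii}{(ii)} exhibit a separating hyperplane for the deviating firm. Condition~\hyperlink{i}{(i)} is fine, though your justification is slightly muddled: you do not need $\hat z_i>0$ for continuity of $\Pi_i$ at $(\hat s_i,s_{-i})$; what matters is that the \emph{full} capacity vector is nonzero, which already follows from $z_{-i}\neq 0$ (at least two firms have $z_j>0$ at $s$).

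The genuine gap is in condition~\hyperlink{ii}{(ii)}. You correctly identify the difficulty---passing from $s_{i_0}'\notin B_{i_0}(s',\alpha_{i_0})$ to $s_{i_0}'\notin \conv B_{i_0}(s',\alpha_{i_0})$---but your proposed fix does not work. Theorem~\ref{theo:BR1} characterizes only the \emph{argmax} $\BR_{i_0}(s_{-i_0}')$, not the entire superlevel set $B_{i_0}(s',\alpha_{i_0})=\{s_{i_0}\in S_{i_0}:\Pi_{i_0}(s_{i_0},s_{-i_0}')\geq\alpha_{i_0}\}$, which is a genuine two-dimensional region and certainly does not ``live on a single curve branch''. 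So you cannot read off a separating linear inequality from Theorem~\ref{theo:BR1}. Moreover you have already committed to a particular $\eta$ (chosen only so that $\Pi_{i_0}(s)<\alpha_{i_0}$) before attempting to locate the hyperplane; this order makes it hard to ensure the superlevel set at \emph{perturbed} profiles $s'$ stays on one side.

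The paper resolves this cleanly, and the idea is worth internalizing. Fix the hyperplane \emph{first}, at the original profile: since $\BR_j(s_{-j})$ is either a singleton or the segment $\{0\}\times[0,C_j]$ (Theorem~\ref{theo:BR1}), and $s_j\notin\BR_j(s_{-j})$, there is a hyperplane $H$ strictly separating $s_j$ from $\BR_j(s_{-j})$. Enlarge $\BR_j(s_{-j})$ to an open set $V$ still separated from $s_j$ by $H$. On the compact set $S_j\setminus V$ the maximum profit $f(s_{-j}')=\max_{\tilde s_j\in S_j\setminus V}\Pi_j(\tilde s_j,s_{-j}')$ is continuous in $s_{-j}'$ by Berge's maximum theorem, and $f(s_{-j})<\beta_j$. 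Only now choose $\eps<\beta_j-f(s_{-j})$. Then for $s'$ near $s$ one has $f(s_{-j}')<\beta_j-\eps$, forcing $B_j(s',\beta_j-\eps)\subset V$; since $H$ also separates a small neighbourhood of $s_j$ from $V$, the same $H$ separates $s_j'$ from $B_j(s',\beta_j-\eps)$, hence from its convex hull. The crucial point you were missing is this use of Berge to trap the \emph{entire} superlevel set inside a fixed region $V$, with the slack $\eps$ chosen \emph{after} $V$.
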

\begin{proof}
We first introduce some notation used in this proof.
Let $S'\subseteq S$ be a subset of the strategy profiles and $i\in N$.
By $S'_i\subseteq S_i$, we denote the projection of $S'$ into $S_i$, the set of firm~$i$'s strategies, and $S'_{-i}\subseteq S_{-i}$ denotes the projection of $S'$ into $S_{-i}=\times_{j\in N\setminus \{i\}}{S_j}$, the set of strategies of the other firms. 
Note that since $z$ has at least two positive entries $z_i$, all strategy profiles $s'=(z',p')$ in a sufficiently small open neighbourhood of $s$ also have at least two entries $z_i'>0$. In the following, whenever we speak of an open set $U$ containing $s$, we implicitly require $U$ small enough to fulfill this property. 
Furthermore, since it is clear that we are only interested in the elements of $U$ which are strategy profiles, we simply write $U$ instead of $U\cap S$. Consequently, $s'\in U$ denotes a strategy profile contained in $U$. 
Now we turn to the actual proof.

Since we assumed that at least two firms have positive capacity at $s=(z,p)$, we get that $z_{-i}\neq 0$ holds for each firm $i$. Thus, by Theorem~\ref{theo:BR1}, each firm $i$ has a best response to $s_{-i}$, and the set of best responses either is a singleton, or consists of all strategies $(0,p_i)$ for $0\leq p_i\leq C_i$.
Since $s$ is not a PNE, there is at least one firm $j$ such that $s_j=(z_j,p_j)$ is not a best response, i.e. either $s_j\neq s_j^*$ for the unique best response $s_j^*$, or $z_j>0$, and all best responses $s_j^*=(z_j^*,p_j^*)$ fulfill $z_j^*=0$. 
In both cases it is clear that there is a hyperplane $H$ which strictly separates $s_j$ from the set of best responses to $s_{-j}$. 

We now turn to the properties in Definition~\ref{def:Csecure}. 
For each firm $i$, let $s_i^*$ be a best response of firm~$i$ to $s_{-i}$ and $\beta_i:=\Pi_i\klammer{s_i^*,s_{-i}}$. We know from Theorem~\ref{theo:continuity2} that $\Pi_i$ is continuous at $(s_i^*,s_{-i})$ for each firm~$i$. 
Therefore, for each $\eps>0$, there is an open set $U(\eps)$ containing $s$ such that $\Pi_i\klammer{s_i^*,s_{-i}'}\geq \beta_i-\eps$ for each $s'\in U(\eps)$ and each firm $i$. 
That is, each firm $i$ can secure $\beta_i-\eps$ on $U(\eps)$. 
Now turn to the second property of Definition~\ref{def:Csecure} and consider firm $j$. We now show that there is an $\eps>0$ and an open set $U\subseteq U(\eps)$ containing $s$, such that for each $s'\in U$, 
the hyperplane $H$ (which strictly separates $s_j$ and $B_j(s,\beta_j)$)  
also strictly separates $s_j'$ and $B_j(s',\beta_j-\eps)$, thus $s_j' \notin C_j(s',\beta_j-\eps)$ (see Figure~\ref{pic:trennung} for an illustration). 
Since each firm $i$ can secure $\beta_i-\eps$ on $U\subseteq U(\eps)$, both properties of Definition~\ref{def:Csecure} are fulfilled, completing the proof. 

\begin{figure}[h!]
\centering
\begin{tikzpicture}[dot/.style={name=#1},
  extended line/.style={shorten >=-#1,shorten <=-#1},
  extended line/.default=1cm]
\node [dot=A] at (-3,2) {};
\node [dot=B] at (3,1) {};

\node[above] (r) at (A) {$H$};

\draw [thick] (A) -- (B);


    \filldraw[ultra thick, fill=green!80!black, opacity=0.4] (0,0) ellipse (2cm and 1cm);
    
     \filldraw[ultra thick, fill=green!80!black, opacity=0.4] (1,3) ellipse (2cm and 1cm);
    
     \node (t1) at (-0.75,0.5) {$s_j^*$};
      \fill (-0.5,0.5) circle[radius=1.5pt];
      \node (t2) at (1.5,2.75) {$s_j$};
       \fill (1.5,3) circle[radius=1.5pt];
       \node (sj') at (0.25,2.75) {$s'_j$};
       \fill (0.5,2.7) circle[radius=1.5pt];
       \node (t1) at (-2.25,0) {$V$};
      \node (t2) at (-1.25,3.5) {$U_j$};
      
       \draw[scale=0.65,fill=gray!30,rotate=-45] (0,0) to [out=140,in=90] (-1,-0.7)
    to [out=-90,in=240] (0.8,-1.15)
    to [out=60,in=-60] (2.1,1.2)
    to [out=120,in=90] (0.3,0.4)
    to [out=-90,in=20] (0.1,0.0)
    to [out=200,in=-40] (0,0);
    \node (B) at (0.5,-0.3) {$B_j(s',\beta_j-\varepsilon)$};
\end{tikzpicture}
\caption{Illustration of the proof construction for the case $B_j(s,\beta_j)=\{s_j^*\}$.
Note that it is not necessary that $s_j^*\in B_j(s',\beta_j-\varepsilon)$.
\vspace{-1em}}\label{pic:trennung}
\end{figure}
To this end, choose an open set $V$ containing $B_j(s,\beta_j)$ such that $H$ strictly separates $s_j$ and~$V$.
Since $S_j\setminus V$ is a compact set and $\Pi_j$ is continuous at $(\widetilde{s}_j,s_{-j})$ for all $\widetilde{s}_j\in S_j\setminus V$ (by Theorem~\ref{theo:continuity2}), we get that
$f(s_{-j}):=\max\{\Pi_j\klammer{\widetilde{s}_j,s_{-j}}: \widetilde{s}_j \in S_j \setminus V\}$ exists. Furthermore, since $B_j(s,\beta_j)\subset V$, we get $f(s_{-j})<\beta_j$.
Let $0<\eps<\beta_j-f(s_{-j})$, thus $f(s_{-j})<\beta_j-\eps$. 
Note that if we consider, for an open neighbourhood $\bar U$ of $s$ and for fixed $s_{-j}'\in \bar U_{-j}$, the problem of maximizing $\Pi_j\klammer{\widetilde{s}_j,s_{-j}'}$ subject to $\widetilde{s}_j \in S_j\setminus V$, Berge's theorem of the maximum~(\cite{berge63})  yields that $f(s_{-j}'):=\max\{\Pi_j\klammer{\widetilde{s}_j,s_{-j}'}: \widetilde{s}_j \in S_j \setminus V\}$ is a continuous function. 
Using the continuity of $f$, there is an open set $U\subseteq U(\eps)$ containing $s$ such that $f(s_{-j}')<\beta_j-\eps$ for all $s_{-j}' \in U_{-j}$. Additionally, let $U$ be small enough such  that $H$ strictly separates $U_j$ and $V$.
Now we have the desired properties: For each $s' \in U$ and for each $\widetilde{s}_j\in S_j\setminus V$, we get $\Pi_j\klammer{\widetilde s_j,s_{-j}'}<\beta_j-\eps$, thus $B_j(s',\beta_j-\eps)\subset V$. Since $s_j'\in U_j$ and $H$ strictly separates $U_j$ and $V$, we get that $H$ strictly separates $s_j'$ and $B_j(s',\beta_j-\eps)$, as desired. 
\end{proof}
It remains to analyze the strategy profiles $s=(z,p)$ with at most one positive $z_i$. Note that these profiles cannot be PNE.
\begin{lemma}\label{lemma:discont}
Let $s=(z,p)\in S$ be a strategy profile such that $z_i>0$ holds for at most one firm~$i$. Then the game is $C$-secure at $s$. 
\end{lemma}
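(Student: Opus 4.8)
The plan is to verify the two conditions of Definition~\ref{def:Csecure} at $s$ for a carefully chosen $\alpha\in\R^n$ and a sufficiently small open neighbourhood $U$ of $s$. The guiding idea is that, since at most one firm has positive capacity at $s$, there is at least one firm $i$ with $z_{-i}=0$; for such a firm the profit supremum equals $C_i$ and is \emph{never} attained (Lemma~\ref{lemma:BR0}), which is precisely why $s$ fails to be a \PNE. Accordingly, I would set $\alpha_i:=C_i-\eps'$ (for a small $\eps'>0$) for every firm $i$ with $z_{-i}=0$, and $\alpha_i:=\Pi_i(s_i^*,s_{-i})-\eps$ for every firm $i$ that still has a best response $s_i^*$ (this second case occurs only when exactly one firm $j_0$ has positive capacity, and then only for the firms $i\neq j_0$, for which $z_{-i}\neq 0$).

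For condition~(i), a firm $i$ with $z_{-i}=0$ secures $\alpha_i$ by playing $(\eps_0,C_i)$ with $\eps_0>0$ so small that $\gamma_i\eps_0<\eps'$: this strategy has positive capacity, so by Theorem~\ref{theo:continuity2} all profit functions are continuous at $((\eps_0,C_i),s_{-i})$, where firm $i$ attracts the entire demand and earns $C_i-\gamma_i\eps_0>\alpha_i$; continuity propagates this bound to a neighbourhood. A firm possessing a best response secures $\alpha_i$ by playing $s_i^*$, again invoking continuity from Theorem~\ref{theo:continuity2} (valid because $z_{j_0}>0$ keeps the capacity vector nonzero). Taking $U$ inside the finite intersection of these neighbourhoods yields condition~(i).

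Condition~(ii) rests on one clean bound: as $x_i\le 1$, every strategy satisfies $\Pi_i(s_i'',s_{-i}')\le p_i''-\gamma_i z_i''$, so $B_i(s',\alpha_i)$ and hence $C_i(s',\alpha_i)=\conv B_i(s',\alpha_i)$ lie in the half-plane $\{(z_i,p_i):p_i-\gamma_i z_i\ge\alpha_i\}$. Thus the hyperplane $p_i-\gamma_i z_i=\alpha_i$ separates $s_i'$ from $C_i(s',\alpha_i)$ whenever $p_i'-\gamma_i z_i'<\alpha_i$. This disposes of most profiles: if exactly one firm $j_0$ has positive capacity at $s$, then $z_{j_0}'$ stays bounded away from $0$ on $U$, so (choosing $\eps'<\tfrac12\gamma_{j_0}z_{j_0}$) one has $p_{j_0}'-\gamma_{j_0}z_{j_0}'\le C_{j_0}-\gamma_{j_0}z_{j_0}'<\alpha_{j_0}$ and $j_0$ is the required firm; and if $z=0$ while some firm prices strictly below its cap, that firm is separated by the same hyperplane. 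Moreover $B_i(s',\alpha_i)\subseteq\{z_i>0\}$ forces $C_i(s',\alpha_i)\subseteq\{z_i>0\}$, so any firm actually playing $z_i'=0$ is separated too.

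The one configuration left uncovered, and the main obstacle, is the profile with $z=0$ and $p_i=C_i$ for all $i$, together with the nearby profiles $s'$ in which \emph{every} firm plays a tiny positive capacity at a near-cap price, where the half-plane bound becomes vacuous. For these I would show that \emph{at most one} firm can have $s_i'\in C_i(s',\alpha_i)$, so that some firm is separated because $n\ge 2$. The decisive estimate exploits the explicit congestion model via the flow function $\overline x_i$ of Lemma~\ref{lemma:BR_x_i}: any deviation in $B_i(s',\alpha_i)$ must attract share at least $1-\eps'/C_i$, and holding firm $i$'s share that close to $1$ against competitors of capacities $z_j'$ forces its capacity above a threshold $z^{(i)}_{\min}$ of order $\tfrac{a_iC_i}{\eps'}\sum_{j\neq i}\tfrac{z_j'}{a_j}$; hence $C_i(s',\alpha_i)\subseteq\{z_i\ge z^{(i)}_{\min}\}$ and $s_i'$ is separated unless $z_i'\ge z^{(i)}_{\min}$. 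If two firms $i_1,i_2$ both met their thresholds, the resulting inequalities $z_{i_1}'\gtrsim\tfrac{C_{i_1}}{\eps'}z_{i_2}'$ and $z_{i_2}'\gtrsim\tfrac{C_{i_2}}{\eps'}z_{i_1}'$ would multiply to $1\gtrsim C_{i_1}C_{i_2}/(\eps')^2$, impossible for small $\eps'$. Converting this capacity-domination heuristic into rigorous lower bounds through the Wardrop-flow equations is the technical heart of the proof.
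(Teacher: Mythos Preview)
Your plan is correct and can be made rigorous. The treatment of the first two situations (exactly one positive capacity, and $z=0$ with some $p_i<C_i$) is essentially the paper's argument, phrased through the convenient half-plane $\{p_i-\gamma_i z_i\ge\alpha_i\}$; the paper instead separates via the single coordinate $z_i$ (respectively $p_i$), and takes $\alpha_j=0$ for all non-focal firms rather than your best-response-minus-$\eps$ choice, which is simpler but not materially different.

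The real divergence is in the corner profile $s=(0,C_1,\ldots,0,C_n)$. The paper does \emph{not} argue that at most one firm can lie in its $C_i$-set. Instead it fixes $\alpha_i$ just below $C_i$ so that every $(z_i^*,p_i^*)\in B_i(s',\alpha_i)$ must draw share $x_i(s_i^*,s'_{-i})>\tfrac12$; then, given $s'$ with all $z_j'>0$, it singles out any firm $i$ with $x_i(s')\le\tfrac12$ and shows, via a short comparison of routing costs using the auxiliary strategy $(z_i',\alpha_i)$, that every achieving strategy satisfies $z_i^*>z_i'$. This separates $s_i'$ by the hyperplane $z_i=z_i'$ directly, with no asymptotics in~$\eps'$.

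Your alternative---lower-bounding $z_i^*$ by a threshold of order $\tfrac{a_iC_i}{\eps'}\sum_{j\ne i}\tfrac{z_j'}{a_j}$ and then multiplying the two threshold inequalities---does go through, but to make it rigorous you must first verify that for $\eps'$ small every competitor with $z_j'>0$ receives positive flow at any achieving deviation (this follows since $z_i^*\le\eps'/\gamma_i$ forces $K^*\ge a_i\gamma_i(1-\eps'/C_i)/\eps'+b_i\to\infty$), and then keep track of the additive constant $\max_{j\ne i}(b_j+p_j')-b_i$ in the denominator of your bound when choosing the neighbourhood. Once that bookkeeping is done, the multiplication gives $1\ge c\,C_{i_1}C_{i_2}/(\eps')^2$ and the contradiction. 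So both routes succeed; the paper's is shorter and avoids the asymptotic calibration, while yours is more symmetric across firms and reuses the same half-plane/threshold mechanism throughout.
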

\begin{proof}
We distinguish between the two cases that there is a firm with positive capacity, or all capacities are zero. 
In the former case, assume that $z_i>0$ for firm $i$, and $z_j=0$ for all $j\neq i$ hold. 
Choose $\alpha_i \in (C_i-\gamma_i z_i, C_i)$ and $0<\eps<\min\{z_i,\frac{\alpha_i+\gamma_iz_i-C_i}{\gamma_i}\}$. Then, there is an open set $U$ containing~$s$ such that firm $i$ can secure $\alpha_i$ on $U\cap S=:S'$ (note that by choosing $U$ sufficiently small, firm $i$ can secure each profit $<C_i$ on $S'$) \emph{and} $|z_i'-z_i|<\eps$ holds for each $s'=(z',p') \in S'$. 
For $j\neq i$, set $\alpha_j:=0$.  It is clear that each firm $j\neq i$ can secure $\alpha_j=0$ on $S'$ (by any strategy with zero capacity). In this way, property~\hyperlink{i}{(i)} of $C$-security is fulfilled. 
For property~\hyperlink{i}{(ii)}, let $s'=(z',p') \in S'$. We show that $s_i'\notin  C_i(s',\alpha_i)$ holds. To this end, note that any strategy $s_i^*=(z_i^*,p_i^*)\in B_i(s',\alpha_i)$, i.e. with $\Pi_i\klammer{s_i^*,s_{-i}'}\geq \alpha_i$, fulfills $z_i^*\leq z_i-\eps$, since, for $z_i^*> z_i-\eps>0$, we get
\[
\Pi_i\klammer{s_i^*,s_{-i}'}=x_i\klammer{s_i^*,s_{-i}'}p_i^*-\gamma_iz_i^* \leq C_i-\gamma_iz_i^*<C_i-\gamma_i(z_i-\eps)<\alpha_i,
\]
where the last inequality follows from the choice of $\eps$. 
Clearly, any strategy in $C_i(s',\alpha_i)$, i.e. any convex combination of strategies in $B_i(s',\alpha_i)$, then also has this property. Since $z_i'>z_i-\eps$, we get $(z_i',p_i')\notin C_i(s',\alpha_i)$, as desired. Thus we showed that the game is $C$-secure at $s$ if one firm has positive capacity.

Now turn to the case that $z_i=0$ for all $i\in N$. 
We distinguish between two further subcases, namely that there is a firm $i$ with $p_i<C_i$, or all prices are at their upper bounds. 
In the former case, let $i\in N$ with $p_i<C_i$, and 
choose $\alpha_i$ with $p_i<\alpha_i<C_i$. 
There is an open set $U$ containing~$s$ such that firm $i$ can secure $\alpha_i$ on $U\cap S=:S'$  \emph{and} $p_i'<\alpha_i$ holds for each $s'=(z',p')\in S'$. By setting $\alpha_j:=0$ for all $j\neq i$, property~\hyperlink{i}{(i)} of $C$-security is fulfilled. 
For property~\hyperlink{i}{(ii)}, let $s'=(z',p') \in S'$. We show that $s_i'\notin  C_i(s',\alpha_i)$ holds. 
Our assumptions about $S'$ yield $p_i'<\alpha_i$. 
On the other hand, any strategy $s_i^*=(z_i^*,p_i^*)$ with $\Pi_i\klammer{s_i^*,s_{-i}'}\geq \alpha_i$ obviously fulfills $p_i^*> \alpha_i$. In particular, this holds for any strategy in $C_i(s',\alpha_i)$, thus showing that $s_i' \notin C_i(s',\alpha_i)$. 
We conclude that the game is $C$-secure at $s$ for the case that all $z_i$ are zero \emph{and}  there is a firm $i$ with $p_i<C_i$. 

It remains to consider the case that $(z_i,p_i)=(0,C_i)$ holds for all firms $i$. 
For each firm $i$, choose $\alpha_i$ with $(1-\frac{a_i}{2(a_i+C_i)})C_i<\alpha_i<C_i$. Note that this implies
\begin{equation}\label{eq:existence1}
\frac{1}{2}<\frac{1}{2}+\frac{C_i-\alpha_i}{a_i}<\frac{\alpha_i}{C_i}.
\end{equation}
There is an open set $U$ containing~$s$ such that each firm $i$ can secure $\alpha_i$ on $U\cap S=:S'$ \emph{and} $z_i'<1$ holds for each profile $s'=(z',p')\in S'$. Thus, property~\hyperlink{i}{(i)} of $C$-security is fulfilled. For property~\hyperlink{ii}{(ii)}, let $s'=(z',p')\in S'$. 
In the following, $s_i^*=(z_i^*,p_i^*)$ denotes a strategy of firm~$i$ with $\Pi_i\klammer{s_i^*,s_{-i}'}\geq \alpha_i>0$. We say that $s_i^*$ \textit{achieves a profit of at least $\alpha_i$}. 
Obviously, $z_i^*>0$ and $p_i^*>\alpha_i$ hold. Furthermore, $x_i\klammer{s_i^*,s_{-i}'}> \frac{1}{2}$, since $\Pi_i\klammer{s_i^*,s_{-i}'}=x_i\klammer{s_i^*,s_{-i}'}p_i^*-\gamma_i z_i^*\geq \alpha_i$ implies
\begin{equation}\label{eq:existence1.5}
x_i\klammer{s_i^*, s'_{-i}} \geq \frac{\alpha_i+\gamma_iz_i^*}{p_i^*}> \frac{\alpha_i}{C_i}>\frac{1}{2},
\end{equation}
where the last inequality is due to \eqref{eq:existence1}. 
If $z_i'=0$ holds for a firm~$i$, then $s_i'\notin C_i(s',\alpha_i)$ holds, since any strategy $(z_i^*,p_i^*)$ achieving a profit of at least $\alpha_i>0$ fulfills $z_i^*>0$. Thus we can assume in the following that $z_i'> 0$ holds for all firms~$i$. Then, since $n\geq 2$, there is at least one firm $i$ with $x_i(s')\leq \frac{1}{2}$. 
We now show that $s_i'\notin C_i(s',\alpha_i)$ holds.  
If $z_i'=0$ or $p_i'\leq \alpha_i$ holds, $s_i'\notin C_i(s',\alpha_i)$ follows, since $z_i^*>0$ and $p_i^*>\alpha_i$ hold for any strategy $(z_i^*,p_i^*)$ achieving a profit of at least $\alpha_i$. 
Thus we can assume in the following that $z_i'>0$ and $p_i'>\alpha_i$ hold. 
If $x_i(s')=0$, the Wardrop equilibrium conditions yield $K(s')\leq b_i+p_i'$. Then, any strategy $\bar{s_i}=(\bar{z_i},\bar{p_i})$ with $\bar{p_i}\geq p_i'$ yields $x\klammer{\bar{s_i},s_{-i}'}=x(s')$, and thus $x_i\klammer{\bar{s_i},s_{-i}'}=0$ and $\Pi_i\klammer{\bar{s_i},s_{-i}'}\leq 0<\alpha_i$ hold. Therefore, $p_i^*<p_i'$ holds for any strategy $(z_i^*,p_i^*)$ achieving a profit of at least $\alpha_i$, and $s_i'\notin C_i(s',\alpha_i)$ follows. 
We can thus assume in the following that $x_i(s')>0$ holds. 
Summarizing, we can assume that the following inequalities are fulfilled: 
\begin{equation}
0<x_i\klammer{s_i',s'_{-i}}\leq \frac{1}{2}, \quad \alpha_i<p_i'\leq C_i \quad \text{and} \quad 0<z_i'<1.
\label{eq:existence2}
\end{equation}
We now show that each strategy $s_i^*=(z_i^*,p_i^*)$ with $\Pi_i\klammer{s_i^*, s_{-i}'}\geq \alpha_i$ fulfills $z_i^*>z_i'$, showing that $s_i'\notin C_i(s',\alpha_i)$ and completing the proof. 

Assume, by contradiction, that there is a strategy $s_i^*=(z_i^*,p_i^*)$ which achieves a profit of at least $\alpha_i$ and fulfills $z_i^*\leq z_i'$.  
For any strategy $\tilde{s_i}\in S_i$, write $x(\tilde{s_i}):=x\klammer{\tilde{s_i},s_{-i}'}$ and $K(\tilde{s_i}):=K\klammer{\tilde{s_i}, s_{-i}'}$. 
Now consider the strategy $\tilde{s_i}:=(z_i',\alpha_i)$. 
Assume, for the moment, that 
\begin{equation}
K(\tilde{s_i})\leq K(s_i^*)<K(s_i') 
\label{eq:existence2.5}
\end{equation}
holds (we prove~\eqref{eq:existence2.5} below). 
Using $K(\tilde{s_i})<K(s_i')$ then implies
$
\frac{a_i}{z_i'}x_i(\tilde{s_i})+b_i+\alpha_i<\frac{a_i}{z_i'}x_i(s_i')+b_i+p_i'. 
$
Reformulating this inequality and using \eqref{eq:existence2} and \eqref{eq:existence1} then yields
\begin{equation}\label{eq:existence3}
x_i(\tilde{s_i})<\frac{z_i'(p_i'-\alpha_i)}{a_i}+x_i(s_i')< \frac{C_i-\alpha_i}{a_i}+\frac{1}{2}<\frac{\alpha_i}{C_i}.
\end{equation}
The inequality $K(\tilde{s_i})\leq K(s_i^*)$ from~\eqref{eq:existence2.5} implies $x_j(\tilde{s_i})\leq x_j(s_i^*)$ for all firms $j \neq i$. Therefore, $x_i(\tilde{s_i})\geq x_i(s_i^*)$ holds. Using $x_i(s_i^*)>\frac{\alpha_i}{C_i}$ from~\eqref{eq:existence1.5} now leads to $x_i(\tilde{s_i})>\frac{\alpha_i}{C_i}$, which contradicts \eqref{eq:existence3}. 
To complete the proof, it remains to show \eqref{eq:existence2.5}. 
The property $K(s_i^*)<K(s_i')$ holds since $K(s_i^*)\geq K(s_i')$ would imply $x_j(s_i^*)\geq x_j(s_i')$ for all firms $j \neq i$, and thus $x_i(s_i^*)\leq x_i(s_i')$, but we know from~\eqref{eq:existence1.5} and~\eqref{eq:existence2} that $x_i(s_i^*)>1/2\geq x_i(s_i')$.  
To prove the other inequality in \eqref{eq:existence2.5}, assume, by contradiction, that $K(\tilde{s_i})> K(s_i^*)$. This implies $x_j(\tilde{s_i})\geq x_j(s_i^*)$ for all firms $j \neq i$, and thus $x_i(\tilde{s_i})\leq x_i(s_i^*)$. Together with $z_i^*\leq z_i'$ and $p_i^*>\alpha_i$, this leads to the following contradiction, and finally completes the proof:
\[
K(\tilde{s_i})\leq \frac{a_i}{z_i'}x_i(\tilde{s_i})+b_i+\alpha_i<\frac{a_i}{z_i^*}x_i(s_i^*)+b_i+p_i^*=K(s_i^*)<K(\tilde{s_i}).
\]
\end{proof}
Using Theorem~\ref{theo:mclennan} together with the Lemmata~\ref{lemma:cont} and~\ref{lemma:discont} now yields the existence of a PNE:
\begin{theorem}\label{theo:existence}
Every capacity and price competition game has a pure Nash equilibrium.
\end{theorem}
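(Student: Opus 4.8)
The plan is to invoke the existence result of McLennan et al.\ (Theorem~\ref{theo:mclennan}), which reduces the problem to showing that the capacity and price competition game is $C$-secure at every strategy profile $s=(z,p)$ that is not a PNE. The groundwork has already been laid: the strategy sets $S_i$ have been redefined to be compact and convex, and the profit functions are bounded by $\pm C_i$, so all hypotheses of Theorem~\ref{theo:mclennan} are met except $C$-security at non-equilibrium profiles, which is exactly what remains.

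First I would observe that any non-PNE profile falls into exactly one of two categories according to how many firms install positive capacity. Either at least two firms $i$ have $z_i>0$, or at most one does. This dichotomy is natural because, by Theorem~\ref{theo:BR1}, the case of at least two positive capacities is precisely the case where $z_{-i}\neq 0$ for \emph{every} firm $i$, so every firm has a (nonempty) best response; whereas the case of at most one positive capacity forces $z_{-i}=0$ for some firm, where the best response correspondence is empty. Moreover, the second category never contains a PNE, so every such profile is automatically a candidate requiring the $C$-security argument.

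Next I would dispatch each case by citing the two lemmata that establish $C$-security directly. For the first category, Lemma~\ref{lemma:cont} shows the game is $C$-secure at $s$: the idea there is to let each firm secure (slightly less than) its best-response profit $\beta_i-\eps$, using continuity of the profit functions at $(s_i^*,s_{-i})$ guaranteed by Theorem~\ref{theo:continuity2}, and then to separate the deviating firm's strategy from its set of near-optimal responses by a hyperplane, with Berge's maximum theorem ensuring the separation survives small perturbations. For the second category, Lemma~\ref{lemma:discont} covers all profiles with at most one positive $z_i$, handling separately the subcases of one firm with positive capacity, all capacities zero with some price below its cap, and all capacities zero with every price at its cap.

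\textbf{The main obstacle} is really absorbed into Lemma~\ref{lemma:discont}, and in particular its final subcase $(z_i,p_i)=(0,C_i)$ for all firms, which is the delicate fixed-point-like profile identified in Subsection~\ref{subsec:4} as the reason naive approaches fail. Once both lemmata are in hand, however, the proof of the theorem itself is immediate: since every non-PNE profile lies in one of the two categories, and the game is $C$-secure at every profile in each category, the game is $C$-secure at every $s$ that is not a PNE, so Theorem~\ref{theo:mclennan} yields a PNE. I would state this synthesis in one or two sentences, referencing Lemmata~\ref{lemma:cont} and~\ref{lemma:discont} together with Theorem~\ref{theo:mclennan}, and conclude.
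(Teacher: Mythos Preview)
Your proposal is correct and follows exactly the paper's approach: the paper's proof of Theorem~\ref{theo:existence} is a one-line synthesis citing Theorem~\ref{theo:mclennan} together with Lemmata~\ref{lemma:cont} and~\ref{lemma:discont}, which is precisely the argument you outline. Your additional commentary on why the case split is natural and where the real difficulty lies (absorbed into Lemma~\ref{lemma:discont}) is accurate and helpful, but the formal proof itself is the same.
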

Note here that in any PNE $(z,p)$, there are at least two firms $i$ with $z_i>0$. 

We conclude this section with an example showing that in general, a capacity and price competition game is not quasiconcave (that is, it is not the case that for all $i\in N$ and all $s_{-i}\in S_{-i}$, the profit $\Pi_i(\cdot,s_{-i})$ is quasiconcave on $S_i$). Thus, the result of Reny~\cite{Reny99} cannot be used to show existence of PNE. 
\begin{example}\label{example_quasiconcave}
Consider a capacity and price competition game with two firms, $N=\{1,2\}$, and $a_1=1$, $a_2=2$, $b_1=b_2=1$, $C_1=C_2=10$, $\gamma_1=\gamma_2=\frac{1}{4}$. 
Assume that firm~1 chooses the strategy $s_1=(z_1,p_1)=(1,1)$, thus $\ell_1(x_1,z_1)+p_1=x_1+2$. 

We show that $\Pi_2(\cdot, s_1)$ is not quasiconcave on $S_2$. 
To this end, note that any strategy~$(z_2,p_2)$ with $z_2=0$, $p_2\in [0,10]$ yields a profit of 0, thus in particular the strategy~$(0,10)$. 
Furthermore, if firm~2 chooses $(z_2,p_2)=(2,1)$, this also results in a profit of $1/2-2/4=0$ (since $\ell_2(x_2,z_2)+p_2=x_2+2$, and thus the induced Wardrop flow is $x_1=x_2=1/2$). 
But the convex combination $\frac{1}{2}(0,10)+\frac{1}{2}(2,1)=(1,11/2)$ yields profit $-1/4<0$ (since $x_1=1, x_2=0$ is the induced Wardrop flow).  
By definition, this shows that $\Pi_2(\cdot, s_1)$ is not quasiconcave on $S_2$. 
\end{example}

\section{Uniqueness of Equilibria}\label{sec:unique}
As we have seen in the last section, a  capacity and price competition game always has a PNE. 
In this section we show that this equilbrium is \textit{essentially} unique. 
With \textit{essentially} we mean that if $(z,p)$ and $(z',p')$ are two different PNE, and $i\in N$ is a firm such that $(z_i,p_i)\neq (z_i',p_i')$, then $z_i=z_i'=0$ holds (and thus $p_i\neq p_i'$). 

For a PNE $s=(z,p)$, denote by $N^+(z,p):=\{i\in N: z_i>0\}$ the set of firms with positive capacity (note that $|N^+(z,p)|\geq 2$ and $N^+(z,p)=\{i\in N: x_i(s)>0\}$). 
For $i\in N^+(z,p)$, let \hyperlink{P_1}{$(\text{P}_i^1)(s_{-i})$} and \hyperlink{P_2}{$(\text{P}_i^2)(s_{-i})$} be the two auxiliary problems from Section~\ref{section:BR}.\footnote{In Section~\ref{section:BR}, we considered \textit{fixed} strategies $s_{-i}$, thus we just used \hyperlink{P_1}{$(\text{P}_i^1)$} and \hyperlink{P_2}{$(\text{P}_i^2)$} for the problems corresponding to $s_{-i}$. In this section, we need to consider different strategy profiles, thus we now write \hyperlink{P_1}{$(\text{P}_i^1)(s_{-i})$} and \hyperlink{P_2}{$(\text{P}_i^2)(s_{-i})$}, as well as $K_i^{\max}(s_{-i})$.}
By Lemma~\ref{lemma:BR2}, the routing cost $K(z,p)$ is an optimal solution of either \hyperlink{P_1}{$(\text{P}_i^1)(s_{-i})$} or \hyperlink{P_2}{$(\text{P}_i^2)(s_{-i})$}. We denote by $N^+_1(z,p)$ the set of firms $i\in N^+(z,p)$ such that $K(z,p)$ is an optimal solution of \hyperlink{P_1}{$(\text{P}_i^1)(s_{-i})$}, and $N^+_2(z,p)$ contains the firms $i\in N^+(z,p)$ such that $K(z,p)$ is an optimal solution of \hyperlink{P_2}{$(\text{P}_i^2)(s_{-i})$}. Thus $N^+(z,p)=N^+_1(z,p)\overset{.}{\cup}N^+_2(z,p)$. Throughout this section, we use the simplified notation $N'\setminus i$ instead of $N' \setminus \{i\}$ for any subset $N'\subseteq N$ of firms and  $i\in N'$.

Note that the proofs in this section are similar to the proofs that Johari et al.~\cite{JohariWR10} use to derive their uniqueness results. However, since our model includes price caps, some new ideas are required, in particular the decomposition of $N^+(z,p)$ in $N^+_1(z,p)\overset{.}{\cup}N^+_2(z,p)$. 

We first derive further necessary equilibrium conditions (by using the KKT conditions) which will become useful in the following analysis. 
\begin{lemma}\label{lemma:KKT}
Let $s=(z,p)$ be a PNE with $x:=x(z,p)$ and $K:=K(z,p)$. Let $i \in N^+:=N^+(z,p)$. 
If $p_i<C_i$ holds, then $z_i=\sqrt{\frac{a_i}{\gamma_i}}x_i$, $p_i=\frac{x_i}{\sum_{j\in N^+\setminus i}{\frac{z_j}{a_j}}}+\sqrt{a_i\gamma_i}$, and if $p_i=C_i$, then $z_i=\frac{a_ix_i}{K-b_i-C_i}$ and $\frac{C_i}{1+\frac{a_i}{z_i}\cdot \sum_{j\in N^+\setminus i}{\frac{z_j}{a_j}}}=\frac{\gamma_iz_i^2}{a_ix_i\sum_{j\in N^+\setminus i}{\frac{z_j}{a_j}}}$.
\end{lemma}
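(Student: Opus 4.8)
The plan is to exploit that in a PNE with $z_i>0$ the strategy $(z_i,p_i)$ is an optimal solution of firm~$i$'s best-response problem \hyperlink{P_i}{$(\text{P}_i)$} with positive capacity, so that Lemma~\ref{lemma:BR2} applies and fixes which of the two auxiliary problems is relevant. Since $i\in N^+$ we have $x_i>0$; moreover a positive-capacity strategy can only belong to a \emph{singleton} best-response set in the characterization of Theorem~\ref{theo:BR1}, so $(z_i,p_i)$ is the unique best response to $s_{-i}$ and the last statement of Theorem~\ref{theo:BR1} gives $\Pi_i(s)>0$; I will use this positivity repeatedly to rule out boundary optima. If $p_i<C_i$ we must be in case~\ref{lemma:BR2_1} of Lemma~\ref{lemma:BR2}, which already yields $z_i=\sqrt{a_i/\gamma_i}\,\overline x_i(K)=\sqrt{a_i/\gamma_i}\,x_i$ (using statement~\ref{lemma:BR_x_i_1} of Lemma~\ref{lemma:BR_x_i}); if $p_i=C_i$ we are in case~\ref{lemma:BR2_2}, giving $z_i=\tfrac{a_i\overline x_i(K)}{K-b_i-C_i}=\tfrac{a_ix_i}{K-b_i-C_i}$. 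Thus the first equation in each case is immediate, and the real work is the second equation.

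The second equation is in each case the first-order (KKT) stationarity condition of the corresponding one-dimensional problem. First I would record that, since every competitor $j\in N^+\setminus i$ carries positive flow, $b_j+p_j=K-\tfrac{a_j}{z_j}x_j<K$ holds strictly, so that $N(K)=N^+\setminus i$ and this index set stays constant for $K$ in a neighbourhood of the equilibrium cost. Consequently $\overline x_i$, and hence $f_i^1$ and $f_i^2$, is differentiable at $K$, with $\overline x_i(K)=x_i$ and $\overline x_i'(K)=-\sum_{j\in N^+\setminus i}\tfrac{z_j}{a_j}$.

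Next I would certify that $K$ is an \emph{interior} maximiser of the relevant problem, so that the ordinary first-order condition holds with vanishing derivative. For \hyperlink{P_1}{$(\text{P}_i^1)$}: the value $K=2\sqrt{a_i\gamma_i}+b_i$ is excluded because it gives $f_i^1=0=\Pi_i(s)$, contradicting $\Pi_i(s)>0$, while $p_i<C_i$ forces $K<\sqrt{a_i\gamma_i}+b_i+C_i$ and $x_i>0$ forces $K<K_i^{\max}$, so $K$ lies strictly inside $I_1$. For \hyperlink{P_2}{$(\text{P}_i^2)$}: the upper endpoint $K_i^{\max}$ is excluded by $x_i>0$, and the left endpoint is either not part of the (open) feasible interval when $C_i>\sqrt{a_i\gamma_i}$ or yields $f_i^2=0$ when $C_i\le\sqrt{a_i\gamma_i}$, again contradicting $\Pi_i(s)>0$. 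Hence in the first case $(f_i^1)'(K)=\overline x_i'(K)\,(K-b_i-2\sqrt{a_i\gamma_i})+\overline x_i(K)=0$, which upon inserting the two expressions above and recalling $p_i=K-\sqrt{a_i\gamma_i}-b_i$ rearranges directly to $p_i=\tfrac{x_i}{\sum_{j\in N^+\setminus i}z_j/a_j}+\sqrt{a_i\gamma_i}$. In the second case $(f_i^2)'(K)=\overline x_i'(K)\bigl(C_i-\tfrac{a_i\gamma_i}{K-b_i-C_i}\bigr)+\overline x_i(K)\tfrac{a_i\gamma_i}{(K-b_i-C_i)^2}=0$; substituting $K-b_i-C_i=\tfrac{a_ix_i}{z_i}$ turns this stationarity condition, after a short manipulation, into the claimed identity $\tfrac{C_i}{1+\frac{a_i}{z_i}\sum_{j\in N^+\setminus i}z_j/a_j}=\tfrac{\gamma_iz_i^2}{a_ix_i\sum_{j\in N^+\setminus i}z_j/a_j}$.

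The main obstacle I anticipate is precisely this boundary analysis: the auxiliary objectives are only piecewise smooth and are maximised over half-open or closed intervals, so the equilibrium conditions genuinely rely on the derivative vanishing, which is legitimate only after one has ruled out every endpoint. The positivity $\Pi_i(s)>0$, together with $x_i>0$ and the case hypothesis on $p_i$, is exactly what makes this exclusion go through, and the local constancy of $N(K)$ is what licenses differentiating the objective at $K$; the remaining algebra is routine.
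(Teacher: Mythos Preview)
Your approach is different from the paper's and mostly works. The paper derives the equations by writing firm~$i$'s best-response problem as a constrained optimization in the variables $(z_i',p_i',(x_j')_{j\in N^+})$, verifying LICQ, and reading off the KKT conditions with a multiplier $\mu\ge 0$ on the constraint $p_i'\le C_i$; the case split $p_i<C_i$ versus $p_i=C_i$ then corresponds to $\mu=0$ versus $\mu\ge 0$, and the claimed formulas drop out of the four stationarity equations. Your route instead stays with the one-dimensional auxiliary problems from Section~\ref{section:BR} and extracts the second equation in each case as the vanishing of $(f_i^1)'$ or $(f_i^2)'$ at $K$. This is arguably more in keeping with the machinery already built and avoids setting up a new constrained problem; the paper's KKT route, on the other hand, handles the price-cap boundary uniformly via the multiplier and so sidesteps the interiority analysis entirely.

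There is one gap in your boundary analysis. The implication ``$p_i=C_i$ $\Rightarrow$ case~\ref{lemma:BR2_2} of Lemma~\ref{lemma:BR2}'' is not quite right: case~\ref{lemma:BR2_1} also allows $p_i=C_i$, namely precisely when $K=\sqrt{a_i\gamma_i}+b_i+C_i$, and then $K$ is the \emph{right endpoint} of $I_1$ rather than an interior point of $I_2$ (indeed $K\notin I_2$ at all, since the first constraint of \hyperlink{P_2}{$(\text{P}_i^2)$} is strict). In this borderline situation one needs an extra step to force the derivative to vanish. One clean fix: by Lemma~\ref{lemma:BR1} every $K'\in I_2$ produces a feasible strategy with profit $f_i^2(K')$, so $f_i^2(K')\le\Pi_i(s)=f_i^1(K)=f_i^2(K)$ (the last equality is the computation in the proof of Lemma~\ref{lemma:BR3}.\ref{lemma:BR3_3}); hence $K$ also maximises $f_i^2$ over $[K,K_i^{\max})$, giving $(f_i^2)'(K)\le 0$. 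Combined with $(f_i^1)'(K)\ge 0$ from right-endpoint optimality in $I_1$ and the identity $(f_i^1)'(K)=(f_i^2)'(K)$ at $K=\sqrt{a_i\gamma_i}+b_i+C_i$ (which you can check directly, using that $N(\cdot)$ does not jump there), you obtain $(f_i^2)'(K)=0$ and your algebra goes through unchanged. The $z_i$ formula is unaffected because the case-\ref{lemma:BR2_1} and case-\ref{lemma:BR2_2} expressions for $z_i$ coincide at this value of $K$.
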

\begin{proof}
Since $(z,p)$ is a PNE, $p_j>0$ and $x_j>0$ holds for all $j \in N^+$, and $x_k=0$ holds for $k \notin N^+$. Furthermore, $(z_i,p_i)$ is a best response of firm $i$ to $s_{-i}$ and $K=\frac{a_jx_j}{z_j}+b_j+p_j$ holds for all $j \in N^+$.
Altogether we get that $(z_i,p_i,(x_j)_{j \in N^+})$ is an optimal solution for the following optimization problem (with variables $(z_i',p_i',(x_j')_{j \in N^+})$):
\begin{align*}
\qquad\qquad \max & \quad   x_i'p_i'-\gamma_iz_i'\\
\qquad\qquad \text{ subject to } & \quad 0 \leq p_i' \leq C_i\\
& \quad 0 < z_i' \\
& \quad \sum_{j \in N^+}{x_j'}=1 \\
& \quad x_j' \geq 0 \ \forall j \in N^+ \\
& \quad \frac{a_ix_i'}{z_i'}+b_i+p_i'=\frac{a_jx_j'}{z_j}+b_j+p_j \ \forall j \in N^+\setminus i.
\end{align*} 
It is easy to show that the LICQ is fulfilled for $(z_i,p_i,(x_j)_{j \in N^+})=(z_i,p_i,x_i,(x_j)_{j \in N^+ \setminus i})$ (see the end of the proof), thus the KKT conditions are fulfilled. 
We get the following equations:
\begin{align}
\gamma_i-\frac{a_ix_i}{z_i^2}\sum_{j \in N^+\setminus i}{\lambda_j} &=0 \tag{KKT1}\label{eq:1} \\
-x_i+\mu+\sum_{j \in N^+\setminus i}{\lambda_j} &=0 \tag{KKT2}\label{eq:2}\\
-p_i+\lambda+\frac{a_i}{z_i}\sum_{j \in N^+\setminus i}{\lambda_j} &=0 \tag{KKT3}\label{eq:3}\\
\lambda-\lambda_j\frac{a_j}{z_j}&=0 \ \forall j \in N^+\setminus i. \tag{KKT4}\label{eq:4}
\end{align}
We now distinguish between the two cases $p_i<C_i$ and $p_i=C_i$. 

In the first case, $\mu=0$ holds, and \eqref{eq:2} yields $x_i=\sum_{j \in N^+\setminus i}{\lambda_j}$. Using this, \eqref{eq:1} yields $z_i=\sqrt{\frac{a_i}{\gamma_i}}x_i$. Plugging this in \eqref{eq:3} leads to $p_i=\lambda+\frac{a_ix_i}{z_i}=\lambda+\sqrt{a_i\gamma_i}$. Using \eqref{eq:4}, i.e. $\lambda_j=\lambda\frac{z_j}{a_j}$ for all $j \in N^+ \setminus i$, together with \eqref{eq:2} yields $x_i=\lambda \sum_{j \in N^+\setminus i}{\frac{z_j}{a_j}}$, or equivalently, $\lambda=\frac{x_i}{\sum_{j \in N^+\setminus i}{\frac{z_j}{a_j}}}$. This shows $p_i=\frac{x_i}{\sum_{j \in N^+\setminus i}{\frac{z_j}{a_j}}}+\sqrt{a_i\gamma_i}$, as required. 

The other case is $p_i=C_i$. The formula for $z_i$ follows from $K=\frac{a_ix_i}{z_i}+b_i+C_i$. Plugging $\lambda_j=\lambda\frac{z_j}{a_j}$ for all $j \in N^+ \setminus i$ in \eqref{eq:1} and \eqref{eq:3} yields 
\[
\lambda=\frac{\gamma_iz_i^2}{a_ix_i\sum_{j \in N^+\setminus i}{\frac{z_j}{a_j}}}
\]
and 
\[
\lambda=C_i-\frac{a_i}{z_i}\lambda \cdot \sum_{j \in N^+\setminus i}{\frac{z_j}{a_j}} \Leftrightarrow \lambda=\frac{C_i}{1+\frac{a_i}{z_i}\sum_{j \in N^+\setminus i}{\frac{z_j}{a_j}}},
\]
which shows the desired equality.

It remains to show that the LICQ is fulfilled for $(z_i,p_i,(x_j)_{j \in N^+})=(z_i,p_i,x_i,(x_j)_{j \in N^+ \setminus i})$. 
Besides the gradients of the equalities, we have to take into account the gradient of the inequality $p_i' \leq C_i$ (all other inequalities are not tight). 
Thus consider 
\[
\renewcommand{\arraystretch}{1.25}
\alpha_1\cdot \begin{pmatrix} 0 \\ 1 \\ 0 \\ 0 \\[0.5em] \twelvevdots \\ 0 \end{pmatrix} 
+ \alpha_2 \cdot \begin{pmatrix} 0 \\ 0 \\ 1 \\ 1 \\[0.5em] \twelvevdots \\ 1 \end{pmatrix} 
+ \sum_{j\in N^+\setminus i}{\alpha_j \cdot \begin{pmatrix}-\frac{a_ix_i}{z_i^2} \\ 1 \\ \frac{a_i}{z_i} \\ 0 \\ \vdots \\ 0 \\ -\frac{a_j}{z_j} \\ 0 \\ \vdots \\ 0 \end{pmatrix}}=0,
\]
where all vectors are in $\R^{2+|N^+|}$. For $j\in N^+ \setminus i$ and the corresponding vector in the sum, the entry corresponding to $j$ is $-a_j/z_j$, and the entries corresponding to $N^+ \setminus \{i,j\}$ are all zero. 
We show that all $\alpha_k$ are zero. 
Considering the first row yields $-\frac{a_ix_i}{z_i^2}\cdot \sum_{j \in N^+ \setminus i}{\alpha_j}=0$ which yields  $\sum_{j \in N^+ \setminus i}{\alpha_j}=0$ since $-\frac{a_ix_i}{z_i^2}<0$. 
Using this, we get from the second row $\alpha_1+\sum_{j \in N^+ \setminus i}{\alpha_j}=\alpha_1=0$. 
The third row yields $\alpha_2+\frac{a_i}{z_i}\cdot \sum_{j \in N^+ \setminus i}{\alpha_j}=\alpha_2=0$. 
Finally, any row corresponding to $j\in N^+ \setminus \{i\}$ reads $\alpha_2+\alpha_j \cdot (-\frac{a_j}{z_j})=\alpha_j \cdot (-\frac{a_j}{z_j})=0$ and this shows $\alpha_j=0$ since $\frac{a_j}{z_j}>0$.
\end{proof}
In the next lemma, we introduce two functions $\Gamma_i^1$ and $\Gamma_i^2$ for each firm $i$ and derive useful properties of these functions. 
\begin{lemma}\label{lemma:unique1}
For each $i \in N$, define
\[
\Gamma_i^1: (\sqrt{a_i\gamma_i}+b_i,\infty) \rightarrow \R, \ \Gamma_i^1(\kappa):=\frac{\sqrt{a_i\gamma_i}}{\kappa -\sqrt{a_i\gamma_i}-b_i}
\] and 
\[
\Gamma_i^2: (b_i+C_i,\infty) \rightarrow \R, \ \Gamma_i^2(\kappa):=\frac{\frac{a_i\gamma_i}{C_i}}{\kappa -b_i-C_i}.
\] 
Furthermore, let $s=(z,p)$ be a PNE with $x:=x(z,p)$, cost $K:=K(z,p)$, and $N^+:=N^+(z,p)$ with $N^+_1:=N^+_1(z,p)$, $N^+_2:=N^+_2(z,p)$. 
Then: 
\begin{enumerate}[1.]
	\item\label{prop1} $\Gamma_i^1$ and $\Gamma_i^2$ are strictly decreasing functions.
	\item\label{prop2} If $i \in N^+_1$, then $\Gamma_i^1(K)=1-\frac{\frac{z_i}{a_i}}{\sum_{j\in N^+}{\frac{z_j}{a_j}}}<1$.
	\item\label{prop3} If $i \in N^+_2$, then $\Gamma_i^2(K)=1-\frac{\frac{z_i}{a_i}}{\sum_{j\in N^+}{\frac{z_j}{a_j}}}<1$.
	\item\label{prop4} $\sum_{i \in N^+_1}{\Gamma_i^1(K)}+\sum_{i \in N^+_2}{\Gamma_i^2(K)}=|N^+|-1$.
	\item\label{prop5} If $i\in N_1^+$ and there is a (different) PNE $s'=(z',p')$ with $i\in N^+_2(z',p')$, then $K<K':=K(z',p')$ and $\Gamma_i^1(K)>\Gamma_i^2(K')$.
\end{enumerate}
\end{lemma}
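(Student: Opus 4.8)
The plan is to dispatch the five statements in the order $1,5,2,3,4$: statements~1 and~5 need only the functional shapes and the feasibility ranges of the auxiliary problems, statements~2 and~3 carry the real computational weight, and statement~4 is a one-line consequence of them.

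Statement~1 is immediate: both $\Gamma_i^1$ and $\Gamma_i^2$ have the form $\kappa\mapsto c/(\kappa-d)$ with $c>0$ (as $a_i,\gamma_i,C_i>0$) on a domain where $\kappa-d>0$, so the denominator is positive and strictly increasing and the quotient is strictly decreasing. For statement~5 I would argue purely from feasibility, independently of statements~2--4. Since $i\in N_1^+$, the cost $K$ is feasible for $(\text{P}_i^1)(s_{-i})$, whence $K\le\sqrt{a_i\gamma_i}+b_i+C_i$; since $i\in N_2^+(z',p')$, the cost $K'$ is feasible for $(\text{P}_i^2)(s'_{-i})$, whence $K'>\sqrt{a_i\gamma_i}+b_i+C_i$. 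These bounds are intrinsic to firm~$i$ and do not depend on the opponents' profile, so $K\le\sqrt{a_i\gamma_i}+b_i+C_i<K'$ yields $K<K'$. The inequality of the $\Gamma$-values then follows by feeding these bounds into the (by statement~1 monotone) functions: $\Gamma_i^1(K)\ge\frac{\sqrt{a_i\gamma_i}}{C_i}$ because $K-\sqrt{a_i\gamma_i}-b_i\le C_i$, while $\Gamma_i^2(K')<\frac{a_i\gamma_i/C_i}{\sqrt{a_i\gamma_i}}=\frac{\sqrt{a_i\gamma_i}}{C_i}$ because $K'-b_i-C_i>\sqrt{a_i\gamma_i}$, so $\Gamma_i^1(K)>\Gamma_i^2(K')$.

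The heart of the proof is statements~2 and~3, where I would combine the structural description of the best response (Lemma~\ref{lemma:BR2}) with the stationarity conditions of Lemma~\ref{lemma:KKT}. Writing $T:=\sum_{j\in N^+}z_j/a_j$ and $T_{-i}:=T-z_i/a_i$, the target quantity is $1-\frac{z_i/a_i}{T}=\frac{T_{-i}}{T}$. For $i\in N_1^+$ the best-response form gives $z_i=\sqrt{a_i/\gamma_i}\,x_i$ and $p_i=K-\sqrt{a_i\gamma_i}-b_i$, so $\Gamma_i^1(K)=\frac{\sqrt{a_i\gamma_i}}{p_i}$; using $z_i/a_i=x_i/\sqrt{a_i\gamma_i}$, the identity $\Gamma_i^1(K)=T_{-i}/T$ reduces to $p_i-\sqrt{a_i\gamma_i}=x_i/T_{-i}$, which is exactly the KKT equation of Lemma~\ref{lemma:KKT} in the case $p_i<C_i$. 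For $i\in N_2^+$ the form gives $p_i=C_i$ and $z_i=a_ix_i/(K-b_i-C_i)$; substituting $x_i=z_i(K-b_i-C_i)/a_i$ into the $p_i=C_i$ KKT equation and simplifying turns $\Gamma_i^2(K)=\frac{a_i\gamma_i/C_i}{K-b_i-C_i}$ into $T_{-i}/T$. In both cases $<1$ is clear since $z_i/a_i>0$. Statement~4 is then one line: by the disjoint decomposition $N^+=N_1^+\overset{.}{\cup}N_2^+$ and statements~2,3, the left-hand side equals $\sum_{i\in N^+}\bigl(1-\frac{z_i/a_i}{T}\bigr)=|N^+|-\frac{1}{T}\sum_{i\in N^+}z_i/a_i=|N^+|-1$.

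The step I expect to be the main obstacle is the boundary case $p_i=C_i$ for $i\in N_1^+$ in statement~2, where the ``$p_i<C_i$'' KKT equation is unavailable. I would resolve it by noting that $p_i=C_i$ together with $p_i=K-\sqrt{a_i\gamma_i}-b_i$ forces $K=\sqrt{a_i\gamma_i}+b_i+C_i$, at which value $\Gamma_i^1(K)=\Gamma_i^2(K)=\frac{\sqrt{a_i\gamma_i}}{C_i}$; hence the computation already carried out for statement~3 (which uses the ``$p_i=C_i$'' KKT equation) delivers $\Gamma_i^1(K)=\Gamma_i^2(K)=T_{-i}/T$ for this value as well, so statement~2 holds without exception. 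Apart from this case split, the remaining difficulty is purely the algebraic bookkeeping verifying that each KKT equation collapses to $T_{-i}/T$, which is routine once the substitutions $z_i=\sqrt{a_i/\gamma_i}\,x_i$ and $z_i=a_ix_i/(K-b_i-C_i)$ are in place.
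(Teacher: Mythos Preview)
Your proposal is correct and follows essentially the same skeleton as the paper (statements~2 and~3 via Lemma~\ref{lemma:BR2} plus Lemma~\ref{lemma:KKT}, statement~4 by summing, statement~5 via the feasibility windows of $(\text{P}_i^1)$ and $(\text{P}_i^2)$), but two of your choices are cleaner than the paper's.

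First, for statement~5 the paper invokes optimality of $K'$ for $(\text{P}_i^2)(s'_{-i})$ together with Lemma~\ref{lemma:BR3} to compare objective values and eventually extract $K'-b_i-C_i>\sqrt{a_i\gamma_i}$; you observe that this inequality is already the first feasibility constraint of $(\text{P}_i^2)$, so the sandwich $\Gamma_i^1(K)\ge\sqrt{a_i\gamma_i}/C_i>\Gamma_i^2(K')$ follows from feasibility alone, bypassing the optimality detour entirely. Second, for the boundary case $p_i=C_i$ in statement~2 the paper merely asserts that the ``$p_i<C_i$'' equation of Lemma~\ref{lemma:KKT} still applies (which is true because the multiplier $\mu$ happens to vanish when $z_i=\sqrt{a_i/\gamma_i}\,x_i$, though this is not spelled out); your route via $\Gamma_i^1(K)=\Gamma_i^2(K)$ at $K=\sqrt{a_i\gamma_i}+b_i+C_i$ and the already-verified statement~3 computation is a more transparent way to close that gap.
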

\begin{proof}
Statement~\ref{prop1} is clear from the definitions of $\Gamma_i^1$ and $\Gamma_i^2$, so turn to statement~\ref{prop2} and let $i\in N^+_1$. 
Lemma~\ref{lemma:BR2} yields $z_i=\sqrt{\frac{a_i}{\gamma_i}}x_i$, $p_i=K-\sqrt{a_i\gamma_i}-b_i$. 
Using Lemma~\ref{lemma:KKT}, we get $K-\sqrt{a_i\gamma_i}-b_i=\frac{x_i}{\sum_{j\in N^+\setminus i}{\frac{z_j}{a_j}}}+\sqrt{a_i\gamma_i}$, which is equivalent to
\[
K=\frac{x_i}{\sum_{j\in N^+\setminus i}{\frac{z_j}{a_j}}}+2\sqrt{a_i\gamma_i}+b_i.
\]
(Note that $p_i=C_i$ is possible (namely if $K=\sqrt{a_i\gamma_i}+b_i+C_i$), but Lemma~\ref{lemma:KKT} yields the stated equality also for this case.)
Now define
\[
B:={\sum_{j\in N^+}{\frac{z_j}{a_j}}}.
\]
Using $z_i=\sqrt{\frac{a_i}{\gamma_i}}x_i$, we rewrite $K$ as follows:
\begin{align*}
K &= \frac{x_i}{B-{\frac{z_i}{a_i}}}+2\sqrt{a_i\gamma_i}+b_i \\
&= \frac{x_i}{B-{\frac{x_i}{\sqrt{a_i\gamma_i}}}}+2\sqrt{a_i\gamma_i}+b_i \\
&= \frac{x_i+(B-\frac{x_i}{\sqrt{a_i\gamma_i}})\sqrt{a_i\gamma_i}}{B-\frac{x_i}{\sqrt{a_i\gamma_i}}}+\sqrt{a_i\gamma_i}+b_i \\
&=\frac{B\sqrt{a_i\gamma_i}}{B-\frac{x_i}{\sqrt{a_i\gamma_i}}}+\sqrt{a_i\gamma_i}+b_i 
\end{align*}
Using this we get statement~\ref{prop2}:
\[
\Gamma_i^1(K)=\frac{B-\frac{x_i}{\sqrt{a_i\gamma_i}}}{B}=1-\frac{\frac{z_i}{a_i}}{B}<1.
\]
For statement~\ref{prop3}, let $i \in N^+_2$. 
Lemma~\ref{lemma:BR2} and Lemma~\ref{lemma:KKT} imply
\[
\frac{C_i}{1+\frac{a_i}{z_i}\cdot \sum_{j\in N^+\setminus i}{\frac{z_j}{a_j}}}=\frac{\gamma_iz_i^2}{a_ix_i\sum_{j\in N^+\setminus i}{\frac{z_j}{a_j}}}.
\]
Rearranging and using the definition of $B$ yields 
\[
\frac{a_ix_i}{z_i}=
\frac{\gamma_iz_i(1+\frac{a_i}{z_i}\cdot \sum_{j\in N^+\setminus i}{\frac{z_j}{a_j}})}{C_i\cdot \sum_{j\in N^+\setminus i}{\frac{z_j}{a_j}}}=
\frac{\gamma_iz_i(1+\frac{a_i}{z_i}\cdot (B-\frac{z_i}{a_i}))}{C_i\cdot (B-\frac{z_i}{a_i})}=
\frac{\gamma_ia_iB}{C_i\cdot (B-\frac{z_i}{a_i})}.
\]
Using $K=\frac{a_ix_i}{z_i}+b_i+C_i$ then yields
\[
K=\frac{\gamma_ia_iB}{C_i\cdot (B-\frac{z_i}{a_i})}+b_i+C_i
\]
and thus statement~\ref{prop3} follows:
\[
\Gamma_i^2(K)=\frac{B-\frac{z_i}{a_i}}{B}=1-\frac{\frac{z_i}{a_i}}{B}<1.
\]
Statement~\ref{prop4} now follows from the statements~\ref{prop2} and~\ref{prop3}:
\begin{align*}
\sum_{i\in N_1^+}{\Gamma_i^1(K)}+\sum_{i\in N^+_2}{\Gamma_i^2(K)}=&\sum_{i\in N_1^+}{\left(1-\frac{\frac{z_i}{a_i}}{B}\right)}+\sum_{i\in N^+_2}{\left(1-\frac{\frac{z_i}{a_i}}{B}\right)} \\
=&
|N^+|-\frac{1}{B}\cdot \sum_{j\in N^+}{\frac{z_j}{a_j}}=|N^+|-1.
\end{align*}
It remains to show statement~\ref{prop5}. Let $i\in N_1^+\cap N_2^+(z',p')$. Since $i\in N_1^+$, the cost~$K$ is in particular feasible for \hyperlink{P_1}{$(\text{P}_i^1)(s_{-i})$}, thus $2\sqrt{a_i\gamma_i}+b_i \leq K\leq \sqrt{a_i\gamma_i}+b_i+C_i$ holds. Analogously, using $i\in N_2^+(z',p')$, the cost $K'$ is feasible for \hyperlink{P_2}{$(\text{P}_i^2)(s_{-i}')$}, therefore $\sqrt{a_i\gamma_i}+b_i+C_i<K'<K_i^{\max}(s'_{-i})$. 
Together we get $K\leq \sqrt{a_i\gamma_i}+b_i+C_i< K'$. 
It remains to show $\Gamma_i^1(K)>\Gamma_i^2(K')$. 
By definition of $N_2^+(z',p')$, the cost~$K'$ is an optimal solution for problem \hyperlink{P_2}{$(\text{P}_i^2)(s_{-i}')$} and in particular (see~\ref{lemma:BR3_1} and~\ref{lemma:BR3_3}  of Lemma~\ref{lemma:BR3}) yields a better objective function value than $\sqrt{a_i\gamma_i}+b_i+C_i$ in \hyperlink{P_1}{$(\text{P}_i^1)(s_{-i}')$} (note that $\sqrt{a_i\gamma_i}+b_i+C_i$ is feasible for \hyperlink{P_1}{$(\text{P}_i^1)(s_{-i}')$} since $2\sqrt{a_i\gamma_i}+b_i \leq K\leq \sqrt{a_i\gamma_i}+b_i+C_i<K'<K_i^{\max}(s'_{-i})$). 
If $\overline x_i$ denotes the function occurring in the definitions of \hyperlink{P_1}{$(\text{P}_i^1)(s_{-i}')$} and \hyperlink{P_2}{$(\text{P}_i^2)(s_{-i}')$}, we thus get
\begin{align*}
\overline x_i(\sqrt{a_i\gamma_i}+b_i+C_i)\cdot(C_i-\sqrt{a_i\gamma_i})&<\overline x_i(K')\cdot(C_i-\frac{a_i\gamma_i}{K'-b_i-C_i}) \\
&\leq \overline x_i(\sqrt{a_i\gamma_i}+b_i+C_i)\cdot(C_i-\frac{a_i\gamma_i}{K'-b_i-C_i}),
\end{align*}
where the last inequality follows from $\sqrt{a_i\gamma_i}+b_i+C_i<K'$ and the fact that 
$\overline x_i$ is a decreasing function. 
Since $\overline x_i(\sqrt{a_i\gamma_i}+b_i+C_i)>0$, we get 
\[
C_i-\sqrt{a_i\gamma_i}<C_i-\frac{a_i\gamma_i}{K'-b_i-C_i},
\] 
or equivalently
\[
\sqrt{a_i\gamma_i} > \frac{a_i\gamma_i}{K'-b_i-C_i},
\]
which yields 
\[
\frac{\sqrt{a_i\gamma_i}}{C_i} > \frac{\frac{a_i\gamma_i}{C_i}}{K'-b_i-C_i}=\Gamma_i^2(K').
\]
Note that $2\sqrt{a_i\gamma_i}+b_i\leq K\leq \sqrt{a_i\gamma_i}+b_i+C_i$ and thus 
\[
\frac{1}{K-\sqrt{a_i\gamma_i}-b_i}\geq \frac{1}{C_i},
\]
which leads to
\[
\Gamma_i^1(K)=\frac{\sqrt{a_i\gamma_i}}{K-\sqrt{a_i\gamma_i}-b_i}\geq \frac{\sqrt{a_i\gamma_i}}{C_i}>\Gamma_i^2(K'),
\]
as desired.
\end{proof}
Now we turn to the desired uniqueness of the equilibrium. 
We start with the following lemma.
\begin{lemma}\label{lemma:unique2}
For a fixed subset $N^+$ of the firms and a fixed disjoint decomposition $N^+=N_1^+\overset{.}{\cup}N_2^+$, there is essentially at most one PNE $(z,p)$ such that $N^+(z,p)=N^+$, $N_1^+(z,p)=N_1^+$ and $N_2^+(z,p)=N_2^+$. 
\end{lemma}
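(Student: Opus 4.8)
The plan is to show that, for any PNE $s=(z,p)$ with the prescribed support $N^+(z,p)=N^+$ and decomposition $N^+=N_1^+\overset{.}{\cup}N_2^+$, the routing cost $K$, the aggregate quantity $B:=\sum_{j\in N^+}\frac{z_j}{a_j}$, and then every individual capacity, flow and price of a firm in $N^+$ are forced to a single value. Two such PNE can then differ only in the prices of firms outside $N^+$ (which all have zero capacity), which is exactly the asserted essential uniqueness. The whole argument rides on the functions $\Gamma_i^1,\Gamma_i^2$ and the identities in Lemma~\ref{lemma:unique1}.

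First I would pin down the routing cost. On the intersection of the domains of the $\Gamma_i^1$ ($i\in N_1^+$) and $\Gamma_i^2$ ($i\in N_2^+$) define
\[
G(\kappa):=\sum_{i\in N_1^+}\Gamma_i^1(\kappa)+\sum_{i\in N_2^+}\Gamma_i^2(\kappa).
\]
By statement~\ref{prop1} of Lemma~\ref{lemma:unique1} each summand is strictly decreasing, so $G$ is strictly decreasing and hence injective. Any PNE with the prescribed data has its cost $K$ in this common domain (feasibility of $K$ for \hyperlink{P_1}{$(\text{P}_i^1)$} gives $K>\sqrt{a_i\gamma_i}+b_i$ for $i\in N_1^+$, and feasibility for \hyperlink{P_2}{$(\text{P}_i^2)$} gives $K>b_i+C_i$ for $i\in N_2^+$), and it satisfies $G(K)=|N^+|-1$ by statement~\ref{prop4}. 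Injectivity of $G$ therefore forces two such PNE to share the same routing cost; call it $K$.

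Next I would recover the scale $B$. By statements~\ref{prop2} and~\ref{prop3} of Lemma~\ref{lemma:unique1}, for each $i\in N^+$ one has $\frac{z_i}{a_i}=B\,(1-\Gamma_i^1(K))$ if $i\in N_1^+$ and $\frac{z_i}{a_i}=B\,(1-\Gamma_i^2(K))$ if $i\in N_2^+$, so the vector $(z_i/a_i)_{i\in N^+}$ is determined up to the single scalar $B$. Using Lemma~\ref{lemma:BR2}, each flow is a now-known multiple of $z_i/a_i$: indeed $x_i=\frac{z_i}{a_i}\sqrt{a_i\gamma_i}$ for $i\in N_1^+$ and $x_i=\frac{z_i}{a_i}(K-b_i-C_i)$ for $i\in N_2^+$. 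Substituting into $\sum_{i\in N^+}x_i=1$ gives a single linear equation $B\cdot c(K)=1$ with
\[
c(K):=\sum_{i\in N_1^+}(1-\Gamma_i^1(K))\sqrt{a_i\gamma_i}+\sum_{i\in N_2^+}(1-\Gamma_i^2(K))(K-b_i-C_i).
\]
Every term is strictly positive, since $1-\Gamma_i^1(K)=1-\Gamma_i^2(K)=\frac{z_i/a_i}{B}>0$ and $K-b_i-C_i>0$ for $i\in N_2^+$; hence $c(K)>0$ and $B=1/c(K)$ is uniquely determined.

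With $K$ and $B$ fixed, everything else follows: each $z_i$ ($i\in N^+$) from $\frac{z_i}{a_i}=B\,(1-\Gamma_i^1(K))$ or $B\,(1-\Gamma_i^2(K))$, each $x_i$ from the multiples above, and each price from Lemma~\ref{lemma:BR2} ($p_i=K-\sqrt{a_i\gamma_i}-b_i$ for $i\in N_1^+$, $p_i=C_i$ for $i\in N_2^+$). For $i\notin N^+$ we have $z_i=0$ by definition of $N^+$, and the prices of these firms are the only free coordinates. Thus any two PNE with the prescribed support and decomposition agree on all firms in $N^+$ and differ at most in the prices of zero-capacity firms, establishing essential uniqueness. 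I expect the only genuinely substantive step to be the first one: recognising that the right object to fix is the routing cost, whose uniqueness comes for free from the strict monotonicity of the aggregated $\Gamma$-functions in statement~\ref{prop4} of Lemma~\ref{lemma:unique1}; fixing the scale $B$ via flow conservation and reading off the rest is then routine.
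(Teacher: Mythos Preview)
Your proposal is correct and follows essentially the same approach as the paper: first pin down the routing cost $K$ via the strict monotonicity of $\sum_{i\in N_1^+}\Gamma_i^1+\sum_{i\in N_2^+}\Gamma_i^2$ together with statement~\ref{prop4} of Lemma~\ref{lemma:unique1}, then use statements~\ref{prop2},~\ref{prop3} and flow conservation $\sum_{i\in N^+}x_i=1$ to fix $B$, and finally read off all $z_i,p_i$ for $i\in N^+$. The only cosmetic difference is that the paper compares two PNE side by side (showing $K=K'$, then $B=B'$), whereas you argue that each quantity is uniquely determined by the fixed data; the logical content is identical.
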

\begin{proof}
Assume that there are two PNE $(z,p)$ and $(z',p')$ with the described properties, i.e. $N^+(z,p)=N^+(z',p')=N^+$, $N_1^+(z,p)=N_1^+(z',p')=N_1^+$ and $N_2^+(z,p)=N_2^+(z',p')=N_2^+$. Let $x:=x(z,p)$ and $x':=x(z',p')$ with costs $K:=K(z,p)$ and $K':=K(z',p')$. 
We show that $(z_i,p_i)=(z_i',p_i')$ holds for all $i\in N^+$, showing that $(z,p)$ and $(z',p')$  are essentially the same. 

First note that $K=K'$ holds, since $f(\kappa):=\sum_{i\in N_1^+}{\Gamma_i^1(\kappa)}+\sum_{i\in N_2^+}{\Gamma_i^2(\kappa)}$ is a strictly decreasing function in $\kappa$ and
\[
f(K)=\sum_{i \in N^+_1}{\Gamma_i^1(K)}+\sum_{i \in N^+_2}{\Gamma_i^2(K)}=|N^+|-1=\sum_{i \in N^+_1}{\Gamma_i^1(K')}+\sum_{i \in N^+_2}{\Gamma_i^2(K')}=f(K')
\]
holds from~\ref{prop4} in Lemma~\ref{lemma:unique1}. 
This implies $p_i=p_i'$ for all $i\in N^+$, since $p_i=K-\sqrt{a_i\gamma_i}-b_i=K'-\sqrt{a_i\gamma_i}-b_i=p_i'$ holds for $i\in N_1^+$, and $p_i=C_i=p_i'$ for $i\in N_2^+$. 

If $B:=\sum_{j \in N^+}{\frac{z_j}{a_j}}=\sum_{j \in N^+}{\frac{z_j'}{a_j}}=:B'$ holds, we
also get $z_i=z_i'$ for all $i\in N^+$, since~\ref{prop2} of Lemma~\ref{lemma:unique1} yields 
\[
z_i=(1-\Gamma_i^1(K))a_iB=(1-\Gamma_i^1(K'))a_iB'=z_i'
\]
for all $i\in N_1^+$ and~\ref{prop3} of Lemma~\ref{lemma:unique1} yields 
\[
z_i=(1-\Gamma_i^2(K))a_iB=(1-\Gamma_i^2(K'))a_iB'=z_i'
\]
for all $i\in N_2^+$. 

It remains to show $B=B'$. 
First consider $i\in N_1^+$. Using $z_i=\sqrt{a_i/\gamma_i}\cdot x_i$ and $z_i'=\sqrt{a_i/\gamma_i}\cdot x_i'$, as well as~\ref{prop2} of Lemma~\ref{lemma:unique1}, yields
\[
x_i/B=(1-\Gamma_i^1(K))\sqrt{a_i\gamma_i}=(1-\Gamma_i^1(K'))\sqrt{a_i\gamma_i}=x_i'/B'.
\] 
For $i\in N_2^+$, we use $\frac{z_i}{a_i}=\frac{x_i}{K-b_i-C_i}$ and $\frac{z_i'}{a_i}=\frac{x_i'}{K'-b_i-C_i}$ and~\ref{prop3} of Lemma~\ref{lemma:unique1} to achieve
\[
x_i/B=(1-\Gamma_i^2(K))(K-b_i-C_i)=(1-\Gamma_i^2(K'))(K'-b_i-C_i)=x_i'/B'.
\] 
Altogether we have $x_i/B=x_i'/B'$ for all $i\in N^+$. Using that $\sum_{i\in N^+}{x_i}=1=\sum_{i\in N^+}{x_i'}$ yields $B=B'$, as desired:
\begin{align*}
\frac{1}{B} = \sum_{i\in N^+}{\frac{x_i}{B}} = \sum_{i\in N^+}{\frac{x_i'}{B'}}=\frac{1}{B'}.
\end{align*}
\end{proof}
In the previous lemma, we showed that given a fixed subset $N^+\subseteq N$ and a fixed disjoint decomposition $N^+=N_1^+\overset{.}{\cup}N_2^+$, there is at most one PNE $(z,p)$ such that $N^+(z,p)=N^+$, $N_1^+(z,p)=N_1^+$ and $N_2^+(z,p)=N_2^+$. 
Next, we strengthen this result by showing that for a fixed subset $N^+\subseteq N$, there is at most one PNE $(z,p)$ with $N^+(z,p)=N^+$ (independently of the decomposition of $N^+$).
\begin{lemma}\label{lemma:unique3}
For a fixed subset $N^+$ of the firms, there is essentially at most one PNE $(z,p)$ with $N^+(z,p)=N^+$. 
\end{lemma}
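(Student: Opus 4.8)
The plan is to build directly on Lemma~\ref{lemma:unique2}, which already settles uniqueness once both the support $N^+$ \emph{and} its decomposition $N^+ = N_1^+ \overset{.}{\cup} N_2^+$ are fixed. It therefore suffices to show that any two PNE $(z,p)$ and $(z',p')$ with $N^+(z,p) = N^+(z',p') = N^+$ must induce the \emph{same} routing cost and the \emph{same} decomposition; Lemma~\ref{lemma:unique2} then finishes the job. I may assume $(z,p)$ and $(z',p')$ are genuinely different profiles, since otherwise there is nothing to prove, and this is exactly what makes statement~\ref{prop5} of Lemma~\ref{lemma:unique1} applicable to the pair.

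First I would prove that the routing costs agree, i.e.\ $K := K(z,p) = K(z',p') =: K'$. Suppose not; by symmetry assume $K < K'$. For each $i \in N^+$ write $g_i := \Gamma_i^1(K)$ if $i \in N_1^+(z,p)$ and $g_i := \Gamma_i^2(K)$ otherwise, and define $g_i'$ analogously from the decomposition of $(z',p')$ evaluated at $K'$. By statement~\ref{prop4} of Lemma~\ref{lemma:unique1} we have $\sum_{i \in N^+} g_i = |N^+| - 1 = \sum_{i \in N^+} g_i'$. I would then compare $g_i$ with $g_i'$ firm by firm across the four ways a firm can be classified in the two PNE. If $i$ lies in the same class in both (the two ``diagonal'' cases), strict monotonicity of $\Gamma_i^1$ and $\Gamma_i^2$ (statement~\ref{prop1}) combined with $K < K'$ gives $g_i > g_i'$. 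If $i \in N_1^+(z,p) \cap N_2^+(z',p')$, statement~\ref{prop5} gives directly $g_i = \Gamma_i^1(K) > \Gamma_i^2(K') = g_i'$. The remaining case $i \in N_2^+(z,p) \cap N_1^+(z',p')$ is \emph{impossible} under $K < K'$: applying statement~\ref{prop5} with the roles of the two PNE reversed would force $K' < K$. Hence $g_i > g_i'$ for every firm, contradicting equality of the sums; therefore $K = K'$.

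With $K = K'$ established, I would show the decompositions coincide. If some firm switched classes, say $i \in N_1^+(z,p) \cap N_2^+(z',p')$, then statement~\ref{prop5} would force $K < K'$, contradicting $K = K'$; the reverse switch is excluded symmetrically. Thus $N_1^+(z,p) = N_1^+(z',p')$ and $N_2^+(z,p) = N_2^+(z',p')$, so the two PNE share both $N^+$ and its decomposition, and Lemma~\ref{lemma:unique2} yields that they are essentially the same.

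I expect the main obstacle to be bookkeeping rather than mathematical depth: the genuinely hard estimate is already packaged in statement~\ref{prop5} of Lemma~\ref{lemma:unique1}. The delicate points are to invoke statement~\ref{prop5} with the correct orientation in the two cross cases, to recognize that one of the four classification cases becomes outright contradictory once an ordering of $K$ and $K'$ is fixed, and to keep the directions of all the monotonicity and separation inequalities mutually consistent. Verifying that statement~\ref{prop5}'s ``different PNE'' hypothesis is legitimately met throughout is where the argument needs care.
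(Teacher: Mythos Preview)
Your proposal is correct and follows essentially the same approach as the paper: both arguments hinge on statement~\ref{prop5} of Lemma~\ref{lemma:unique1} to rule out one cross case and to handle the other, combined with the strict monotonicity (statement~\ref{prop1}) and the sum identity (statement~\ref{prop4}) to force a contradiction, after which Lemma~\ref{lemma:unique2} closes. The only difference is organizational---you first prove $K=K'$ and then that the decompositions agree, whereas the paper runs a single contradiction argument starting from the assumption that the decompositions differ---but the mathematical content is identical.
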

\begin{proof}
Assume, by contradiction, that there are two essentially different PNE $(z,p)$ and $(\overline z,\overline p)$ with $N^+(z,p)=N^+=N^+(\overline z, \overline p)$. Let $N_1^+:=N_1^+(z,p)$, $N_2^+:=N_2^+(z,p)$ and $\overline N_1^+:=N_1^+(\overline z,\overline p)$, $\overline N_2^+:=N_2^+(\overline z,\overline p)$. Further denote $x:=x(z,p)$, $K:=K(z,p)$ and $\overline x:=x(\overline z,\overline p)$, $\overline K:=K(\overline z,\overline p)$. 

Lemma~\ref{lemma:unique2} yields that the decompositions of $N^+$ have to be different. Without loss of generality, there is a firm $j\in N_1^+ \setminus \overline N_1^+$. Since $j\in \overline N_2^+$, statement~\ref{prop5} of Lemma~\ref{lemma:unique1} yields $K<\overline K$. 
The existence of a firm $i \in \overline{N}_1^+ \setminus N_1^+$ leads (by the same argumentation) to the contradiction $\overline K<K$, thus $\overline{N}_1^+ \subsetneq N_1^+$ and $N_2^+\subsetneq \overline{N}_2^+$ hold and we can write (using~\ref{prop4} of Lemma~\ref{lemma:unique1})
\[
|N^+|-1=
\sum_{i \in N_1^+}{\Gamma_i^1(K)}+\sum_{i \in N_2^+}{\Gamma_i^2(K)}=\sum_{i \in N_1^+\setminus \overline{N}_1^+}{\Gamma_i^1(K)}+\sum_{i \in \overline{N}_1^+}{\Gamma_i^1(K)}+\sum_{i \in N_2^+}{\Gamma_i^2(K)}
\]
and 
\[
|N^+|-1=
\sum_{i \in \overline{N}_1^+}{\Gamma_i^1(\overline{K})}+\sum_{i \in \overline{N}_2^+}{\Gamma_i^2(\overline{K})}=
\sum_{i \in \overline{N}_1^+}{\Gamma_i^1(\overline{K})}+\sum_{i \in \overline{N}_2^+\setminus N_2^+}{\Gamma_i^2(\overline{K})}+\sum_{i \in {N}_2^+}{\Gamma_i^2(\overline{K})}.
\]
Using that $K<\overline{K}$ and $\sum_{i\in N_2^+}{\Gamma_i^2(\kappa)}$ is a decreasing function in $\kappa$ yields
\[
\sum_{i \in N_2^+}{\Gamma_i^2(K)} \geq \sum_{i \in N_2^+}{\Gamma_i^2(\overline{K})}.
\]
Furthermore, $\sum_{i \in \overline{N}_1^+}{\Gamma_i^1(\kappa)}$ is also decreasing in $\kappa$, thus
\[
\sum_{i \in \overline{N}_1^+}{\Gamma_i^1(K)} \geq \sum_{i \in \overline{N}_1^+}{\Gamma_i^1(\overline{K})}.
\]
Finally, statement~\ref{prop5} of Lemma~\ref{lemma:unique1} yields $\Gamma_i^1(K)>\Gamma_i^2(\overline K)$ for all $i\in N_1^+ \setminus \overline N_1^+=\overline N_2^+ \setminus N_2^+\neq \emptyset$, thus
\[
\sum_{i \in {N}_1^+\setminus  \overline{N}_1^+}{\Gamma_i^1(K)} > \sum_{i \in \overline{N}_2^+\setminus N_2^+}{\Gamma_i^2(\overline{K})}
\]
holds. 
Altogether we get the contradiction
\begin{align*}
|N^+|-1=&\sum_{i \in N_1^+\setminus \overline{N}_1^+}{\Gamma_i^1(K)}+\sum_{i \in \overline{N}_1^+}{\Gamma_i^1(K)}+\sum_{i \in N_2^+}{\Gamma_i^2(K)} \\
>& \sum_{i \in \overline{N}_1^+}{\Gamma_i^1(\overline{K})}+\sum_{i \in \overline{N}_2^+\setminus N_2^+}{\Gamma_i^2(\overline{K})}+\sum_{i \in {N}_2^+}{\Gamma_i^2(\overline{K})} \\
=& |N^+|-1,
\end{align*}
which completes the proof. 
\end{proof}

For the desired uniqueness of the PNE, it remains to show that there is at most one set $N^+$ such that a PNE $(z,p)$ with $N^+(z,p)=N^+$ exists. 
To this end, we first show that each firm~$i$ has a certain threshold $K^*_i$ such that, for any PNE $(z,p)$, firm $i$ has $z_i>0$ if and only if $K(z,p)>K^*_i$. 
\begin{lemma}\label{lemma:unique5}
For each $i\in N$, define
\[
K^*_i:=\begin{cases}
\frac{a_i\gamma_i}{C_i}+b_i+C_i, & \text{ if } \sqrt{a_i \gamma_i}>C_i, \\
2\sqrt{a_i\gamma_i}+b_i, & \text{else.}
 \end{cases}
\]
Then, for any PNE $s=(z,p)$ and any firm $i\in N$, it holds that $z_i>0$ if and only if $K(z,p)>K^*_i$. 
\end{lemma}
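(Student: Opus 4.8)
The plan is to prove the two implications separately, using throughout that in a PNE we have $|N^+(z,p)|\ge 2$, so $z_{-i}\neq 0$ for every firm $i$ and hence the best-response characterization of Theorem~\ref{theo:BR1} applies to each firm. Recall that for fixed $s_{-i}$ the auxiliary function $\overline x_i$ and the threshold $K_i^{\max}$ (the unique value with $\overline x_i(K_i^{\max})=0$) are determined by $s_{-i}$, and that $\overline x_i(K)>0$ if and only if $K<K_i^{\max}$. Since in any PNE firm $i$ plays a best response to $s_{-i}$, I would read off the structure of $s_i$ from the table in Theorem~\ref{theo:BR1}.

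For the direction $z_i>0\Rightarrow K(z,p)>K^*_i$: if $z_i>0$ then firm $i$ plays the unique best response of row~3 or~4 of Table~\ref{table_characterization}, so by the final statement of Theorem~\ref{theo:BR1} its profit is strictly positive. By Lemma~\ref{lemma:BR2}, $K:=K(z,p)$ is then feasible and optimal for $(\text{P}_i^1)$ or for $(\text{P}_i^2)$, with objective value $\Pi_i>0$. If $K$ is feasible for $(\text{P}_i^1)$, feasibility forces $\sqrt{a_i\gamma_i}\le C_i$, hence $K^*_i=2\sqrt{a_i\gamma_i}+b_i$, and $f_i^1(K)=\overline x_i(K)(K-b_i-2\sqrt{a_i\gamma_i})>0$ with $\overline x_i(K)>0$ yields $K>2\sqrt{a_i\gamma_i}+b_i=K^*_i$. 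If instead $K$ is feasible for $(\text{P}_i^2)$, then $f_i^2(K)>0$ forces $K>\tfrac{a_i\gamma_i}{C_i}+b_i+C_i$; when $\sqrt{a_i\gamma_i}>C_i$ this is exactly $K>K^*_i$, and when $\sqrt{a_i\gamma_i}\le C_i$ the feasibility bound $K>\sqrt{a_i\gamma_i}+b_i+C_i\ge 2\sqrt{a_i\gamma_i}+b_i=K^*_i$ already gives the claim. Thus a short case distinction settles this implication.

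For the converse I would prove the contrapositive $z_i=0\Rightarrow K(z,p)\le K^*_i$ in three steps. First, since a zero-capacity strategy is a best response only in row~2 of Table~\ref{table_characterization}, $z_i=0$ in a PNE forces both $(\text{P}_i^1)$ and $(\text{P}_i^2)$ to be infeasible. Second, I claim $K(z,p)=K_i^{\max}$: with $z_i=0$ we have $x_i(z,p)=0$, the firms carrying positive flow are exactly those $j\in N^+$ with $b_j+p_j<K$ (the set $N(K)$), and summing their Wardrop flows $x_j=\tfrac{(K-b_j-p_j)z_j}{a_j}$ to $1$ gives $\overline x_i(K)=1-\sum_{j\in N(K)}\tfrac{(K-b_j-p_j)z_j}{a_j}=0$, so $K=K_i^{\max}$ by uniqueness of the zero of $\overline x_i$. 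Third, I translate ``both problems infeasible'' into a bound on $K_i^{\max}$: if $\sqrt{a_i\gamma_i}>C_i$ then $(\text{P}_i^1)$ is automatically infeasible and the (non-strict) lower endpoint of $(\text{P}_i^2)$ is $\tfrac{a_i\gamma_i}{C_i}+b_i+C_i=K^*_i$, so infeasibility means $K_i^{\max}\le K^*_i$; if $\sqrt{a_i\gamma_i}\le C_i$ the binding problem is $(\text{P}_i^1)$, whose feasibility is equivalent to $2\sqrt{a_i\gamma_i}+b_i<K_i^{\max}$, so infeasibility again gives $K_i^{\max}\le 2\sqrt{a_i\gamma_i}+b_i=K^*_i$. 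Combining the three steps yields $K(z,p)=K_i^{\max}\le K^*_i$.

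The direction-1 case distinction is routine; the crux is the converse, and within it the identity $K(z,p)=K_i^{\max}$ when $z_i=0$, which is the bridge turning the purely local, $s_{-i}$-dependent feasibility threshold $K_i^{\max}$ into the firm-intrinsic threshold $K^*_i$. I expect the most error-prone part to be the bookkeeping of strict versus non-strict inequalities in the feasible sets of $(\text{P}_i^1)$ and $(\text{P}_i^2)$ and their matching to the two branches of the definition of $K^*_i$; I would therefore treat the sign cases $\sqrt{a_i\gamma_i}>C_i$ and $\sqrt{a_i\gamma_i}\le C_i$ explicitly and check that the degenerate endpoint $\sqrt{a_i\gamma_i}=C_i$ falls into the second branch as intended.
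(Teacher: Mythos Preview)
Your proposal is correct and follows essentially the same approach as the paper's proof: both directions hinge on reading off the feasibility structure of \hyperlink{P_1}{$(\text{P}_i^1)$} and \hyperlink{P_2}{$(\text{P}_i^2)$} from Theorem~\ref{theo:BR1} and Lemma~\ref{lemma:BR2}, together with the identification $K(z,p)=K_i^{\max}$ when $z_i=0$, and the case split $\sqrt{a_i\gamma_i}\lessgtr C_i$ is handled identically. The only cosmetic difference is the order in which you treat the two implications.
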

\begin{proof}
Let $s=(z,p)$ be a PNE with $x:=x(z,p)$, $K:=K(z,p)$, and $i\in N$.

First assume that $z_i=0$. Since $(z,p)$ is a PNE, the strategy $(z_i,p_i)=(0,p_i)$ is a best response of firm $i$ to $s_{-i}$. As we have seen in Theorem~\ref{theo:BR1}, this is equivalent to the fact that both problems \hyperlink{P_1}{$(\text{P}_i^1)(s_{-i})$} and \hyperlink{P_2}{$(\text{P}_i^2)(s_{-i})$} are infeasible. 
Note that $K=K_i^{\max}(s_{-i})$ holds due to the definition of $K_i^{\max}(s_{-i})$ (cf. page~\pageref{page_K_i_max}) and 
\[
0=1-\sum_{j\in N: x_j>0}{x_j}=1-\sum_{j \in N\setminus i: z_j>0, b_j+p_j<K}{\frac{(K-b_j-p_j)z_j}{a_j}},
\]
where we used that $\{j\in N: x_j>0\}=\{j \in N\setminus i: z_j>0, b_j+p_j<K\}$. 
To show $K\leq K^*_i$, we have to distinguish between the two cases $\sqrt{a_i \gamma_i}>C_i$ and $\sqrt{a_i \gamma_i}\leq C_i$.
First consider $\sqrt{a_i \gamma_i}>C_i$, thus $K^*_i=a_i\gamma_i/C_i+b_i+C_i$. 
Since \hyperlink{P_2}{$(\text{P}_i^2)(s_{-i})$} is infeasible and $\sqrt{a_i\gamma_i}+b_i+C_i<a_i\gamma_i/C_i+b_i+C_i$, we get the desired inequality $K=K_i^{\max}(s_{-i})\leq a_i\gamma_i/C_i+b_i+C_i=K^*_i$.
Now consider $\sqrt{a_i \gamma_i}\leq C_i$, i.e. $K^*_i=2\sqrt{a_i\gamma_i}+b_i$. Since \hyperlink{P_1}{$(\text{P}_i^1)(s_{-i})$} is infeasible and $2\sqrt{a_i\gamma_i}+b_i\leq \sqrt{a_i\gamma_i}+b_i+C_i$, we get $K=K_i^{\max}(s_{-i})\leq 2\sqrt{a_i\gamma_i}+b_i=K^*_i$, as desired. 
We have seen that $z_i=0$ implies $K\leq K^*_i$, or, equivalently, $K>K^*_i$ implies $z_i>0$.

It remains to show the other direction, i.e. $z_i>0$ implies $K> K^*_i$. We consider the two cases $\sqrt{a_i \gamma_i}>C_i$ and $\sqrt{a_i \gamma_i}\leq C_i$ and use our results from Lemma~\ref{lemma:BR2} and Theorem~\ref{theo:BR1}. 
If $\sqrt{a_i \gamma_i}>C_i$, thus $K^*_i=\frac{a_i\gamma_i}{C_i}+b_i+C_i$, the cost $K$ is an optimal solution for problem \hyperlink{P_2}{$(\text{P}_i^2)(s_{-i})$} with positive objective function value (note that \hyperlink{P_1}{$(\text{P}_i^1)(s_{-i})$} is infeasible), therefore $K^*_i=\frac{a_i\gamma_i}{C_i}+b_i+C_i<K$.
In the second case, i.e. $\sqrt{a_i \gamma_i}\leq C_i$ and $K^*_i=2\sqrt{a_i\gamma_i}+b_i$, the cost $K$ either is optimal for \hyperlink{P_1}{$(\text{P}_i^1)(s_{-i})$}, or optimal for \hyperlink{P_2}{$(\text{P}_i^2)(s_{-i})$}, and has positive objective function value in both cases. We get the desired property, since $K^*_i=2\sqrt{a_i\gamma_i}+b_i<K$ holds for the first case, and $K^*_i=2\sqrt{a_i\gamma_i}+b_i\leq \sqrt{a_i\gamma_i}+b_i+C_i<K$ holds for the second case, completing the proof.
\end{proof}

We can now show the remaining result for the desired uniqueness of PNE.
\begin{lemma}\label{lemma:unique4}
There is at most one subset $N^+$ of the firms such that a PNE $(z,p)$ with $N^+(z,p)=N^+$ exists. 
\end{lemma}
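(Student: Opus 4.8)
The plan is to exploit two facts already established: the threshold characterization of Lemma~\ref{lemma:unique5}, which pins down $N^+(z,p)$ from the routing cost alone, and the balance equation in statement~\ref{prop4} of Lemma~\ref{lemma:unique1}. First I would note that Lemma~\ref{lemma:unique5} says precisely that for \emph{any} PNE $(z,p)$ we have $N^+(z,p)=\{i\in N: K(z,p)>K^*_i\}$. Consequently, if two PNE $(z,p)$ and $(z',p')$ had $N^+:=N^+(z,p)\neq N^+(z',p')=:\tilde N^+$, then their costs $K:=K(z,p)$ and $K':=K(z',p')$ must differ, and since $\kappa\mapsto\{i:\kappa>K^*_i\}$ is monotone in $\kappa$, I may assume without loss of generality $K<K'$, which forces $N^+\subsetneq\tilde N^+$. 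The whole argument is then a proof by contradiction: I will produce two incompatible estimates for $\sum_{i\in N^+}\gamma_i$, where for an active firm $i$ I abbreviate by $\gamma_i$ its contribution $\Gamma_i^1$ or $\Gamma_i^2$ (according to its group) evaluated at the cost of the PNE under consideration.

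The first estimate comes from statement~\ref{prop4} of Lemma~\ref{lemma:unique1} applied to both equilibria. It yields $\sum_{i\in N^+}(1-\gamma_i^{(K)})=1$ and $\sum_{i\in\tilde N^+}(1-\gamma_i^{(K')})=1$, where every summand is strictly positive by statements~\ref{prop2} and~\ref{prop3}. Restricting the second identity to the proper subset $N^+\subsetneq\tilde N^+$ and discarding the nonempty, strictly positive remainder indexed by $\tilde N^+\setminus N^+$ gives $\sum_{i\in N^+}(1-\gamma_i^{(K')})<1=\sum_{i\in N^+}(1-\gamma_i^{(K)})$, that is, $\sum_{i\in N^+}\gamma_i^{(K')}>\sum_{i\in N^+}\gamma_i^{(K)}$.

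The second estimate I obtain firm by firm, comparing the contribution of each $i\in N^+$ at cost $K$ with its contribution at the larger cost $K'$. The key structural observation is that membership of $i$ in group~$2$ at cost $K$ forces $K>\sqrt{a_i\gamma_i}+b_i+C_i$ (feasibility of $(\text{P}_i^2)$), so such a firm stays in group~$2$ at the larger cost $K'$; hence no firm can switch from group~$2$ down to group~$1$. If $i$ keeps its group, the strict monotonicity of $\Gamma_i^1$ resp.\ $\Gamma_i^2$ (statement~\ref{prop1}) together with $K<K'$ gives $\gamma_i^{(K)}>\gamma_i^{(K')}$. The only remaining case is a firm switching from group~$1$ at $K$ to group~$2$ at $K'$, which is exactly the hypothesis of statement~\ref{prop5} of Lemma~\ref{lemma:unique1}; it supplies $\Gamma_i^1(K)>\Gamma_i^2(K')$, i.e.\ again $\gamma_i^{(K)}>\gamma_i^{(K')}$. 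Summing over $i\in N^+$ yields $\sum_{i\in N^+}\gamma_i^{(K)}>\sum_{i\in N^+}\gamma_i^{(K')}$, directly contradicting the first estimate and completing the proof.

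The main obstacle is the bookkeeping of how firms migrate between the two groups as the cost rises: I must simultaneously forbid the ``group~$2\to$ group~$1$'' transition and control the ``group~$1\to$ group~$2$'' transition. The former is settled by the feasibility window of $(\text{P}_i^2)$, and the latter is precisely what statement~\ref{prop5} was engineered to handle, so once these two transitions are pinned down the firm-by-firm monotonicity and the aggregate balance equation collide and close the argument.
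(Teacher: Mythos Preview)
Your proof is correct and follows essentially the same approach as the paper's: both use Lemma~\ref{lemma:unique5} to nest the active sets, rule out group-$2\to$group-$1$ transitions (you via the feasibility window of \hyperlink{P_2}{$(\text{P}_i^2)$}, the paper via statement~\ref{prop5} of Lemma~\ref{lemma:unique1} applied with the two equilibria swapped), control group-$1\to$group-$2$ transitions via statement~\ref{prop5}, and then collide the balance equations of statement~\ref{prop4} using the strict bound $\Gamma_i^j(\overline K)<1$ for the extra firms in $\overline N^+\setminus N^+$. Your packaging as two incompatible estimates for the aggregate $\Gamma$-sum over $N^+$ is arguably cleaner than the paper's explicit intersection-by-intersection bookkeeping; the one cosmetic issue is that your shorthand $\gamma_i$ clashes with the installation-cost parameter already denoted $\gamma_i$ throughout the paper, so rename it before writing up.
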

\begin{proof}
Assume, by contradiction, that there are two different subsets $N^+$ und $\overline{N}^+$ with corresponding PNE $(z,p)$ and $(\overline z,\overline p)$, such that $N^+(z,p)=N^+$ and $N^+(\overline z, \overline p)=\overline{N}^+$. 
Let $N_1^+\overset{.}{\cup} N_2^+$ and $\overline{N}_1^+\overset{.}{\cup}\overline{N}_2^+$ be the decompositions of $N^+$ and $\overline{N}^+$, that is, $N^+_1(z,p)=N^+_1$, $N^+_2(z,p)=N^+_2$, $N^+_1(\overline z,\overline p)=\overline N^+_1$ and $N^+_2(\overline z,\overline p)=\overline N^+_2$. 
Finally, denote $x:=x(z,p)$, $K:=K(z,p)$ and $\overline x:=x(\overline z,\overline p)$, $\overline K:=K(\overline z,\overline p)$.

Using Lemma~\ref{lemma:unique5}, we can assume w.l.o.g. that $K< \overline K$ and $N^+\subsetneq \overline{N}^+$. 
Then, $N_2^+\subseteq \overline{N}_2^+$ holds, since the existence of a firm $i \in N_2^+ \setminus \overline N_2^+$, i.e. $i \in N_2^+ \cap \overline N_1^+$, leads to the contradiction $\overline K<K$ by statement~\ref{prop5} of Lemma~\ref{lemma:unique1}. 
Furthermore, if there is a firm $i\in N_1^+ \setminus \overline N_1^+$, i.e. $i \in N_1^+ \cap \overline N_2^+$, statement~\ref{prop5} of Lemma~\ref{lemma:unique1} yields $\Gamma_i^1(K) >\Gamma_i^2(\overline K)$. 
Finally, $\Gamma_i^1(\overline K)<1$ holds for all $i \in \overline N_1^+$, and $\Gamma_i^2(\overline K)<1$ holds for all $i\in \overline N_2^+$ (see~\ref{prop2} and \ref{prop3} of Lemma~\ref{lemma:unique1}). 
Altogether, this leads to the following contradiction, and completes the proof (where we additionally use $K <\overline K$, and the statements~\ref{prop1} and~\ref{prop4} of Lemma~\ref{lemma:unique1}):
\begin{align*}
|\overline N^+|-1 =& \sum_{i \in \overline N_1^+}{\Gamma_i^1(\overline K)}+\sum_{i \in \overline N_2^+}{\Gamma_i^2(\overline K)} \\
=& \sum_{i \in \overline N_1^+ \cap N_1^+}{\Gamma_i^1(\overline K)}+\sum_{i \in \overline N_1^+ \setminus N^+}{\Gamma_i^1(\overline K)}+\sum_{i \in \overline N_2^+ \cap N_1^+}{\Gamma_i^2(\overline K)}+\sum_{i \in N_2^+}{\Gamma_i^2(\overline K)}+\sum_{i \in \overline N_2^+ \setminus N^+}{\Gamma_i^2(\overline K)}\\
<& \sum_{i \in N_1^+ \cap \overline N_1^+}{\Gamma_i^1(K)}+\sum_{i \in N_1^+\cap \overline N_2^+ }{\Gamma_i^1(K)}+\sum_{i \in N_2^+}{\Gamma_i^2(K)}+|\overline N^+|-|N^+|\\
=& \sum_{i \in N_1^+}{\Gamma_i^1(K)}+\sum_{i \in N_2^+}{\Gamma_i^2(K)}+|\overline N^+|-|N^+|
= |N^+|-1+|\overline N^+|-|N^+|=|\overline N^+|-1.
\end{align*}
\end{proof}
Together with the existence result in Theorem~\ref{theo:existence}, the preceding Lemmata~\ref{lemma:unique3} and~\ref{lemma:unique4} show that there is essentially one PNE. 
\begin{theorem}\label{theo:uniqueness}
Every capacity and price competition game has an essentially unique pure Nash equilibrium, i.e. if $(z,p)$ and $(z',p')$ are two different PNE and $i\in N$ is a firm such that $(z_i,p_i)\neq (z_i',p_i')$, then $z_i=z_i'=0$ holds.
\end{theorem}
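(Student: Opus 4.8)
The plan is to assemble the theorem directly from the three uniqueness lemmata together with the existence result, since essentially all the analytic work has already been carried out in Lemmata~\ref{lemma:unique1}--\ref{lemma:unique4}. First I would invoke Theorem~\ref{theo:existence} to guarantee that at least one PNE exists, so that the ``essentially unique'' claim is non-vacuous. Then, given two arbitrary PNE $(z,p)$ and $(z',p')$, the goal reduces to showing that they can differ only in the prices charged by firms with zero capacity, which is exactly the content of the ``essentially'' qualifier.

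The first key step is to pin down the common support. By Lemma~\ref{lemma:unique4}, there is at most one subset $N^+\subseteq N$ for which a PNE $(z,p)$ with $N^+(z,p)=N^+$ exists; applying this to both equilibria forces $N^+(z,p)=N^+(z',p')=:N^+$. In other words, the set of firms receiving positive capacity (equivalently, positive flow) is identical in the two equilibria.

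The second key step is to fix the strategies on this common support. Since both equilibria share the same $N^+$, Lemma~\ref{lemma:unique3} applies and yields that they are essentially the same, i.e.\ $(z_j,p_j)=(z_j',p_j')$ for every $j\in N^+$. Now suppose a firm $i$ satisfies $(z_i,p_i)\neq(z_i',p_i')$. By the previous sentence $i\notin N^+$, which by definition of $N^+=N^+(z,p)=N^+(z',p')$ means $z_i=0$ and $z_i'=0$; this is precisely the desired conclusion (the two prices $p_i,p_i'$ are then free to differ, consistent with the definition of essential uniqueness).

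I do not expect a genuine obstacle in proving the theorem itself, as it is a short corollary of the lemmata. The only point requiring care is the logical chaining: Lemma~\ref{lemma:unique4} must be applied \emph{before} Lemma~\ref{lemma:unique3}, because the latter presupposes a common support set $N^+$, and one should check that the notion of ``essentially the same'' established inside Lemma~\ref{lemma:unique3} (namely equality of $(z_j,p_j)$ for all $j\in N^+$) matches the definition of ``essentially unique'' stated for the theorem. The substantive difficulty was already absorbed into the monotonicity of $\Gamma_i^1,\Gamma_i^2$ from Lemma~\ref{lemma:unique1} and the support-comparison argument of Lemma~\ref{lemma:unique4}, so here it only remains to state the combination cleanly.
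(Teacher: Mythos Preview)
Your proposal is correct and follows essentially the same approach as the paper: the paper's proof of Theorem~\ref{theo:uniqueness} is literally a one-sentence appeal to Theorem~\ref{theo:existence} together with Lemmata~\ref{lemma:unique3} and~\ref{lemma:unique4}, and you have spelled out that logical chaining (first Lemma~\ref{lemma:unique4} to fix $N^+$, then Lemma~\ref{lemma:unique3} to conclude essential equality on $N^+$) in the right order.
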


\section{Quality of Equilibria}
In the last section, we showed that a capacity and price competition game has an (essentially) unique PNE.
Now we show that this PNE can be arbitrarily inefficient compared to a \textit{social optimum}.

Define the \textit{social cost} $C(z,p)$ of a strategy profile $s=(z,p)$ as 
\[
C(z,p)=\begin{cases}
\sum_{i\in\{1,\ldots,n\}: z_i>0}{\left(\ell_i(x_i(s),z_i)x_i(s)+\gamma_iz_i\right)}, & \text{ if $\sum_{i=1}^{n}{z_i}>0$,} \\
\infty, & \text{ else.}
 \end{cases}
\]
The function $C(z,p)$  measures utilitarian social welfare
over firms and customers  (the price component cancels out).
Common notions to measure the quality of equilibria are the Price of Anarchy (PoA) and the Price of Stability (PoS), which are defined as the worst case ratios of the cost of a worst, respectively best, pure Nash equilibrium, and a social optimum. 
Since a capacity and price competition game $G$ has an essentially unique PNE, all PNE of $G$ have the same social cost. If we denote this cost by $C(\PNE(G))$, we get 
\[
\PoA=\PoS=\sup_G \frac{C(\PNE(G))}{\OPT(G)},
\]
where $\OPT(G)$ denotes the minimum social cost in $G$ (compared to all possible strategy profiles). 

The following theorem shows that PoA and PoS are unbounded for capacity and price competition games:
\begin{theorem}\label{theo:quality}
$\PoA=\PoS=\infty$. The bound is attained even for games with only two firms.
\end{theorem}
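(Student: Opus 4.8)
The plan is to exhibit a one-parameter family of two-firm games $G_C$, $C\in(0,1)$, and show that the ratio $C(\PNE(G_C))/\OPT(G_C)$ tends to $\infty$ as $C\to 0$. Concretely I would take both firms identical, with $a_1=a_2=1$, $b_1=b_2=0$, $\gamma_1=\gamma_2=1$, and a common (small) price cap $C_1=C_2=C$. The role of the small cap is that the firms' unconstrained profit-maximizing price exceeds $C$, so in equilibrium both firms are pinned to the cap and, in order to retain a positive price margin, install only a tiny amount of capacity; this produces large congestion cost, whereas a planner could serve the whole population cheaply through a single well-dimensioned firm.

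First I would pin down the unique equilibrium. By Theorem~\ref{theo:existence} and Theorem~\ref{theo:uniqueness}, $G_C$ has an essentially unique PNE $s=(z,p)$, and since $n=2$ forces $|N^+(s)|\ge 2$, both firms have positive capacity; hence the PNE is fully unique. As the two firms are identical, exchanging them maps a PNE to a PNE, so by uniqueness $s$ is symmetric: $x_1=x_2=\tfrac12$, $z_1=z_2$ and $p_1=p_2$. Next I would rule out an interior price: if $p_i<C_i$ held, then Lemma~\ref{lemma:KKT} would give $z_i=\sqrt{a_i/\gamma_i}\,x_i=\tfrac12$ and $p_i=\tfrac{x_i}{\sum_{j\ne i}z_j/a_j}+\sqrt{a_i\gamma_i}=2$, contradicting $p_i\le C<1$. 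Thus both firms sit at the cap, i.e. $N^+=N^+_2$. The equilibrium cost $K$ then follows from statement~\ref{prop4} of Lemma~\ref{lemma:unique1}: $\Gamma_1^2(K)+\Gamma_2^2(K)=|N^+|-1=1$, which with $\Gamma_i^2(K)=\tfrac{1/C}{K-C}$ gives $K=C+\tfrac{2}{C}$, and hence $z_i=\tfrac{a_ix_i}{K-b_i-C_i}=\tfrac{C}{4}$.

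With the equilibrium in hand I would compute both sides of the ratio. A direct evaluation gives $C(\PNE(G_C))=\sum_i\klammer{\tfrac{x_i^2}{z_i}+z_i}=2\klammer{\tfrac{1}{C}+\tfrac{C}{4}}=\tfrac{2}{C}+\tfrac{C}{2}$. For the benchmark I only need an upper bound: choosing the profile with $z=(1,0)$ (and arbitrary prices) forces the induced Wardrop flow to route all customers to firm $1$, so the social cost is $\ell_1(1,1)\cdot 1+\gamma_1\cdot 1=2$, whence $\OPT(G_C)\le 2$. Therefore $\PoA=\PoS=\sup_G\tfrac{C(\PNE(G))}{\OPT(G)}\ge \tfrac{C(\PNE(G_C))}{\OPT(G_C)}\ge \tfrac{1}{C}+\tfrac{C}{4}$ for every $C\in(0,1)$, which is unbounded as $C\to 0$, proving the claim, and for games with only two firms.

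The main obstacle is not the arithmetic but correctly identifying the structure of the unique equilibrium: justifying that the small cap binds for both firms (so they lie in $N^+_2$) and that symmetry together with Theorem~\ref{theo:uniqueness} pins the profile down completely. A further subtlety is that the benchmark $\OPT$ ranges only over strategy profiles $(z,p)$ whose \emph{induced} Wardrop flow realizes the allocation, rather than over arbitrary assignments of customers to firms; this is exactly why, for the upper bound on $\OPT$, I route all demand through a single firm by giving its competitor zero capacity, instead of prescribing a split directly.
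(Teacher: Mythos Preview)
Your argument is correct. One small imprecision: the implication ``$p_i=C_i$ hence $i\in N_2^+$'' is not valid in general, since the boundary case $K=\sqrt{a_i\gamma_i}+b_i+C_i$ lies in $N_1^+$; but for your instances $C<1=\sqrt{a_i\gamma_i}$, so the feasibility interval $[2\sqrt{a_i\gamma_i}+b_i,\sqrt{a_i\gamma_i}+b_i+C_i]=[2,1+C]$ of \hyperlink{P_1}{$(\text{P}_i^1)$} is empty and $N_1^+=\emptyset$ holds automatically, so the conclusion stands.

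The paper takes a genuinely different route. It uses an \emph{asymmetric} two-firm family $G_M$ with $a_1=\gamma_1=C_1=1$, $a_2=\gamma_2=C_2=M$, $b_1=b_2=0$, $M\ge 1$. It bounds $\OPT(G_M)\le 2$ exactly as you do, and then, rather than computing the equilibrium, argues from Lemma~\ref{lemma:BR2} and Theorem~\ref{theo:BR1} alone that for any PNE the cost $K$ must be feasible for firm~2's problem \hyperlink{P_2}{$(\text{P}_2^2)$}, whence $K>\sqrt{a_2\gamma_2}+b_2+C_2=2M$; a three-line estimate then gives $C(\PNE(G_M))\ge K-M>M$. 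So the paper never invokes uniqueness or Lemma~\ref{lemma:unique1}; it extracts a single inequality from the best-response characterization. Your symmetric construction trades that economy for an explicit closed-form equilibrium ($K=C+2/C$, $z_i=C/4$, social cost $2/C+C/2$), leaning on Theorem~\ref{theo:uniqueness} to force symmetry and on statement~\ref{prop4} of Lemma~\ref{lemma:unique1} to pin down $K$. Both proofs are short; yours makes the mechanism (a small cap forces tiny capacities and hence large congestion) more visible, while the paper's needs less of the uniqueness machinery.
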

\begin{proof}
Consider the capacity and price competition game $G_M$ with 
$
n=2 \text{ and } a_1=\gamma_1=C_1=~1, b_1=0 \text{ and } a_2=\gamma_2=C_2=M, b_2=0, \text{ where } M\geq 1.
$ 
By $z_1=1,p_1=z_2=p_2=0$, we get a profile with social cost 2, thus OPT$(G_M)\leq 2$. 
We will now show that $C(\PNE(G_M))>M$ holds, which implies
\[
\PoA=\PoS\geq \frac{C(\PNE(G_M))}{\OPT(G_M)}>\frac{M}{2}.
\]
By $M\rightarrow \infty$, this yields the desired result. 

It remains to show $C(\PNE(G_M))>M$.
For fixed $M\geq 1$, let $s=(z,p)$ be a PNE of $G_M$ with induced Wardrop flow $x:=x(s)$ and cost $K:=K(s)$. 
Note that $z_i>0$ holds for $i\in \{1,2\}$, since any PNE has at least two positive capacities. Lemma~\ref{lemma:BR2} together with Theorem~\ref{theo:BR1} yields that, for each firm $i\in \{1,2\}$, the cost $K$ either is optimal for  \hyperlink{P_1}{$(\text{P}_i^1)(s_{-i})$}, or optimal for \hyperlink{P_2}{$(\text{P}_i^2)(s_{-i})$}, and has positive objective function value in both cases. 
Since for each firm $i\in \{1,2\}$, the only candidate for a feasible solution of \hyperlink{P_1}{$(\text{P}_i^1)(s_{-i})$} is $2\sqrt{a_i\gamma_i}+b_i=\sqrt{a_i\gamma_i}+b_i+C_i$ and this yields an objective function value of 0, we get that $K$ is an optimal solution of \hyperlink{P_2}{$(\text{P}_i^2)(s_{-i})$}. In particular this yields, by considering firm 2, that $2M=\sqrt{a_2\gamma_2}+b_2+C_2<K$. Furthermore we get $z_1=\frac{x_1}{K-1}$ and $z_2=\frac{M(1-x_1)}{K-M}$ for the capacities of the two firms. 
Altogether, the desired inequality for $C(s)=C(\PNE(G_M))$ follows:
\begin{align*}
C(\PNE(G_M)) &= \frac{1}{z_1}x_1^2+\frac{M}{z_2}(1-x_1)^2+z_1+Mz_2 
 > (K-1)x_1+(K-M)(1-x_1) \\ 
&=Kx_1-x_1+K -Kx_1-M+Mx_1 \geq K-M> M. 
\end{align*}
 \end{proof}

\section*{Acknowledgements}
This research was funded by the Deutsche Forschungsgemeinschaft (DFG, German Research Foundation) - HA 8041/1-1.

A one-page abstract of this paper appeared in the proceedings of the \emph{15th International Conference on Web and Internet Economics} (\cite{HarksSchedel19}).

\bibliographystyle{abbrv}
\bibliography{master-bib} 

\end{document}